\newtheorem{theorem}{Theorem}
\newtheorem{lemma}{Lemma}
\newtheorem{assumption}{Assumption}
\theoremstyle{definition}
\newtheorem{remark}{Remark}
\crefname{theorem}{theorem}{theorems}
\crefname{assumption}{assumption}{assumptions}
\crefname{lemma}{lemma}{lemmas}
\crefname{proposition}{proposition}{propositions}
\crefname{corollary}{corollary}{corollaries}
\Crefname{theorem}{Theorem}{Theorems}
\Crefname{assumption}{Assumption}{Assumptions}
\Crefname{lemma}{Lemma}{Lemmas}
\Crefname{proposition}{Proposition}{Propositions}
\Crefname{corollary}{Corollary}{Corollaries}
\DeclareMathOperator{\var}{var}
\DeclareMathOperator{\pr}{pr}
\newcommand{\trans}{^{\mathrm{\scriptscriptstyle T}}} 
\def\indep{\bot\!\!\!\bot}
\def\wh{\widehat}
\def\wt{\widetilde}
\newcommand{\amin}{\operatornamewithlimits{arg\,min}}
\newcommand{\convd}{\xrightarrow{d}}
\newcommand{\convp}{\xrightarrow{p}}
\def\beqr{\begin{eqnarray}}
\def\eeqr{\end{eqnarray}}
\def\beqrs{\begin{eqnarray*}}
\def\eeqrs{\end{eqnarray*}}
\def\mR{\mathbb{R}}
\def\mN{\mathbb{N}}
\def\mE{\mathbb{E}}
\def\calE{\mathcal{E}}
\def\calL{\mathcal{L}}
\def\calM{\mathcal{M}}
\def\O{\bm{{O}}}
\def\calT{\mathcal{T}}
\def\calV{\mathcal{V}}
\def\calX{\mathcal{X}}
\def\calZ{\mathcal{Z}}
\def\a{{\bm a}}
\def\b{{\bm b}}
\def\d{{\bm d}}
\def\e{{\bm e}}
\def\g{{\bm g}}
\def\m{{\bm m}}
\def\u{{\bm u}}
\def\v{{\bm v}}
\def\w{{\bm w}}
\def\p{{\bm p}}
\def\t{{\bm t}}
\def\x{{\bm x}}
\def\1{{\bm 1}}
\def\0{{\bm 0}}
\def\A{{\bm A}}
\def\B{{\bm B}}
\def\D{{\bm D}}
\def\G{{\bm G}}
\def\I{{\bm I}}
\def\M{{\bm M}}
\def\Q{{\bm Q}}
\def\V{{\bm V}}
\def\bb{\mbox{\boldmath$\beta$}}
\newcommand{\bigbc}[1]{\left\{#1\right\}}
\def\bb{\mbox{\boldmath$\beta$}}
\def\plugin{\textnormal{plug-in}}
\def\eff{\text{\textit{eff}}}
\newcommand*{\ind}{%
\mathbin{%
\mathpalette{\@ind}{}%
}%
}
\newcommand*{\nind}{%
\mathbin{
\mathpalette{\@ind}{\not}
}%
}
\newcommand*{\@ind}[2]{%
\sbox0{$#1\perp\m@th$}
\sbox2{$#1=$}
\sbox4{$#1\vcenter{}$}
\rlap{\copy0}
\dimen@=\dimexpr\ht2-\ht4-.2pt\relax
\kern\dimen@
{#2}%
\kern\dimen@
\copy0 
}
\begin{document}

\title{Efficient Estimation of Average Treatment Effects with Unmeasured Confounding and Proxies}

\author[1]{Chunrong Ai}
\author[2]{Jiawei Shan \thanks{E-mail: \texttt{jiawei.shan@wisc.edu}}}
\affil[1]{School of Management and Economics, The Chinese University of Hong Kong, Shenzhen}
\affil[2]{Department of Biostatistics \& Medical Informatics, University of Wisconsin-Madison}

\date{\today}

\maketitle
\thispagestyle{empty}

\begin{abstract}
Proximal causal inference provides a framework for estimating the average treatment effect (ATE) in the presence of unmeasured confounding by leveraging outcome and treatment proxies. Identification in this framework relies on the existence of a so-called bridge function. Standard approaches typically postulate a parametric specification for the bridge function, which is estimated in a first step and then plugged into an ATE estimator. However, this sequential procedure suffers from two potential sources of efficiency loss: (i) the difficulty of efficiently estimating a bridge function defined by an integral equation, and (ii) the failure to account for the correlation between the estimation steps. To overcome these limitations, we propose a novel approach that approximates the integral equation with increasing moment restrictions and jointly estimates the bridge function and the ATE. We show that, under suitable conditions, our estimator is efficient. Additionally, we provide a data-driven procedure for selecting the tuning parameter (i.e., the number of moments). Simulation studies reveal that the proposed method performs well in finite samples, and an application to the right heart catheterization dataset from the SUPPORT study demonstrates its practical value.
\end{abstract}
{\bf Key Words:} Data-driven method; generalized method of moments; proximal causal inference; semiparametric efficiency.

\newpage

\section{Introduction}
The unconfoundedness assumption is a key condition for identifying treatment effect parameters and establishing the consistency of many popular estimators. This condition may not hold if some confounders are unmeasured and omitted from the empirical analysis.  One approach to address the omitted variables problem is proximal causal inference, which assumes the availability of outcome and treatment confounding proxies \citep{MiaoGengTchetgenTchetgen2018, TchetgenTchetgenYingCuiShiEtAl2024}. Such an approach has seen a wide range of applications \citep{ShiMiaoNelsonTchetgenTchetgen2020, KallusMaoUehara2022,DukesShpitserTchetgenTchetgen2023, YingMiaoShiTchetgenTchetgen2023,EgamiTchetgenTchetgen2024, GhassamiYangShpitserTchetgenTchetgen2024,QiMiaoZhang2024,QiuShiMiaoDobribanEtAl2024,Ying2024}. 

With the outcome and treatment confounding proxies, \cite{MiaoGengTchetgenTchetgen2018} showed that an \emph{outcome bridge function}, analogous to the outcome regression function one would use if all confounders were observed, identifies the average treatment effect. A common approach is to parameterize the bridge function, estimate it using standard methods, and then estimate the average treatment effect (ATE) via a plug-in. For instance, \cite{MiaoShiLiTchetgenTchetgen2024} suggested a recursive generalized method of moments (RGMM) by transforming the integral equation to fixed-dimensional, user-specified moment restrictions. \label{resp:R1-Q1-1} \cite{TchetgenTchetgenYingCuiShiEtAl2024} introduced a proximal g-computation method, which results in a simple proximal two-stage least squares procedure in the special case of linear working models. The proximal g-computation method is typically more efficient than RGMM, benefiting from an additional parametric restriction on the joint distribution of covariates, but is also more prone to bias if the distribution model is misspecified. These methods suffer from efficiency losses because estimating the bridge function may be inefficient, and the sequential procedure may fail to utilize all information \citep{BrownNewey1998,AiChen2012}. As a remedy, \cite{CuiPuShiMiaoEtAl2023} proposed a doubly robust (DR) locally efficient approach that incorporates an additional treatment bridge function. They proved that their proximal DR estimator achieves the semiparametric local efficiency bound if both bridge functions are correctly specified and consistently estimated, even if they are not efficiently estimated. If one of the bridge functions is misspecified, the DR estimator fails to achieve local efficiency, and it could benefit from more efficient estimation of the bridge functions. 
Moreover, the DR estimator may not be the most efficient, even if both bridge functions are consistently estimated, as illustrated in \Cref{plt:comparison}.
 
This paper's primary contribution is to develop a simple, data-driven method for efficiently estimating the average treatment effect. The key step is to transform the conditional moment restriction that defines the bridge function into an expanding set of unconditional moment restrictions via a sieve basis, and to estimate the bridge function and the ATE jointly. \label{resp:R1-Q1-2} As the number of moments increases, these unconditional moment restrictions provide a good approximation for the unknown conditional moment restriction. 
We show that: (i) the proposed estimator for the bridge function achieves the semiparametric efficiency bound established in \cite{CuiPuShiMiaoEtAl2023}, without requiring any modeling of the data distribution beyond the bridge function;  (ii) the proposed estimator for ATE has an asymptotic variance that is never larger than that of the DR locally efficient estimator; and (iii) the proposed estimator outperforms the theoretically optimal plug-in estimator.
Moreover, the proposed method is readily implemented using widely available software for GMM estimation \citep{Hansen1982}. We also propose a data-driven procedure for selecting the number of moments.

The theoretical results of this paper are closely related to established theories in causal inference under the assumption of no unmeasured confounding and in the missing data literature under the missing at random assumption, which demonstrate that regression-based estimators are more efficient than doubly robust estimators when the outcome model is correctly specified \citep{ScharfsteinRotnitzkyRobins1999,BangRobins2005}. We extend this principle to the proximal causal inference.
Our work is also related to several strands of the existing literature. First, it connects to research on instrumental variable estimation and partial identification strategies \citep{AiChen2003,NeweyPowell2003,Abadie2003,KlineTamer2023}. Second, it is related to recent advances that incorporate high-dimensional and machine learning methods into proximal estimation frameworks \citep{MastouriZhuGultchinKorbaEtAl2021,KompaBellamyKolokotronesRobinsEtAl2022}. Third, it is linked to developments in causal graphical models for identifying effects in the presence of unmeasured confounding, including work on the front-door criterion \citep{Pearl1995,RichardsonEvansRobinsShpitser2023,GuoBenkeserNabi2023,BhattacharyaNabiShpitser2022,GuoNabi2024}, which also leverage proxy-like mediators to achieve nonparametric identification under structural assumptions. \label{res:R1-Q2}

The rest of the paper is organized as follows. 
\Cref{sec:setup} reviews the identification results of proximal causal inference and challenges in constructing an efficient estimator.
\Cref{sec:estimation} describes the proposed method for efficient estimation. 
\Cref{sec:asym_prop} establishes large-sample properties of the proposed estimators and compares them with the existing methods.
\Cref{sec:implementation} discusses the choice of tuning parameters for practical implementation.
\Cref{sec:numerical} provides a simulation study and applies our method to reanalyze the SUPPORT dataset.
\Cref{sec:discuss} offers a brief discussion.
Technical proofs and additional results are provided in the supplementary material.

\begin{figure}[h]
    \centering  
    \begin{subfigure}[b]{0.48\textwidth}
        \centering
        \includegraphics[width=\textwidth]{./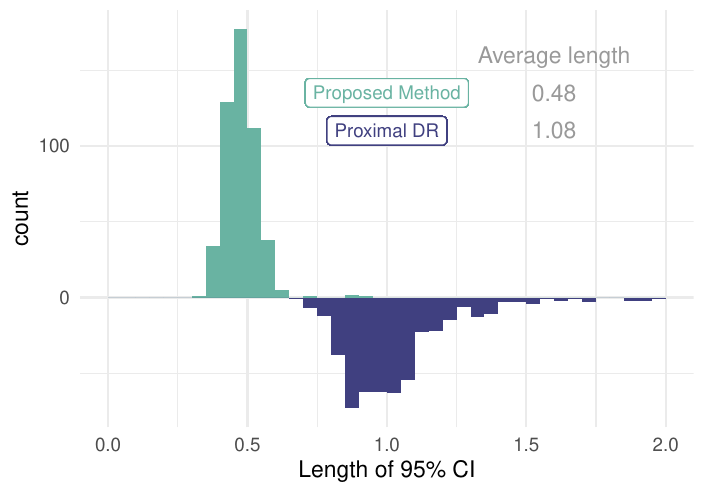}
    \end{subfigure}
    \hfill
    \begin{subfigure}[b]{0.48\textwidth}
        \centering
        \includegraphics[width=\textwidth]{./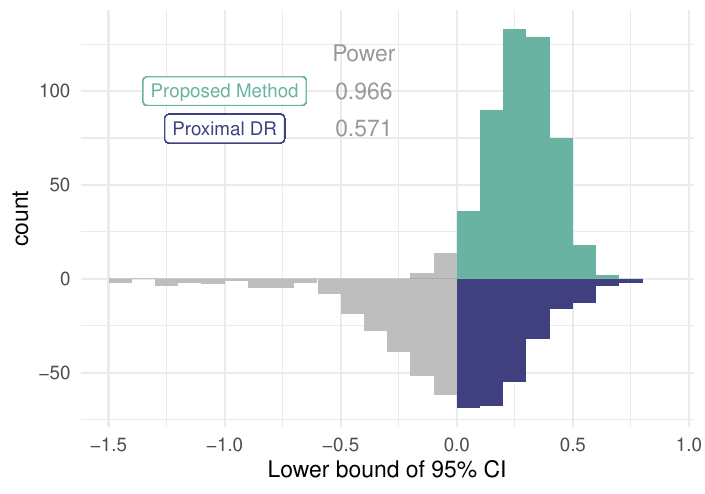}
    \end{subfigure}
    \caption{Comparison of two methods with a sample size of 400 over 500 replications. Data is generated following Scenario II in \Cref{ssec:simulation}, with a true causal effect of 0.5. The gray shading in the right panel indicates replicates that fail to reject the null hypothesis of no causal effect. }
    \label{plt:comparison}
\end{figure}

\section{Basic framework}\label{sec:setup}
Let $A\in\{0,1\}$ denote the treatment variable, and $Y$ denote the observed outcome. Let $Y(a)$ denote the potential outcome if treatment $A=a$ is assigned. The observed outcome is $Y=Y(A)$. Our objective is to estimate the average treatment effect, $\tau_0\triangleq \mE\{Y(1)-Y(0)\}$, in settings with unmeasured confounders. Let $U$ denote the unmeasured confounders that may affect both the treatment $A$ and outcome $Y$. Suppose there are three types of observable covariates $(X,W,Z)$:  $X$ affect both $A$ and $Y$; $W$ are outcome-inducing confounding proxies which are related to $A$ only through $(X, U)$; and $Z$ are treatment-inducing confounding proxies which are associated with $Y$ only through $(X, U)$. See \Cref{plt:DAG} for a causal directed acyclic graph (DAG).  We formalize the relationships among these variables in the following assumption. 

\begin{assumption}\label{ass:causal_and_proxy}
    We assume that:
    \begin{enumerate}[(i)] 
        \item (Latent exchangeability) $Y(a)\indep A\mid (U,X)$ for $a=0,1$;
        \item (Overlap) $0<c_1<\pr(A=a\mid U,X)<c_2<1$ for some constants $c_1$ and $c_2$;
        \item (Proxy variables) $Z\indep Y\mid A,U,X$, and $W\indep (A,Z) \mid U,X$;
        \item (Completeness) For any square-integrable function $g$ and any $a, x$, $\mE\{g(U)\mid Z,A=a,X=x\} = 0$ almost surely if and only if $g(U) = 0$ almost surely.
    \end{enumerate}
\end{assumption}

\begin{figure}[h]
    \centering
    \includegraphics[width=0.3\textwidth]{./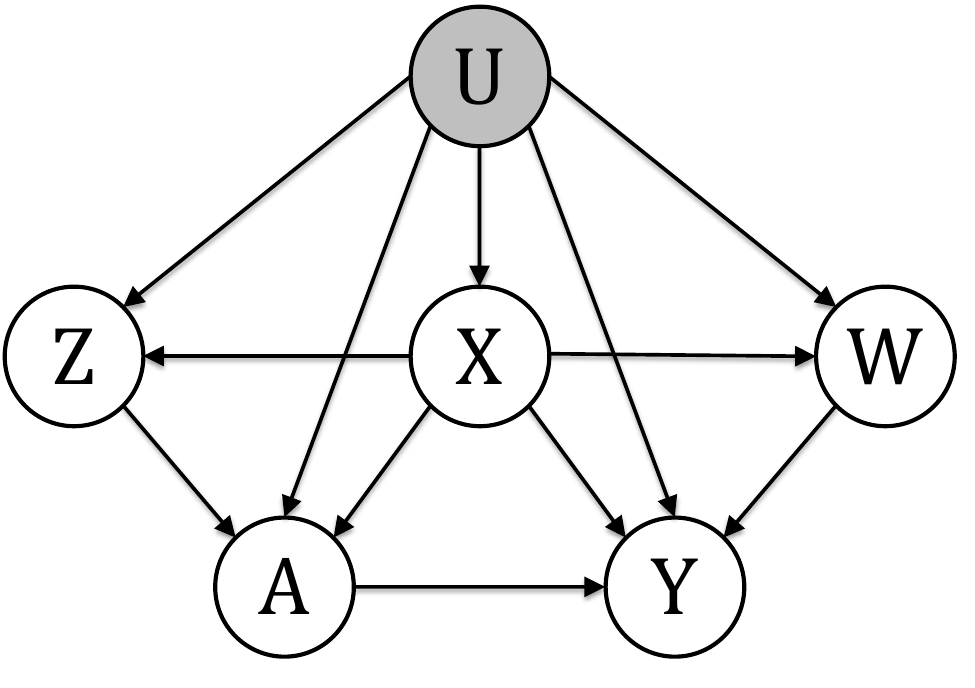}
    \caption{A causal DAG of proximal causal inference with unmeasured confounding and proxies.}
    \label{plt:DAG}
\end{figure}

\Cref{ass:causal_and_proxy}(i)-(ii) are standard in the literature of causal inference.
\Cref{ass:causal_and_proxy}(iii) describes the nature of the two types of proxies: treatment proxies $Z$ cannot affect the outcome $Y$, and the outcome proxies $W$ cannot be affected by either the treatment $A$ or the treatment proxies $Z$, upon conditioning on the confounders $X$ and $U$. These conditions are critical for identification but untestable as they involve conditional independence statements given the unmeasured variable $U$. Justifying these conditions and selecting proxy variables requires subject-matter knowledge; however, they can be satisfied in applications through careful study design. For instance, by incorporating post-outcome variables in $Z$ and pre-treatment measurements of the outcome in $W$, we can reasonably expect \Cref{ass:causal_and_proxy}(iii) to hold as the future cannot affect the past. \cite{MiaoShiLiTchetgenTchetgen2024} provided an example of evaluating the short-term effect of air pollution on elderly hospitalization using time series data, in which air pollution measurements taken after hospitalization are included in $Z$, and hospitalization measurements taken before air pollution are included in $W$.\label{res:R1-Q3}
\Cref{ass:causal_and_proxy}(iv) can be intuitively interpreted as a requirement for $Z$ to exhibit sufficient variability relative to that of the unmeasured confounder $U$. In the case of categorical $Z$ and $U$, it requires $Z$ with at least as many categories as $U$. In the continuous case, \cite{ChenChernozhukovLeeNewey2014} and \cite{Andrews2017} showed that if the dimension of $Z$ exceeds that of $U$, then under mild conditions, completeness holds generically, except for distributions in negligible sets.
\cite{TchetgenTchetgenYingCuiShiEtAl2024} also suggest measuring a rich set of baseline characteristics to make the completeness assumption more plausible; however, this benefit should be balanced against the efficiency loss incurred by including irrelevant proxies and the increased sample size required for estimating high-dimensional nuisance functions. Completeness is also shown to hold in many parametric and semiparametric models, such as exponential families \citep{NeweyPowell2003} and location-scale families \citep{HuShiu2018}. For a detailed discussion of completeness, see \cite{DHaultfoeuille2011}.
We also refer readers to \cite{TchetgenTchetgenYingCuiShiEtAl2024} for a detailed discussion of these assumptions, additional examples of proxies, and approaches for sensitivity analysis in practice.

Under \Cref{ass:causal_and_proxy}, \cite{MiaoGengTchetgenTchetgen2018} established the identification of ATE through an \emph{outcome-confounding bridge function} $h(w,a,x)$, which solves 
\begin{align} \label{eq:iden_bridge_h}
    \mE(Y\mid Z,A,X)=\mE\left\{h(W,A,X)\mid Z,A,X\right\}.
\end{align}
It is worth noting that the bridge function solves a Fredholm integral equation of the first kind, which is known to be an ill-posed inverse problem \citep{Kress1989}.
Then, ATE is identified by 
\begin{align}\label{eq:iden_ate}
    \tau_0 = \mE\left\{h(W,1,X)-h(W,0,X)\right\}.  
\end{align}
Equation \eqref{eq:iden_ate} is referred to as the \emph{proximal g-formula} in \cite{TchetgenTchetgenYingCuiShiEtAl2024}.

We consider the  estimation of $\tau_0$ using $N$ independent and identically distributed observations drawn from the joint distribution of $\O=(A,Y,W,X,Z)$. We follow \cite{TchetgenTchetgenYingCuiShiEtAl2024} by parameterizing the bridge function $h(W,A,X;\bm\gamma)$ with the parameter value $\bm\gamma_0\in\Gamma\subset\mR^p$. Then, one can estimate $\bm\gamma_0$ by solving
\begin{align}\label{eq:ee_gamma}
    \frac{1}{N}\sum_{i=1}^{N} \left\{Y_i-h(W_i,A_i,X_i;\bm\gamma)\right\}\m(Z_i,A_i,X_i)=\0,
\end{align}
where $\m(\cdot)$ is a user-specified vector-valued function, with dimension equal to that of $\bm\gamma$, and satisfies that $\mE\left\{\nabla_\gamma h(W,A,X;\bm\gamma_0)\m(Z,A,X)\trans\right\}$ is nonsingular. For instance, if $h(W,A,X;\bm\gamma)=(1,W,A,X)\bm\gamma$, one can choose $\m(Z,A,X)=(1,Z,A,X)\trans$. Under mild regularity conditions \citep{White1982}, $\wh{\bm\gamma}(\m)$ that solves \eqref{eq:ee_gamma} is consistent and asymptotically normally distributed for any $\m$. However, the choice of $\m$ may affect the efficiency, namely, the asymptotic variance of $\wh{\bm\gamma}(\m)$. Let $\nabla_\x\bm{\ell}(\x)=\partial\bm{\ell}(\x)\trans/\partial\x$ denote the gradient of a (vector-valued) function $\bm{\ell}$. \cite{CuiPuShiMiaoEtAl2023} showed that among all score functions of $\bm\gamma$, the efficient score is the one with \label{res:R1-Q6-1}
\begin{align*}
    \m_\eff(Z,A,X) = \frac{ \mE\{\nabla_{\bm\gamma} h(W,A,X;\bm\gamma_0)\mid Z,A,X\}}{ \mE[\{Y-h(W,A,X)\}^2 \mid Z,A,X]},
\end{align*}
that accounts for the potential heteroskedasticity. Unfortunately, such $\m_\eff$ is impractical because it requires modeling complex features of the observed data distribution, which are difficult to capture accurately. \cite{MiaoShiLiTchetgenTchetgen2024} suggested a recursive GMM approach, allowing the dimension of $\m$ to exceed that of $\bm\gamma$. 
While their approach may offer some efficiency gains for $\wh{\bm\gamma}(\m)$, the specific low-dimensional $\m$ does not necessarily lead to an efficient estimator for $\bm\gamma$, in contrast to the increasing moment conditions proposed in this paper, as elaborated later. \label{res:R1-Q6-2}
An inefficient $\wh{\bm\gamma}(\m)$ may lead to inefficiency of the plug-in estimator:
\begin{align*}
    \wh{\tau}^\plugin(\m) = \frac{1}{N}\sum_{i=1}^{N} \left\{h(W_i,1,X_i;\wh{\bm\gamma}(\m))-h(W_i,0,X_i;\wh{\bm\gamma}(\m))\right\}.
\end{align*}
Furthermore, the plug-in estimators, including $\wh{\tau}^\plugin(\m_\eff)$, fail to account for the correlation between the score functions of $\bm\gamma_0$ and $\tau_0$, thereby introducing an additional source of efficiency loss \citep{BrownNewey1998,AiChen2012}. 
 
\section{Estimation}\label{sec:estimation}
To improve existing approaches, we propose estimating $\bm\gamma_0$ and $\tau_0$ jointly and using an increasing number of moment restrictions. Specifically, let $\u_{K}(z,a,x) = \big(u_{K1}(z,a,x),\ldots, u_{KK}(z,a,x)\big)\trans$ denote a vector of known basis functions (such as power series, splines, Fourier series, etc.) with dimension $K\in\mN$, which provides approximation sieves that can approximate a large class of smooth functions arbitrarily well as $K\rightarrow\infty$. The selection of $\u_{K}(z,a,x)$ is discussed in \Cref{sec:implementation}. Model \eqref{eq:iden_bridge_h} implies the following unconditional moment restrictions of $\bm\gamma_0$:
\begin{align}\label{eq:iden_gamma_sieve}
   \mE\left[\left\{Y-h(W,A,X;\bm\gamma_0)\right\}\u_{K}(Z,A,X)\right]=\0.
\end{align}
Denote the joint score function
$$\g_K(\O;\bm\gamma,\tau)=\left(
    \begin{array}{c}
        \{Y-h(W,A,X;\bm\gamma)\}\u_K(Z,A,X)\\
        \tau-h(W,1,X;\bm\gamma)+h(W,0,X;\bm\gamma)
    \end{array}
    \right),$$
and $\G_K(\bm\gamma,\tau)= N^{-1}\sum_{i=1}^N \g_K(\O_i;\bm\gamma,\tau)$. Since $K$ increases with the sample size, the number of moment restrictions typically exceeds that of unknown parameters. So, we apply the GMM method for estimation. For a user-specified $(K+1)\times(K+1)$ positive definite matrix $\bm\Omega$, the GMM estimator of $(\bm\gamma,\tau)$ is given by 
 \begin{align}\label{eq:gmm_init}
 (\check{\bm\gamma},\check\tau) = \amin_{(\bm\gamma,\tau)\in\Gamma\times\calT} \G_K(\bm\gamma,\tau)\trans \bm\Omega \G_K(\bm\gamma,\tau).
 \end{align}
\cite{Hansen1982} showed that, with a fixed $K\ge p$, under some regularity conditions, $(\check{\bm\gamma},\check\tau)$ are consistent and asymptotically normally distributed but perform best only when $\bm\Omega$ is selected as the inverse of ${\bm\Upsilon}_{(K+1)\times(K+1)}= \mE\{\g_K(\O;\bm\gamma_0,\tau_0)\g_K(\O;\bm\gamma_0,\tau_0)\trans\}$.
We use the initial estimator $(\check{\bm\gamma},\check\tau)$ to obtain an estimator of ${\bm\Upsilon}_{(K+1)\times(K+1)}$: 
\begin{align*}
    \wh{\bm\Upsilon}_{(K+1)\times(K+1)}= \frac{1}{N} \sum_{i=1}^{N} \g_K(\O_i;\check{\bm\gamma},\check\tau)\g_K(\O_i;\check{\bm\gamma},\check\tau)\trans.
\end{align*}
We then obtain the optimal GMM estimator
\begin{align}\label{eq:opt_GMM_est}
    (\wh{\bm\gamma},\wh\tau) = \amin_{(\bm\gamma,\tau)\in\Gamma\times\calT} \G_K(\bm\gamma,\tau)\trans \wh{\bm\Upsilon}_{(K+1)\times(K+1)}^{-1} \G_K(\bm\gamma,\tau).
\end{align}   
With a fixed $K$, \cite{Hansen1982} showed that under regularity conditions,
\begin{align}\label{eq:GMM_normal}
\V_K^{-1/2}\left( 
    \begin{array}{c}
        \sqrt{N}(\wh{\bm\gamma}-\bm\gamma_0)\\\sqrt{N}(\wh\tau-\tau_0)
    \end{array}
 \right) \convd N\left(\0,\I_{(p+1)\times(p+1)}\right),
\end{align}
where $\V_K = \{\B_{(K+1)\times(p+1)}\trans {\bm\Upsilon}_{(K+1)\times(K+1)}^{-1} \B_{(K+1)\times(p+1)}\}^{-1}$ and
\begin{align*}
\B_{(K+1)\times(p+1)} 
\triangleq  \mE\left\{\nabla_{\bm\gamma,\tau}\G_K(\bm\gamma_0,\tau_0)\trans\right\}
=  \left(
    \begin{array}{cl}
        -\mE\{\u_K(Z,A,X)\nabla_{\bm\gamma} h(W,A,X;\bm\gamma_0)\trans\}  & \0\\
        -\mE\left\{\frac{(-1)^{1-A}}{f(A\mid W,X)} \nabla_{\bm\gamma} h(W,A,X;\bm\gamma_0)\trans\right\}  & 1
    \end{array}
\right).
\end{align*}

\begin{remark}
    When $K$ is fixed, the optimal GMM estimator in \eqref{eq:opt_GMM_est} is generally not semiparametrically efficient. In our method, we allow $K$ to increase slowly with the sample size, such that $\{\u_K(z,a,x)\}$ spans the space of measurable functions and provides a good approximation of $\m_{\eff}(z,a,x)$, thereby achieving full efficiency.
\end{remark}

\begin{remark}\label{remark:moment_sel}
    Numerical instability may arise if the combined moment conditions \eqref{eq:iden_gamma_sieve} exhibit near-linear dependence, causing the asymptotic covariance matrix to be singular. When this occurs, some linear combinations of the estimating equations contain negligible independent information about the data-generating process. Therefore, discarding these combinations incurs minimal information loss. Specifically, we recommend a two-step regularization procedure. First, we orthonormalize the basis functions $\u_K$ so that their empirical second-moment matrix is the identity, i.e., $N^{-1}\sum_{i=1}^N\u_{K}(Z_i,A_i,X_i)\u_{K}(Z_i,A_i,X_i)\trans=\I_K$. Second, we apply GMM to all estimating equations using the identity weighting (i.e., setting $\bm\Omega=\I_{K+1}$ in \eqref{eq:gmm_init}) to obtain initial estimators $(\check{\bm\gamma},\check\tau)$. We apply spectral decomposition to the estimated moment covariance matrix $\wh{\mathbf{\Upsilon}}_{(K+1)\times(K+1)}= \Q \mathbf{\Lambda} \Q\trans$. With $\Q_{K_1}$ as eigenvectors corresponding to the $K_1$ eigenvalues larger than a small threshold $c$, we apply GMM to the linear combination of the estimating equations $\Q_{K_1}\trans \g_K$, using the optimal weight matrix $\mathbf{\Lambda}_{K_1}^{-1}$. Here,  $\mathbf{\Lambda}_{K_1}$ is the diagonal matrix composed of the first $K_{1}$  eigenvalues. This procedure effectively filters out highly irrelevant moment conditions and ensures that the weighting matrix is invertible.
\end{remark}

\begin{remark}\label{remak:stagewise}
      The plug-in estimator $\wh{\tau}^\plugin$ can be interpreted as a GMM estimator that utilizes a block-diagonal weighting matrix. From this perspective, the efficiency loss arises from the suboptimality of this weighting scheme, which fails to exploit the covariance among the estimation equations. In a broader context, this phenomenon mirrors the distinction between simple two-stage estimators (e.g., 2SLS) and joint estimation methods such as efficient GMM or limited-information maximum likelihood (LIML), in which joint estimation achieves greater efficiency by accounting for the correlation structure across stages.
\end{remark}

\section{Large-sample properties}\label{sec:asym_prop}

In this section, we establish the large-sample properties of the proposed estimator as the number of moment restrictions increases. We impose the following assumptions.

\begin{assumption}\label{ass:sieve}
    (i) The eigenvalues of $\mE\{\u_{K}(Z,A,X)\u_{K}(Z,A,X)\trans\}$ are bounded and bounded away from zero for all $K$; 
    (ii) For any $p$-smooth function $h(z,a,x)$ defined in \Cref{sec:app_regularity_conditions}, there is $\bb_h\in\mR^{K}$ such that  $\sup_{(z,a,x)\in\calZ\times\{0,1\}\times\calX}  |h(z,a,x)-\u_K(z,a,x)\trans\bb_h| = O(K^{-\alpha})$ with $\alpha>0$; 
     (iii) $\zeta(K)^2K/N=o(1)$, where $\zeta(K)\triangleq \sup_{(z,a,x)\in\calZ\times\{0,1\}\times\calX} \|\u_K(z,a,x)\|$.
\end{assumption}
\Cref{ass:sieve}(i) rules out the degeneracy of moment restrictions. 
As indicated in \Cref{remark:moment_sel}, we recommend orthonormalizing the basis functions such that the empirical second-moment matrix is the identity.
 \Cref{ass:sieve}(ii)  requires sieve approximation error rates for the $p$-smooth function class, which have been well studied in the mathematical literature on approximation theory. For instance, suppose $\calX$ and $\calZ$ are compact subsets in $\mR^{d_x}$ and $\mR^{d_z}$, respectively, and $\u_K$ is a tensor product of polynomials or B-splines as introduced in \Cref{sec:implementation}, \citet[p. 5537]{Chen2007Handbook} showed that \Cref{ass:sieve}(ii) holds with $\alpha=p/(d_x+d_z)$. \Cref{ass:sieve}(iii) restricts the number of moments to ensure the convergence and asymptotic normality of the proposed estimator. \cite{Newey1997} showed that if $\u_K$ is a power series, then $\zeta(K)=O(K)$, and it requires $K=o(N^{1/3})$. If $\u_K$ is a B-spline, then $\zeta(K)=O(\sqrt{K})$ and $K=o(\sqrt{N})$. These conditions guide the choice of $\u_K$, which is discussed in \Cref{sec:implementation}.\label{res:R1-Q14}
The asymptotic distribution of $\wh{\bm\gamma}$ and $\wh\tau$ is formally established in the following theorem.

\begin{theorem}\label{thm:asym_normal}
    Suppose \Cref{ass:causal_and_proxy,ass:sieve}, and regularity conditions in \Cref{sec:app_regularity_conditions} hold.
        We have $\sqrt{N}(\wh{\bm\gamma}-\bm\gamma_0)\convd N(\0,\V_{\bm\gamma})$ and $\sqrt{N}(\wh\tau-\tau_0)\convd N(0,V_{\tau})$ with $\V_{\bm\gamma}=\mE\{\bm\psi_1(\O)\bm\psi_1(\O)\trans\}$, $V_{\tau}=\mE\{\psi_2(\O)^2\}$, and 
            \begin{align*}
                & \bm\psi_1(\O) = 
                \left[\mE\left\{\frac{\partial h(W,A,X;\bm\gamma_0)\m_\eff(Z,A,X)}{\partial\bm\gamma\trans }\right\}\right]^{-1} 
                \{Y-h(W,A,X)\}\m_\eff(Z,A,X),\\ 
                & \psi_2(\O) = h(W,1,X)-h(W,0,X)-\tau_0+t(Z,A,X)\{Y-h(W,A,X)\},\\
                & t(Z,A,X) = \bm\kappa\trans\m_\eff(Z,A,X) - R(Z,A,X),\\ 
                & R(Z,A,X)= \frac{ \mE[\{Y-h(W,A,X)\}\{ h(W,1,X)-h(W,0,X)-\tau_0\}\mid Z,A,X]}{ \mE[\{Y-h(W,A,X)\}^2 \mid Z,A,X]},\\
                & \bm\kappa = \V_{\bm\gamma}\mE\left[\left\{R(Z,A,X)+\frac{(-1)^{1-A}}{f(A\mid W,X)}\right\}\nabla_{\bm\gamma} h(W,A,X;\bm\gamma_0)\right].
            \end{align*}
\end{theorem}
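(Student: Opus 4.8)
The plan is to read \eqref{eq:opt_GMM_est} as an efficient GMM problem with a number of moments $K+1$ that grows with $N$, and to prove the result in three stages: consistency of $(\wh{\bm\gamma},\wh\tau)$, an asymptotically linear expansion, and an exact evaluation of the limiting sandwich form in \eqref{eq:GMM_normal} as $K\to\infty$. Throughout I write $\varepsilon=Y-h(W,A,X;\bm\gamma_0)$, $\delta=h(W,1,X;\bm\gamma_0)-h(W,0,X;\bm\gamma_0)-\tau_0$, and $\sigma^2(Z,A,X)=\mE(\varepsilon^2\mid Z,A,X)$, so that the bridge equation \eqref{eq:iden_bridge_h} gives $\mE(\varepsilon\mid Z,A,X)=0$ and $\g_K$ is mean-zero at the truth with second block equal to $-\delta$. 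After establishing consistency from the GMM criterion and the identification built into \eqref{eq:iden_gamma_sieve}--\eqref{eq:iden_ate}, differentiating the first-order conditions and expanding gives
\begin{align*}
\sqrt{N}\begin{pmatrix}\wh{\bm\gamma}-\bm\gamma_0\\\wh\tau-\tau_0\end{pmatrix}=-\,\V_K\,\B_K\trans\bm\Upsilon^{-1}\,\sqrt{N}\,\G_K(\bm\gamma_0,\tau_0)+o_p(1),
\end{align*}
with $\V_K=(\B_K\trans\bm\Upsilon^{-1}\B_K)^{-1}$; the task is then to pass to the limit on the right-hand side.

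The technical engine is a projection identity: for square-integrable $a,b$ with $\bar a=\mE(a\mid Z,A,X)$ and $\bar b=\mE(b\mid Z,A,X)$,
\begin{align*}
\mE\{\bar a\,\u_K\}\trans\big[\mE\{\sigma^2\u_K\u_K\trans\}\big]^{-1}\mE\{\bar b\,\u_K\}\;\longrightarrow\;\mE\big(\bar a\,\bar b/\sigma^2\big),
\end{align*}
since the left side is the inner product of the $L^2(\sigma^2)$-projections of $\bar a/\sigma^2$ and $\bar b/\sigma^2$ onto $\mathrm{span}(\u_K)$, which converges once the sieve is dense, as guaranteed by \Cref{ass:sieve}(i)--(ii). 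Partitioning $\bm\Upsilon$ and $\B_K$ into the $K$ bridge moments and the single ATE moment and applying this identity blockwise with $\bar a\in\{\,\mE(\nabla_{\bm\gamma}h\mid Z,A,X),\,\sigma^2R\,\}$ yields
\begin{align*}
\B_K\trans\bm\Upsilon^{-1}\B_K\;\longrightarrow\;\begin{pmatrix}M+s^{-1}\bm\eta\bm\eta\trans & -s^{-1}\bm\eta\\ -s^{-1}\bm\eta\trans & s^{-1}\end{pmatrix},
\end{align*}
where $M=\mE\{\m_\eff\,\nabla_{\bm\gamma}h\trans\}$, $\bm\eta=\mE[\{R+(-1)^{1-A}/f(A\mid W,X)\}\nabla_{\bm\gamma}h]$, and $s=\mE(\delta^2)-\mE(\sigma^2R^2)$. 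The Schur complement of the bottom-right entry is exactly $M$, so the $\bm\gamma$-block decouples and $\V_{\bm\gamma}=M^{-1}=[\mE\{\m_\eff\nabla_{\bm\gamma}h\trans\}]^{-1}$.

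For $\tau$, the Schur complement of the top-left block, after a Sherman--Morrison step in which the rank-one term $s^{-1}\bm\eta\bm\eta\trans$ cancels, gives $V_\tau=s+\bm\eta\trans M^{-1}\bm\eta$; with $\bm\kappa=\V_{\bm\gamma}\bm\eta$ this equals $\mE(\delta^2)-\mE(\sigma^2R^2)+\bm\kappa\trans M\bm\kappa$, which one checks equals $\mE(\psi_2^2)$ for $\psi_2=\delta+t\varepsilon$ and $t=\bm\kappa\trans\m_\eff-R$, because $2\mE(\delta t\varepsilon)$ and $\mE(t^2\varepsilon^2)$ are both affine in $\bm\kappa\trans\mE(R\nabla_{\bm\gamma}h)$ and $\mE(\sigma^2R^2)$ and these telescope. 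Feeding the same projection identity into $\B_K\trans\bm\Upsilon^{-1}\g_K$ and pairing with the limiting inverse information matrix identifies the influence functions themselves: the $\bm\gamma$-component reduces to $-\m_\eff\varepsilon+s^{-1}\bm\eta(\delta-R\varepsilon)$ and the $\tau$-component to $s^{-1}(R\varepsilon-\delta)$, and when these are combined through $\V_{\bm\gamma}$ and $\bm\kappa$ the $(\delta-R\varepsilon)$ contributions cancel, leaving exactly $\bm\psi_1=\V_{\bm\gamma}\m_\eff\varepsilon$ and $\psi_2=\delta+(\bm\kappa\trans\m_\eff-R)\varepsilon$.

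I expect the main obstacle to be the expansion stage rather than the algebra. Two sources of error must be controlled simultaneously as $K\to\infty$: the replacement of $\wh{\bm\Upsilon}$ by $\bm\Upsilon$ inside the $(p+1)$-dimensional contraction $\B_K\trans\bm\Upsilon^{-1}$, which requires an operator-norm bound on $\wh{\bm\Upsilon}-\bm\Upsilon$ together with a lower bound on the smallest eigenvalue of $\bm\Upsilon$, and the quadratic Taylor remainder from $h(\cdot;\bm\gamma)$ together with the $O(K^{-\alpha})$ sieve bias of \Cref{ass:sieve}(ii); both are shown negligible precisely under $\zeta(K)^2K/N=o(1)$ of \Cref{ass:sieve}(iii). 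Once these are dispatched, asymptotic normality follows from a triangular-array Lyapunov central limit theorem applied to the projected sum, since premultiplication by $\B_K\trans\bm\Upsilon^{-1}$ collapses the growing-dimensional $\sqrt{N}\,\G_K$ onto a fixed $(p+1)$-dimensional space with a stable limiting covariance.
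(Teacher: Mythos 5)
Your proposal is correct and takes essentially the same route as the paper's own proof: consistency plus a $\sqrt{N}$ GMM expansion with growing $K$ (the paper's Lemmas \ref{lemma:app_ini_consistency}--\ref{lemma:app_root_n_Vk}, which handle $\|\wh{\bm\Upsilon}_{(K+1)\times(K+1)}-{\bm\Upsilon}_{(K+1)\times(K+1)}\|=o_p(1)$ and use Eicker's triangular-array CLT where you invoke a Lyapunov CLT), followed by computing the limit of $\V_K$ via exactly your projection identity and the same rank-one/Schur-complement block algebra (the paper runs the explicit $2\times2$ block-inversion formula twice), ending with the same telescoping check that $s+\bm\eta\trans M^{-1}\bm\eta=\mE\{\psi_2(\O)^2\}$. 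Your explicit derivation of the influence-function components and the cancellation of the $(\delta-R\varepsilon)$ terms goes a step beyond the paper, which only matches the limiting variance blocks, but every algebraic claim you make checks out against the paper's computations.
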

\Cref{thm:asym_normal} demonstrates that $(\wh{\bm\gamma},\wh\tau)$ remains $\sqrt{N}$-consistent and asymptotically normally distributed when the number of moments increases slowly with the sample size. It follows intermediate lemmas in \Cref*{sec:app_intermediate_result} of the supplementary material. Specifically, we show that the initial estimators $(\check{\bm\gamma},\check\tau)$ obtained from \eqref{eq:gmm_init} are $\sqrt{N}$-consistent. We then demonstrate that \eqref{eq:GMM_normal} still holds as $K\to\infty$ slowly. As a result, we obtain \Cref{thm:asym_normal} by calculating the limit of $\V_K$ in \eqref{eq:GMM_normal}.
Notably, $\V_{\bm\gamma}$ is the semiparametric efficiency bound of $\bm\gamma_0$, derived in Theorem E.2 of \cite{CuiPuShiMiaoEtAl2023}. Thus, the proposed estimator of $\bm\gamma_0$ is efficient. To see the efficiency of $\wh\tau$, we consider the semiparametric local efficiency bound of $\tau_0$ derived in \citet[Theorem 3.1]{CuiPuShiMiaoEtAl2023}, which requires the following assumption.

\begin{assumption}\label{ass:proxy_cui}
    We assume that:
    \begin{enumerate}[(i)]
        \item
         For any square-integrable function $g$ and for any $a, x$, $\mE\{g(U)\mid W,A=a,X=x\} = 0$ almost surely if and only if $g(U) = 0$ almost surely.
         \item There exists a treatment-confounding bridge function $q(z,a,x)$ that satisfies
         \begin{align}\label{eq:iden_bridge_q}
             \mE\{q(Z,A,X)\mid W,A,X\}=\frac{1}{f(A \mid W, X)}.
         \end{align} 
        \item The conditional expectation mappings $T(g)\equiv E\{g(W,A,X)\mid Z,A,X\}$ and $T'(g)\equiv E\{g(Z,A,X)\mid W,A,X\}$ are surjective.
    \end{enumerate}
\end{assumption}
\Cref{ass:proxy_cui}(i)-(ii) permits an alternative identification formula of ATE, given by $\tau=\mE\{(-1)^{1-A}\allowbreak q(Z,A,X)Y\}$, using the treatment-confounding bridge function $q$ instead of $h$ in the proximal g-formula \eqref{eq:iden_ate} as established in \cite{CuiPuShiMiaoEtAl2023}. \Cref{ass:proxy_cui}(iii) ensures that $h$ and $q$ are uniquely identified by the integral equations \eqref{eq:iden_bridge_h} and \eqref{eq:iden_bridge_q}, respectively.
Combining these two identification results, \cite{CuiPuShiMiaoEtAl2023} showed that the efficient influence function of $\tau_0$, under the semiparametric model $\mathcal{M}_{sp}$, which does not restrict the observed data distribution other than the existence of a bridge function $h$ that solves \eqref{eq:iden_bridge_h}, evaluated at the submodel where \Cref{ass:proxy_cui} holds, is 
\begin{align}\label{eq:eif}
   \psi_{\eff}(\O) = h(W,1,X)-h(W,0,X)-\tau_0+(-1)^{1-A}q(Z,A,X)\{Y-h(W,A,X)\}.
\end{align}
Therefore, the semiparametric local efficiency bound of $\tau_0$ under $\calM_{sp}$ is $V_{\tau,\eff}=\mE\{\psi_{\eff}(\O)^2\}$.

\begin{theorem}\label{coro:super_eff}
    (i) $V_{\tau}\le V_{\tau,\eff}$,
    and the equality holds if and only if there is a vector of constants $\bm\alpha$ such that $R(Z,A,X)+(-1)^{1-A}q(Z,A,X)  = \bm\alpha\trans\m_\eff(Z,A,X)$; 
    (ii) $V_{\tau}$ is the semiparametric local efficiency bound of $\tau_0$ under the submodel $\calM_{sub}$ where $h(w,a,x;\bm\gamma)$ is correctly specified and uniquely determined by \eqref{eq:iden_bridge_h}.
\end{theorem}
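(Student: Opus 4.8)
The plan is to exploit the common algebraic structure of the two influence functions. Writing $\epsilon\triangleq Y-h(W,A,X;\bm\gamma_0)$, $\sigma^2(Z,A,X)\triangleq\mE(\epsilon^2\mid Z,A,X)$, and $D\triangleq h(W,1,X;\bm\gamma_0)-h(W,0,X;\bm\gamma_0)-\tau_0$, both $\psi_2$ and $\psi_{\eff}$ have the form $D+c(Z,A,X)\epsilon$ and differ only in the coefficient $c$. The starting point is the variance identity, valid for any $c=c(Z,A,X)$,
\begin{align*}
\mE\{(D+c\epsilon)^2\}=\mE(D^2)-\mE(\sigma^2R^2)+\mE\{\sigma^2(c+R)^2\},
\end{align*}
which I would obtain from $\mE(\epsilon\mid Z,A,X)=0$, the tower property, and the defining relation $\mE(D\epsilon\mid Z,A,X)=\sigma^2R$, which is exactly the numerator/denominator structure of $R$. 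For $\psi_{\eff}$ the coefficient is $c=(-1)^{1-A}q$, so $c+R=\phi$ with $\phi\triangleq(-1)^{1-A}q+R$; for $\psi_2$ the coefficient is $c=\bm\kappa\trans\m_\eff-R$, so $c+R=\bm\kappa\trans\m_\eff$.

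For part (i) I would show that $\bm\kappa\trans\m_\eff$ is precisely the $L^2(\sigma^2)$-projection of $\phi$ onto the span of the coordinates of $\m_\eff$. First I would establish the two algebraic facts $\V_{\bm\gamma}=\{\mE(\sigma^2\m_\eff\m_\eff\trans)\}^{-1}$ (using $\m_\eff=\mE(\nabla_{\bm\gamma}h\mid Z,A,X)/\sigma^2$ and iterated expectations) and, via the treatment-bridge equation \eqref{eq:iden_bridge_q} together with conditioning first on $(W,A,X)$ and then on $(Z,A,X)$, the identity $\mE\{(-1)^{1-A}q\,\sigma^2\m_\eff\}=\mE\{(-1)^{1-A}f(A\mid W,X)^{-1}\nabla_{\bm\gamma}h\}$. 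Substituting these into the definition of $\bm\kappa$ yields $\bm\kappa=\{\mE(\sigma^2\m_\eff\m_\eff\trans)\}^{-1}\mE(\sigma^2\m_\eff\phi)$, the weighted least-squares coefficient. The Pythagorean decomposition in $L^2(\sigma^2)$ then gives $V_{\tau,\eff}-V_\tau=\mE(\sigma^2\phi^2)-\mE\{\sigma^2(\bm\kappa\trans\m_\eff)^2\}=\mE\{\sigma^2(\phi-\bm\kappa\trans\m_\eff)^2\}\ge0$, with equality if and only if $\phi$ lies in the span of $\m_\eff$ in $L^2(\sigma^2)$, i.e.\ $R+(-1)^{1-A}q=\bm\alpha\trans\m_\eff$ for some constant vector $\bm\alpha$, which is the stated condition.

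For part (ii) I would compute the semiparametric efficiency bound of $\tau_0$ in $\calM_{sub}$ directly by exhibiting $\psi_2$ as its efficient influence function. I would first characterize the tangent space: differentiating the defining restriction $\mE_{P_\theta}\{Y-h(W,A,X;\bm\gamma(\theta))\mid Z,A,X\}=0$ along a regular parametric submodel with score $S$ and local parameter $\dot{\bm\gamma}$ shows that $S$ is compatible with $\calM_{sub}$ if and only if $\mE(\epsilon S\mid Z,A,X)=\sigma^2\m_\eff\trans\dot{\bm\gamma}$ for some $\dot{\bm\gamma}\in\mR^p$; these scores constitute $\calT_{sub}$. Next I would verify the two defining properties of the efficient influence function. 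Membership $\psi_2\in\calT_{sub}$ follows from $\mE(\epsilon\psi_2\mid Z,A,X)=\sigma^2R+\sigma^2t=\sigma^2\bm\kappa\trans\m_\eff$, which is of the required form with $\dot{\bm\gamma}=\bm\kappa$. For the gradient property, differentiating $\tau(\theta)=\mE_{P_\theta}\{h(W,1,X;\bm\gamma(\theta))-h(W,0,X;\bm\gamma(\theta))\}$ gives $\dot\tau=\mE(DS)+\bm\delta\trans\dot{\bm\gamma}$ with $\bm\delta\triangleq\mE\{\nabla_{\bm\gamma}h(W,1,X;\bm\gamma_0)-\nabla_{\bm\gamma}h(W,0,X;\bm\gamma_0)\}$, while $\mE(\psi_2 S)=\mE(DS)+\mE(t\sigma^2\m_\eff)\trans\dot{\bm\gamma}$; equality for every $S\in\calT_{sub}$ thus reduces to the single vector identity $\mE(t\,\sigma^2\m_\eff)=\bm\delta$. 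Substituting $t=\bm\kappa\trans\m_\eff-R$ and the definition of $\bm\kappa$ collapses the left-hand side to $\mE\{(-1)^{1-A}f(A\mid W,X)^{-1}\nabla_{\bm\gamma}h\}$, which equals $\bm\delta$ upon summing over $a\in\{0,1\}$ against $f(a\mid W,X)$. Since $\psi_2$ is a gradient lying in $\calT_{sub}$, it is the efficient influence function, so $V_\tau=\mE(\psi_2^2)$ is the bound; this also re-derives the inequality in (i) through the nesting $\calM_{sub}\subseteq\calM_{sp}$, though the projection argument is what pins down the equality condition.

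The main obstacle is the rigorous justification of the tangent-space characterization in part (ii): one must show that every $S$ satisfying $\mE(\epsilon S\mid Z,A,X)=\sigma^2\m_\eff\trans\dot{\bm\gamma}$ is actually attained by a bona fide regular parametric submodel, and that the finite-dimensional direction $\dot{\bm\gamma}$ is well defined and uniquely recoverable from $S$. This is where the completeness condition in \Cref{ass:causal_and_proxy}(iv) and the surjectivity in \Cref{ass:proxy_cui}(iii) enter, guaranteeing that $\bm\gamma_0$ is identified and locally perturbable and that the map $\dot{\bm\gamma}\mapsto\mE(\nabla_{\bm\gamma}h\mid Z,A,X)\trans\dot{\bm\gamma}$ is injective. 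Once the tangent space is pinned down, the remaining steps are the routine algebraic identities already assembled in part (i).
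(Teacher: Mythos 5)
Your proposal is correct, and it splits naturally into two comparisons. For part (i), your argument is essentially the paper's: the paper also converts $\mE\{(-1)^{1-A}f(A\mid W,X)^{-1}\nabla_{\bm\gamma}h\}$ into $\mE\{(-1)^{1-A}q\,\mE(\nabla_{\bm\gamma}h\mid Z,A,X)\}$ via the treatment-bridge equation, interprets the resulting quantity as a least-squares projection (the paper works with $\{R+(-1)^{1-A}q\}\sqrt{\sigma^2}$ projected onto the span of $\mE(\nabla_{\bm\gamma}h\mid Z,A,X)/\sqrt{\sigma^2}$, which is your $L^2(\sigma^2)$-projection after moving the weight into the inner product), and invokes the Hilbert projection theorem where you invoke Pythagoras; your unified identity $\mE\{(D+c\epsilon)^2\}=\mE(D^2)-\mE(\sigma^2R^2)+\mE\{\sigma^2(c+R)^2\}$ is a cleaner packaging of the same computation (it is the paper's equation for $\mE\{\psi_2(\O)^2\}$ applied to both coefficients at once), and the equality characterizations coincide. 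For part (ii), you take a genuinely different route: the paper observes that $\calM_{sub}$ is a sequential moment-restriction model and simply cites Remark 2.2 of Ai and Chen (2012) (equivalently Theorem 1 of Brown and Newey, 1998) for the bound, then verifies algebraically that the cited formula equals $V_\tau$; you instead derive the bound from first principles by characterizing the tangent set $\{S:\mE(\epsilon S\mid Z,A,X)=\sigma^2\m_\eff\trans\dot{\bm\gamma}\}$, checking $\psi_2$ satisfies it with $\dot{\bm\gamma}=\bm\kappa$, and verifying the gradient identity, which collapses to $\mE(t\,\sigma^2\m_\eff)=\bm\delta$ — all of which is correct algebra. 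What each approach buys: yours is self-contained and makes transparent \emph{why} $\psi_2$ is efficient in $\calM_{sub}$ (it is a gradient that lies in the tangent space), but it leaves open exactly the step you flag — that every function satisfying the score constraint is attained (in closure) by a regular parametric submodel, which is needed for $\psi_2\in\calT_{sub}$ and hence for the bound to equal $\mE(\psi_2^2)$ rather than merely lower-bound it. The paper's citation to Ai--Chen/Brown--Newey outsources precisely that attainability argument, so relative to the paper's own standard of rigor your sketch is not deficient, but to make your version stand alone you would need to carry out the submodel construction (or cite the same literature), at which point the two proofs meet.
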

\Cref{coro:super_eff} shows that the proposed estimator $\wh{\tau}$ has an asymptotic variance that is always no larger than the semiparametric efficiency bound $V_{\tau,\eff}$.
The efficiency gain arises from parameterizing the bridge function $h$. We note that the doubly robust estimator, constructed using the efficient score \eqref{eq:eif}, attains the semiparametric efficiency bound $V_{\tau,\eff}$ only if both bridge functions $h$ and $q$ are correctly specified and consistently estimated. We also note that the condition that guarantees $V_{\tau}= V_{\tau,\eff}$ is unlikely to hold, except under highly contrived data distributions. As demonstrated in the simulation studies in \Cref{sec:numerical}, it does not hold even in the most straightforward linear data-generating processes. Therefore, the proposed estimator generally outperforms the doubly robust estimator equipped with consistently estimated $h$ and $q$. 

\begin{remark}\label{remark:bias-var-tradeoff}
      \Cref{thm:asym_normal,coro:super_eff} are established under asymptotics with increasing $K$. Intuitively, this involves a bias-variance trade-off as highlighted by \cite{DonaldImbensNewey2009}. A larger $K$ implies using a richer set of moment conditions, which improves estimation efficiency (reduces variance) but can also increase bias. Assumption \ref{ass:sieve} is crucial here as it restricts $K$ from growing too rapidly, thereby balancing the need for efficiency against the risk of excessive bias. Furthermore, we address the practical selection of $K$ in \Cref{sec:implementation} using a data-driven approach based on MSE minimization.
\end{remark}

To highlight the deficiency of the plug-in procedure, the following \namecref{coro:super_plug_in} compares the proposed estimator $\wh\tau$ with the theoretically optimal plug-in estimator $\wh{\tau}^\plugin(\m_\eff)$. 
\begin{theorem}\label{coro:super_plug_in}
    Suppose that $\sqrt{N}\{\wh{\tau}^\plugin(\m_\eff)-\tau_0\}\convd N(0,V^\plugin_{\tau,\eff})$, then we have $V_{\tau}\le V^\plugin_{\tau,\eff}$, and the equality holds if and only if there is a vector of constants $\bm\alpha$ such that $R(Z,A,X) = \bm\alpha\trans\m_\eff(Z,A,X).$
\end{theorem}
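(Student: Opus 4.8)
The plan is to put the plug-in estimator into the same ``$\{Y-h\}$-residual'' form as the proposed estimator's influence function $\psi_2$ from \Cref{thm:asym_normal}, and then reduce the comparison of variances to an orthogonal ($L^2$-projection) decomposition. First I would derive the influence function of $\wh\tau^\plugin(\m_\eff)$. Because $\m_\eff$ has the same dimension as $\bm\gamma$, the first-stage estimator $\wh{\bm\gamma}(\m_\eff)$ solving $N^{-1}\sum_i\{Y_i-h_i\}\m_\eff(Z_i,A_i,X_i)=\0$ is exactly identified, so standard $M$-estimation theory gives it influence function $\bm\psi_1$ and, after simplification via $\mE\{\nabla_{\bm\gamma}h\mid Z,A,X\}=\sigma^2\m_\eff$ with $\sigma^2(Z,A,X)\triangleq\mE[\{Y-h\}^2\mid Z,A,X]$, the sandwich identity $\V_{\bm\gamma}=\{\mE(\sigma^2\m_\eff\m_\eff\trans)\}^{-1}$. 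A delta-method expansion of $\wh\tau^\plugin(\m_\eff)=N^{-1}\sum_i\{h(W_i,1,X_i;\wh{\bm\gamma})-h(W_i,0,X_i;\wh{\bm\gamma})\}$ about $\bm\gamma_0$ then gives the influence function
\[
\psi^\plugin(\O)=\{h(W,1,X)-h(W,0,X)-\tau_0\}+(\V_{\bm\gamma}\G_\delta)\trans\m_\eff(Z,A,X)\,\{Y-h(W,A,X)\},
\]
with $\G_\delta\triangleq\mE\{\nabla_{\bm\gamma}h(W,1,X;\bm\gamma_0)-\nabla_{\bm\gamma}h(W,0,X;\bm\gamma_0)\}$, so the assumed normality forces $V^\plugin_{\tau,\eff}=\mE\{\psi^\plugin(\O)^2\}$.

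The algebraic bridge between the two estimators is the identity $\G_\delta=\mE\{(-1)^{1-A}f(A\mid W,X)^{-1}\nabla_{\bm\gamma}h\}$, obtained by conditioning on $(W,X)$ and summing the two treatment arms against $f(a\mid W,X)$. Substituting it into the formula for $\bm\kappa$ in \Cref{thm:asym_normal} gives $\bm\kappa=\V_{\bm\gamma}\G_\delta+\bm\mu$, where $\bm\mu\triangleq\V_{\bm\gamma}\mE\{R(Z,A,X)\nabla_{\bm\gamma}h\}$. Writing the weighted inner product $\la a,b\ra\triangleq\mE\{\sigma^2 ab\}$ and using $\V_{\bm\gamma}^{-1}=\mE(\sigma^2\m_\eff\m_\eff\trans)$ together with $\mE\{R\nabla_{\bm\gamma}h\}=\mE(\sigma^2 R\m_\eff)$, I would recognize $\bm\mu\trans\m_\eff$ as precisely the $\la\cdot,\cdot\ra$-projection of $R$ onto $\mathrm{span}(\m_\eff)$; hence $R-\bm\mu\trans\m_\eff$ is the corresponding projection residual, orthogonal to every element of $\mathrm{span}(\m_\eff)$.

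With these pieces the comparison is immediate. The defining equation \eqref{eq:iden_bridge_h} of the bridge function supplies the crucial orthogonality $\mE\{Y-h(W,A,X)\mid Z,A,X\}=0$. Using $\bm\kappa=\V_{\bm\gamma}\G_\delta+\bm\mu$ I would write $\psi^\plugin=\psi_2+\{R-\bm\mu\trans\m_\eff\}\{Y-h\}$ and split $\psi_2=\big[\{h(W,1,X)-h(W,0,X)-\tau_0\}-R\{Y-h\}\big]+\bm\kappa\trans\m_\eff\{Y-h\}$. The bracketed term is $L^2$-orthogonal to every $c(Z,A,X)\{Y-h\}$ by the definition of $R$, and $\bm\kappa\trans\m_\eff\{Y-h\}$ is orthogonal to $\{R-\bm\mu\trans\m_\eff\}\{Y-h\}$ because the latter is the projection residual of $R$ while the former lies in $\mathrm{span}(\m_\eff)$. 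Thus the added term is orthogonal to $\psi_2$, yielding the Pythagorean identity
\[
V^\plugin_{\tau,\eff}-V_\tau=\mE\big[\sigma^2(Z,A,X)\{R(Z,A,X)-\bm\mu\trans\m_\eff(Z,A,X)\}^2\big]\ge 0.
\]
Equality holds if and only if $R=\bm\mu\trans\m_\eff$ almost surely, and since $\bm\mu\trans\m_\eff$ is the projection of $R$, this is equivalent to $R\in\mathrm{span}(\m_\eff)$, i.e.\ $R(Z,A,X)=\bm\alpha\trans\m_\eff(Z,A,X)$ for some constant vector $\bm\alpha$, which is exactly the stated condition.

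The step I expect to be the main obstacle is the first one: rigorously establishing the influence function of $\wh\tau^\plugin(\m_\eff)$ and, within it, the identity $\G_\delta=\mE\{(-1)^{1-A}f(A\mid W,X)^{-1}\nabla_{\bm\gamma}h\}$, because this is precisely what makes the plug-in coefficient $\V_{\bm\gamma}\G_\delta$ and the optimal coefficient $\bm\kappa$ differ by exactly the projection vector $\bm\mu$. Once that identity and the conditional mean-zero property of $Y-h$ are secured, the remaining variance comparison is a routine orthogonal decomposition, and the equality characterization follows from uniqueness of the $L^2(\sigma^2\,dP)$ projection.
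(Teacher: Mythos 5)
Your proof is correct, and its skeleton coincides with the paper's: the same influence-function expansion of $\wh{\tau}^\plugin(\m_\eff)$ with $V^\plugin_{\tau,\eff}=\mE\{\psi^\plugin(\O)^2\}$, the same IPW identity $\mE\{\nabla_{\bm\gamma}h(W,1,X;\bm\gamma_0)-\nabla_{\bm\gamma}h(W,0,X;\bm\gamma_0)\}=\mE\{(-1)^{1-A}f(A\mid W,X)^{-1}\nabla_{\bm\gamma}h(W,A,X;\bm\gamma_0)\}$ (which the paper uses implicitly through the definition of $\B_{(K+1)\times(p+1)}$ and the first display of its proof), and the same weighted-$L^2$ projection of $R$ onto $\mathrm{span}(\m_\eff)$ under the inner product $\la a,b\ra=\mE[\{Y-h(W,A,X)\}^2ab]$. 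Where you genuinely diverge is in how the inequality is extracted. The paper writes the plug-in influence function as $T_1+T_2$, with $T_1$ the $R$-corrected contrast, shows $\mE(T_1T_2)=0$, expands $\mE(T_2^2)$ into three pieces, applies the Bessel-type inequality $\mE[R^2\{Y-h\}^2]\ge\mE\{R\nabla_{\bm\gamma}h\}\trans\V_{\bm\gamma}\mE\{R\nabla_{\bm\gamma}h\}$ to one piece, and then matches the resulting lower bound against the expression for $V_\tau$ derived earlier in the proof of \Cref{thm:asym_normal} (its equation for $\mE\{\psi_2(\O)^2\}$). You instead compare influence functions directly: $\psi^\plugin=\psi_2+\{R-\bm\mu\trans\m_\eff\}\{Y-h\}$, where $\bm\mu\trans\m_\eff$ is the projection of $R$, verify that the correction term is orthogonal both to the $R$-corrected contrast (via the conditional mean characterization of $R$) and to $\bm\kappa\trans\m_\eff\{Y-h\}$ (via the projection-residual property), and read off the exact Pythagorean identity $V^\plugin_{\tau,\eff}-V_\tau=\mE\bigl[\mE[\{Y-h(W,A,X)\}^2\mid Z,A,X]\{R(Z,A,X)-\bm\mu\trans\m_\eff(Z,A,X)\}^2\bigr]$. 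The two arguments rest on the identical projection fact — the slack in the paper's inequality is exactly your squared residual — but your organization buys something concrete: a closed-form expression for the efficiency gap rather than only a one-sided bound, and an equality characterization that follows immediately from uniqueness of the $L^2$ projection instead of from tracing when an inequality chain is tight. The steps you flag as needing rigor (the $o_p(1)$ expansion of the plug-in and the identification of $V^\plugin_{\tau,\eff}$ with the second moment of the influence function) are handled in the paper by the same standard M-estimation argument you invoke, so they do not constitute gaps.
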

The condition that guarantees $V_{\tau}=V^\plugin_{\tau,\eff}$ is satisfied if the observed distribution concerning the correlation between the score functions of $\tau_0$ and $\bm\gamma_0$, characterized by $R(Z,A,X)$, meets a particular structure. A sufficient condition is that there is no additive interaction between $A$ and $W$ in the model of $h(W,A,X)$. In this case, equality holds for $\bm\alpha=\0$. 
In case of its violation, \Cref{coro:super_plug_in} indicates that the proposed estimator of $\tau_0$ outperforms plug-in estimators, even though an efficient estimator of $\bm\gamma_0$ is employed.

\section{Implementation details}\label{sec:implementation}
The GMM algorithm can be easily implemented using routine software, such as $\mathsf{gmm}$ in R. Here, we briefly discuss the selection of tuning parameters. 
A large class of sieves $\u_K(z,a,x)$ is feasible for implementing the proposed approach. In this paper, we suggest a tensor-product linear sieve basis. To simplify the presentation, we suppose both $Z$ and $X$ are scalar variables. Let $\{\varphi_j(x) : j=1, \ldots, K_1\}$ denote a sieve basis for $\calL_2(\calX)$, the space of square Lebesgue integrable functions on $\calX$. For instance, if $\calX=[0,1]$, one can choose the basis of polynomials $\varphi_j(x)=x^{j-1}$ or B-spline, and if $\calX$ is unbounded, one can select the Hermite polynomial basis $\varphi_j(x)=\exp(-x^2)x^{j-1}$. Similarly, let $\{\phi_k(z):k=1,\ldots,K_2\}$ denote a basis for $\calL_2(\calZ)$. Then, the tensor-product sieve basis for $\calZ\times\{0,1\}\times\calX$ is given by $\{I(a=l)\varphi_j(x)\phi_k(z):j=1,\ldots,K_1;k=1,\ldots,K_2;l=0,1\}$ with the number of terms $K=2K_1K_2$. 
However, the number of tensor-product basis terms grows exponentially with the dimension of the arguments. In practice, additive separable bases can be employed to mitigate the issue of a large $K$. We refer readers to \cite{Newey1997} and \cite{Chen2007Handbook} for further details on sieve selection.\label{res:R1-Q21}

Another tuning parameter is the number of moments $K$. While the large-sample properties of the proposed estimator permit a wide range of values for $K$, practical guidance on selecting smoothing parameters is necessary for applied researchers who generally have only one finite sample at their disposal. Here, we propose an asymptotic mean-square-error-based criterion, as in \citet{DonaldImbensNewey2009}, for the data-driven selection $K$. Specifically, we choose $K\in\{1,\ldots,\bar{K}\}$ to minimize $S_{\rm{GMM}}(K):=\sum_{j=1}^p\{\wh\Pi(K;\e_j)^2/N+\wh\Phi(K;\e_j)\}$ defined in \Cref{sec:app_sel_K}, where $\wh\Pi(K;\e_j)^2/N$ is an estimate of the squared bias term of $\wh\gamma_j$ derived in \cite{NeweySmith2004}, and $\wh\Phi(K;\e_j)$ is an estimate of the asymptotic variance term of $\wh\gamma_j$. We refer interested readers to \cite{DonaldImbensNewey2009} for further insights.

\section{Numerical studies}\label{sec:numerical}

\subsection{Simulation}\label{ssec:simulation}
We construct Monte Carlo simulations to examine the finite-sample performance of the proposed approach.  We generate i.i.d samples from the following data-generating process:
\begin{align*}
&X,U\sim N(0,1),
~~\varepsilon_j\mid X=x\sim N\left(0,\sigma^2_j(x)\right),j=1,2,3,\\
& \text{logit}\{\pr(A=1\mid X,U)\} = (1,X,U)\bm\beta_a,~~
Z=(1,A,X,U)\bm\beta_z+\varepsilon_1, \\ 
& W= (1,X,U)\bm\beta_w+\varepsilon_2,~~
Y= (1,A,W,X,U)\bm\beta_y+\varepsilon_3.
\end{align*}
with the parameters $\bm\beta_a=(-0.1,0.5,0.5)\trans$, $\bm\beta_z=(0.5,1,0.5,1)\trans$, $\bm\beta_w=(1,-1,1)\trans$, and $\bm\beta_y=(1,0.5,0.5,1,1)\trans$.
We consider two scenarios for $\sigma_j(x)$. Scenario I is a simple setting with no heteroskedasticity, $\sigma_j(x)\equiv 1$ for $j=1,2,3$. In Scenario II, heteroskedasticity is present, with $\sigma_1(x)\equiv 1$, $\sigma_2(x)=(0.3+x^2)^{-1/2}$, $\sigma_3(x)=(0.5+0.8x^2)^{-1/2}$. The average treatment effect $\tau_0=0.5$.
We show in \Cref*{sec:app_supp_simu} of the supplementary material that the above data-generating mechanism is compatible with the following models of $h$ and $q$:
\begin{align*}
    & h(w,a,x;\bm\gamma) = \gamma_0 +  \gamma_1w + \gamma_2a + \gamma_3x,\\
    & q(z,a,x;\bm\theta) =  1+\exp\left\{(-1)^{a}(\theta_0+\theta_1z+\theta_2a+\theta_3x)\right\}.
\end{align*}

Our proposed method, denoted GMM-div, is computed using a power series and a data-driven smoothing parameter $K$ with $\bar{K}=12$ in both scenarios. For comparison, we include five additional methods:
    (i) Naive estimation, which regards all three covariates $(W,Z,X)$ as confounders,  and estimates ATE using classical g-formula \citep{GreenlandRobins1986}.
    (ii) RGMM, recursive GMM estimation with a fixed number of moments introduced in \cite{MiaoShiLiTchetgenTchetgen2024}. Following their suggestion, we choose the moment restrictions $\mE[\{Y-h(W,A,X;\bm\gamma)\}(1,Z,A,X)\trans]=0$. In this case, it is equivalent to the \emph{proximal outcome regression} estimator introduced in \cite{CuiPuShiMiaoEtAl2023}.
    (iii) P2SLS, the \emph{proximal two-stage least squares} introduced in \cite{TchetgenTchetgenYingCuiShiEtAl2024}, which can be implemented using the command $\text{ivreg}(Y\!\sim\! A+X+W\mid A+X+Z)$ in R.
    (iv) PIPW and PDR, the \emph{proximal inverse probability weighting} and the \emph{proximal doubly robust} estimator introduced by \cite{CuiPuShiMiaoEtAl2023}, which require modeling the treatment confounding bridge function $q(z,a,x;\bm\theta)$. Specifically, $\wh\tau_{PIPW}=N^{-1}\sum_{i=1}^N (-1)^{1-A_i}q(Z_i,A_i,X_i;\wh{\bm\theta})Y_i$, and $\wh\tau_{DR}$ solves the empirical analogue of \eqref{eq:eif} by replacing $h$ and $q$ with their estimates. Here, $\wh{\bm\theta}$ is obtained by solving the estimating equation $\mE\{(-1)^{1-A}q(Z,A,X;\bm\theta)(1,W,A,X)\trans - (0,0,1,0)\trans\}=0$.
Across all methods, we replicate 500 simulations at sample sizes of 400 and 800.

\Cref{table:simu1} reports the absolute bias, standard error,  root mean squared error, coverage probability, average length of 95\% confidence intervals, and the power of the hypothesis testing problem $H_0:\tau_0=0~~v.s.~~H_1:\tau_0\neq 0$ in both scenarios. The power is calculated as the proportion of rejected cases across 500 replications. It reveals that: (i) The naive estimator is severely biased due to unmeasured confounding, while the other five methods exhibit negligible bias as expected in both scenarios.
(ii) In Scenario I without heteroskedasticity, the five methods perform similarly, and the proposed method mainly selects $K=4$ across replications as shown in \Cref{plt:K_sel}(a).
(iii) In Scenario II, the proposed method demonstrates significantly lower standard error, narrower confidence intervals, and, notably, higher power in detecting non-zero causal effects. It benefits from the increased efficiency of additional selected moments as shown in \Cref{plt:K_sel}(b). \Cref{plt:plt_scen2} compares the empirical distributions of different methods, illustrating that the proposed method performs more concentrated around the true value.
(iv) Due to the tradeoff between type I and type II errors, the proposed estimator provides a relatively anti-conservative confidence interval. Given the small sample size in Scenario II, this yields a slightly smaller CP than other methods. Nevertheless, it approaches the nominal level as the sample size increases.

\begin{table} 
    \centering
    \caption{Simulation results: absolute bias, standard error (SE), root mean squared error (RMSE), coverage probability (CP), average length of the 95\% confidence interval, and power.}
    \label{table:simu1}
    \par
    \resizebox{\linewidth}{!}{
    \begin{tabular}{cccccccccccccc} \toprule 
        & \multicolumn{6}{c}{$n=400$}
        && \multicolumn{6}{c}{$n=800$} \\  \cline{2-7}\cline{9-14}
        Method & Bias & SE & RMSE & Length & CP & Power && Bias & SE & RMSE & Length & CP & Power\\ \midrule
        \multicolumn{12}{l}{\textbf{Scenario I: homoskedasticity}} \\
        Naive & 0.17 & 0.13 & 0.21 & 0.53 & 0.776 & 0.708 && 0.17 & 0.10 & 0.19 & 0.37 & 0.580 & 0.922\\
        RGMM & 0.01 & 0.16 & 0.16 & 0.60 & 0.942 & 0.888 && 0.01 & 0.11 & 0.11 & 0.43 & 0.950 & 0.988\\
        P2SLS & 0.00 & 0.16 & 0.16 & 0.61 & 0.944 & 0.898 && 0.01 & 0.11 & 0.11 & 0.43 & 0.956 & 0.986\\
        PIPW & 0.00 & 0.16 & 0.16 & 0.64 & 0.948 & 0.852 && 0.01 & 0.12 & 0.12 & 0.45 & 0.958 & 0.980\\
        PDR & 0.00 & 0.16 & 0.16 & 0.63 & 0.944 & 0.854 && 0.01 & 0.12 & 0.12 & 0.44 & 0.962 & 0.982\\
        GMM-div & 0.00 & 0.16 & 0.16 & 0.60 & 0.942 & 0.886 && 0.01 & 0.11 & 0.11 & 0.43 & 0.950 & 0.988\\
        \midrule
        \multicolumn{12}{l}{\textbf{Scenario II: heteroskedasticity}} \\
        Naive & 0.23 & 0.18 & 0.29 & 0.71 & 0.764 & 0.312 && 0.22 & 0.13 & 0.26 & 0.50 & 0.598 & 0.602\\
        RGMM & 0.00 & 0.28 & 0.28 & 1.04 & 0.954 & 0.500 && 0.01 & 0.19 & 0.19 & 0.72 & 0.952 & 0.726\\
        P2SLS & 0.01 & 0.28 & 0.28 & 1.03 & 0.952 & 0.490 && 0.01 & 0.19 & 0.19 & 0.72 & 0.948 & 0.730\\
        PIPW & 0.01 & 0.29 & 0.29 & 1.08 & 0.956 & 0.534 && 0.03 & 0.22 & 0.22 & 0.78 & 0.954 & 0.734\\
        PDR & 0.02 & 0.31 & 0.31 & 1.08 & 0.950 & 0.518 && 0.03 & 0.23 & 0.23 & 0.75 & 0.942 & 0.728\\
        GMM-div & 0.01 & 0.13 & 0.13 & 0.48 & 0.936 & 0.966 && 0.00 & 0.09 & 0.09 & 0.34 & 0.956 & 1.000\\
        \bottomrule
    \end{tabular}
    }
    \end{table}

    \begin{figure}
        \centering
        \begin{subfigure}[b]{0.49\textwidth}
            \centering
            \includegraphics[width=\textwidth]{./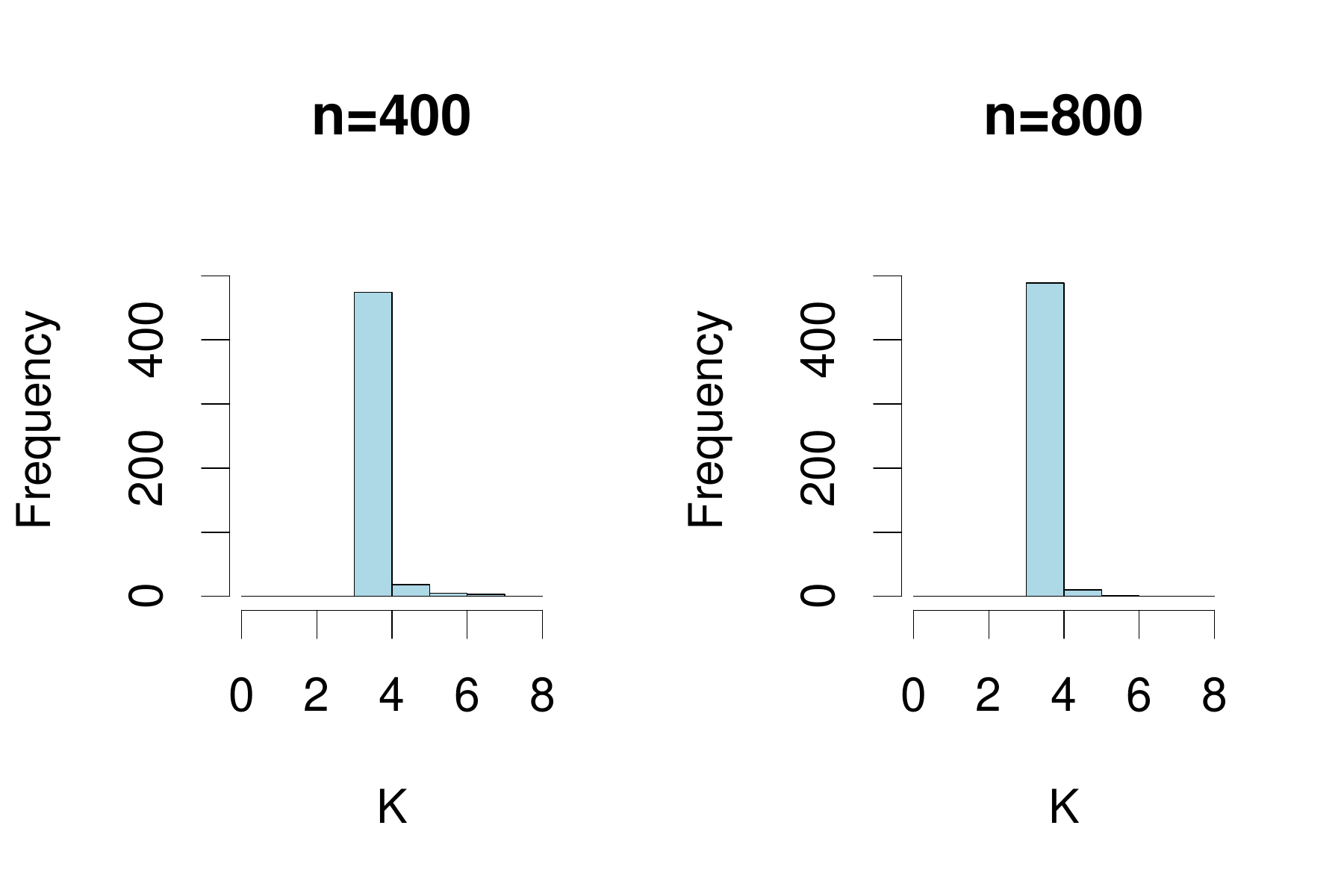}
            \caption{Scenario I: homoskedasticity}
        \end{subfigure}
        \hfill
        \begin{subfigure}[b]{0.49\textwidth}
            \centering
            \includegraphics[width=\textwidth]{./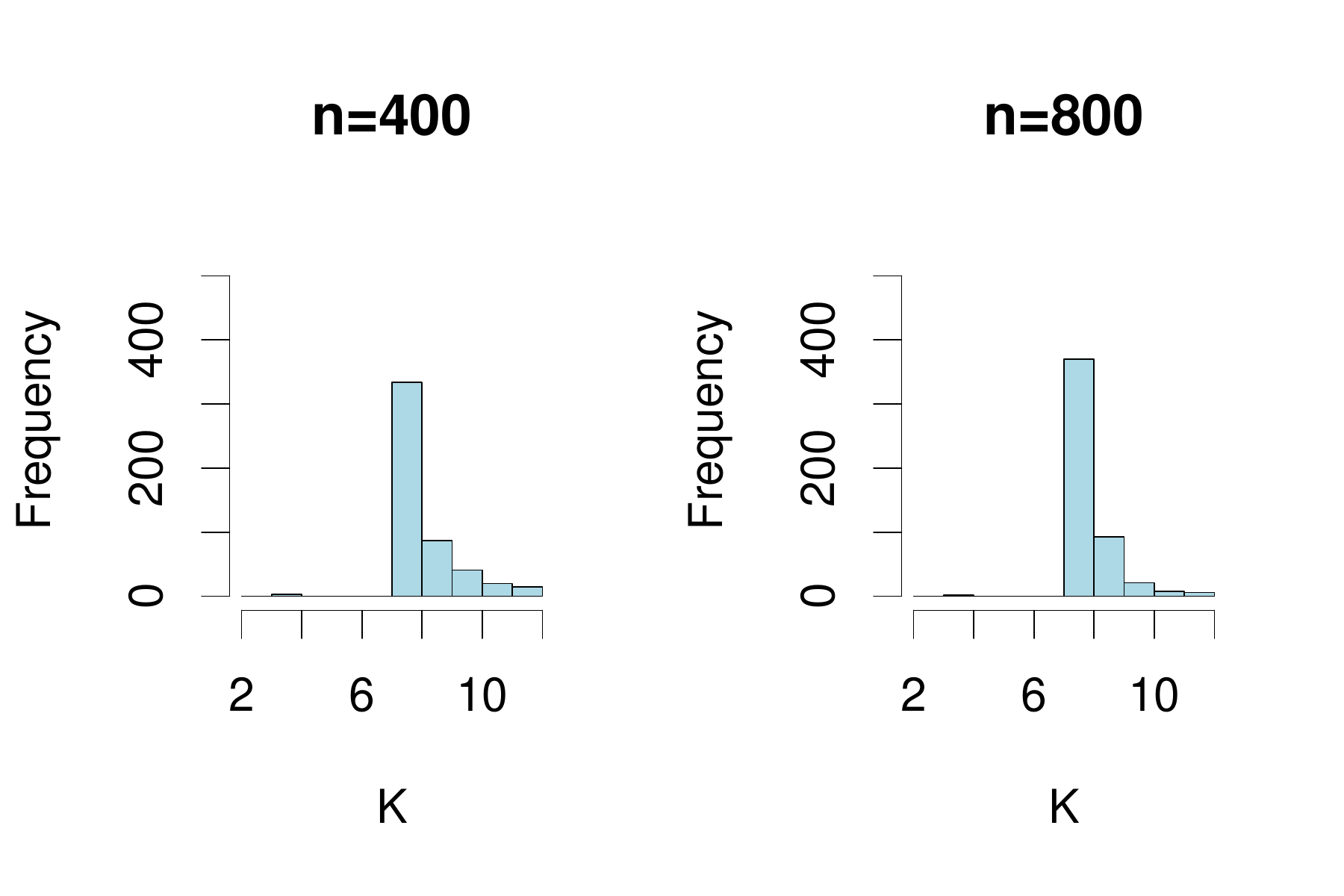}
            \caption{Scenario II: heteroskedasticity}
        \end{subfigure}
        \caption{Histogram of selected $K$.}
    \label{plt:K_sel}
    \end{figure}

    \begin{figure}
        \centering   
        \begin{subfigure}[b]{0.48\textwidth}
            \centering
            \includegraphics[width=\textwidth]{./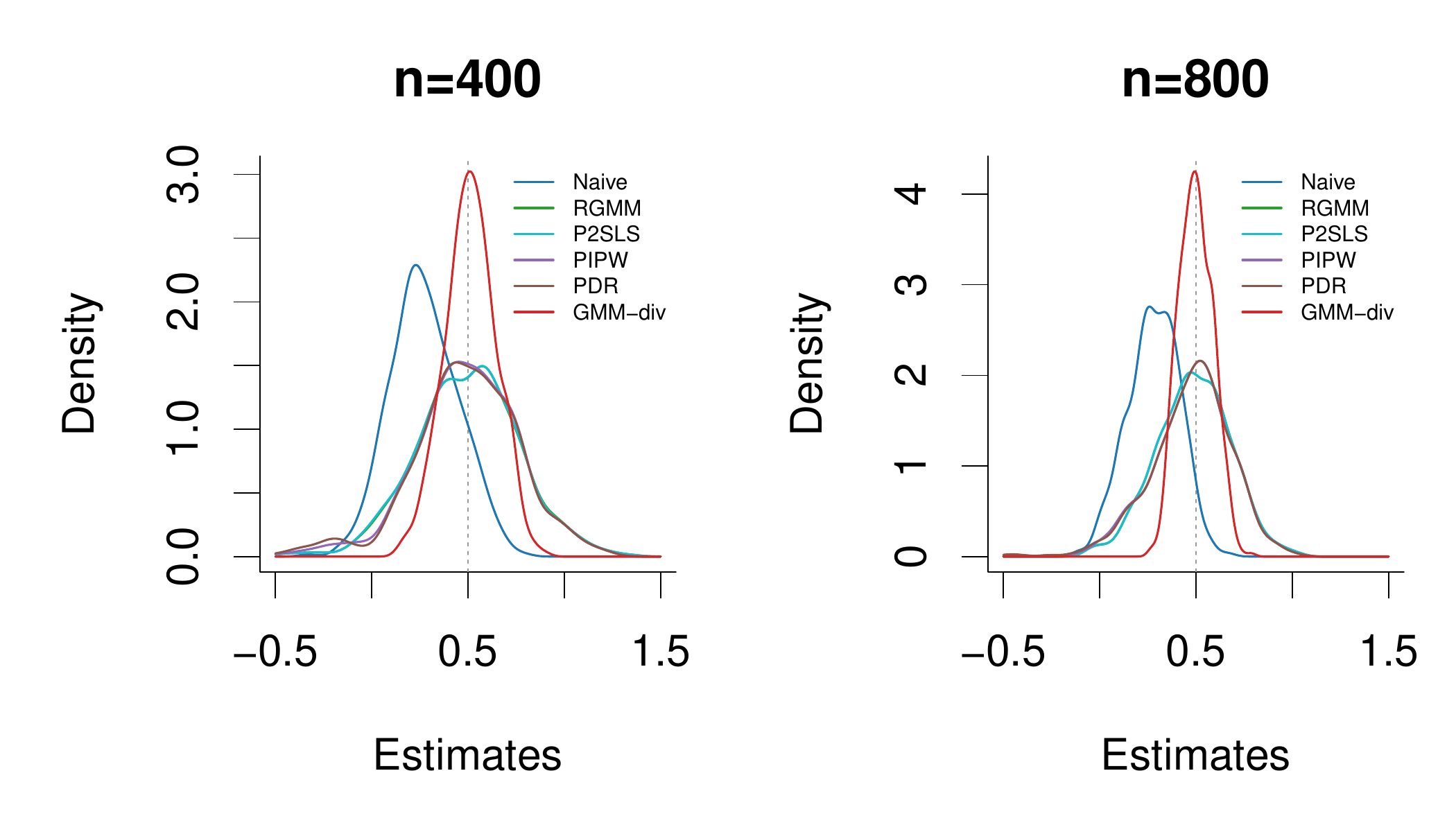}
            \caption{}
        \end{subfigure}
        \hfill
        \begin{subfigure}[b]{0.48\textwidth}
            \centering
            \includegraphics[width=\textwidth]{./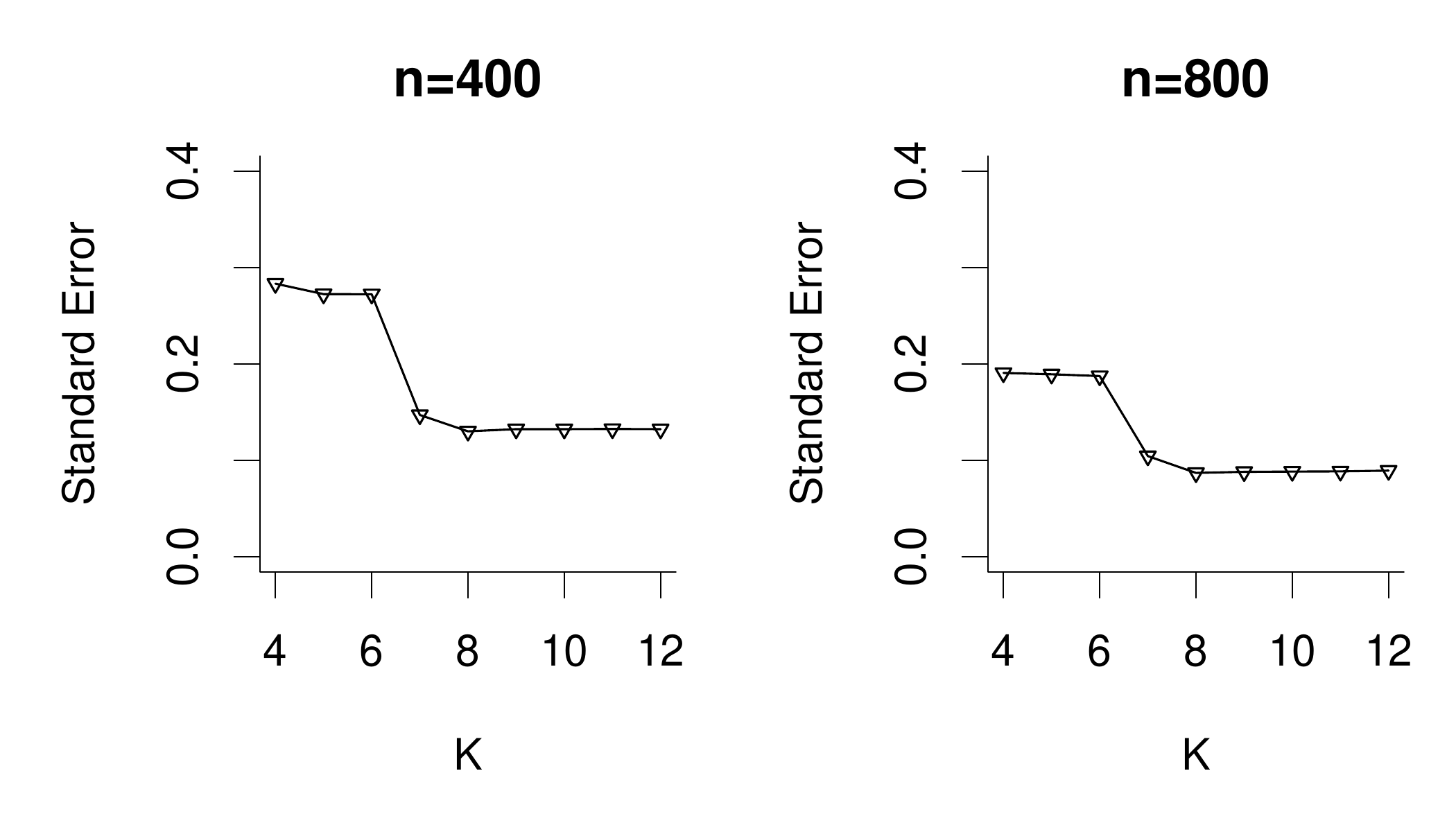}
            \caption{}
        \end{subfigure}
        \caption{(a) Empirical distributions of different methods under Scenario II. (b) Standard error of the proposed estimator versus $K$ under Scenario II.}
    \label{plt:plt_scen2}
    \end{figure}

\subsection{Real data application}
We apply the proposed method to re-analyze the Study to Understand Prognoses and Preferences for Outcomes and Risks of Treatments (SUPPORT), as considered in \cite{CuiPuShiMiaoEtAl2023} and \cite{TchetgenTchetgenYingCuiShiEtAl2024}. The study aims to evaluate the effectiveness of right heart catheterization (RHC) in the initial care of critically ill patients \citep{ConnorsSperoffDawsonThomasEtAl1996}. This dataset has also been widely analyzed in the causal inference literature, assuming no unmeasured confounding \citep{HiranoImbens2001,Tan2006,VermeulenVansteelandt2015}. The treatment $A$ indicates whether a patient received an RHC within 24 hours of admission. Among 5735 patients, 2184 received the treatment, while 3551 did not. The outcome $Y$ is the number of days between admission and death or censoring at 30 days. The data include 71 baseline covariates, comprising 21 continuous variables and 50 dummy variables derived from categorical variables. These covariates include demographics (age, sex, race, education, income, and insurance status), estimated probability of survival, comorbidity, vital signs, physiological status, and functional status. Following \cite{TchetgenTchetgenYingCuiShiEtAl2024}, we select $Z = (\mathsf{pafi1}, \mathsf{paco21})$, $W = (\mathsf{ph1}, \mathsf{hema1})$, and use the remaining covariates as $X$.

To implement the proposed method, we select $(1,A,Z,X)$, along with quadratic and cubic polynomials of the continuous variables in $X$, as candidates for constructing moment equations.  We then apply the proposed data-driven approach to select $K$. \Cref{plt:loss_curve} plots the loss curve against $K$, from which we choose $K=81$ for estimation. Our method yields a point estimate of $-1.610$ with a standard error of 0.272, and the corresponding 95\% confidence interval is (-2.143,-1.077). As a comparison, OLS yields an estimate of -1.249 (SE = 0.275) with a 95\% CI of (-1.789,-0.709), while the proximal 2SLS proposed by \cite{TchetgenTchetgenYingCuiShiEtAl2024} produces an estimate of -1.798 (SE = 0.431) with a 95\% CI of (-2.643,-0.954). Our estimate is closely aligned with the proximal 2SLS estimate, suggesting that OLS, which assumes no unmeasured confounding, may underestimate the harmful effect of RHC on 30-day survival among critically ill patients. Moreover, our estimate has a smaller standard error than the proximal 2SLS estimate, resulting in a narrower confidence interval.

\begin{figure}
    \centering
    \includegraphics[width=0.6\textwidth]{./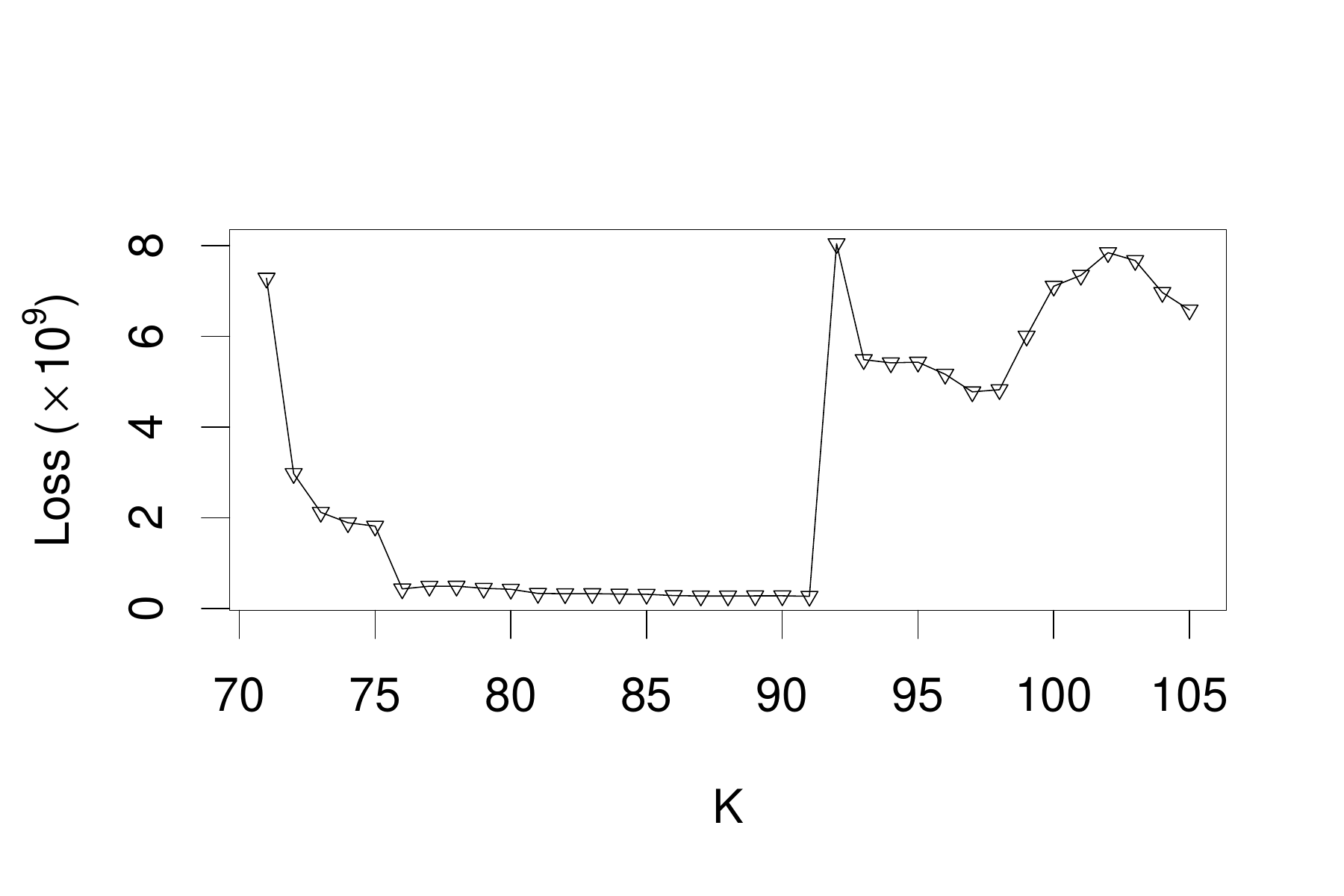}
    \caption{Loss curve for the selection of $K$ in the application of SUPPORT.}
    \label{plt:loss_curve}
\end{figure}

\section{Discussion}\label{sec:discuss}
This paper proposes a simple, data-driven method for obtaining fully efficient estimates of the bridge function and average treatment effect. A key feature of this method is that it requires model assumptions only for the bridge function without imposing any additional assumptions on the data distribution. The proposed estimator typically outperforms the locally efficient DR estimator, demonstrating substantial advantages in detecting significant causal effects. Its feasibility and ease of implementation make it highly useful in practical applications.

Although our method is sufficiently efficient, we acknowledge that it lacks the protective capacity against potential model misspecification that the DR estimator provides.
Both reviewers suggested jointly estimating $h$ and $q$ efficiently, similar to our approach, using the DR estimating equation for the treatment effect. However, as noted by \cite{CuiPuShiMiaoEtAl2023}, this would not improve the efficiency of the treatment effect estimator. The DR estimating equation is Neyman orthogonal and thus locally insensitive to perturbations in $h$ and $q$, so efficiency gains in estimating $h$ and $q$ do not translate into efficiency gains for the ATE estimator. This reflects a trade-off between efficiency and robustness. In practice, we therefore recommend using both methods and comparing their results. For example, when the proposed method and the DR method yield similar estimates, it is reasonable to assume the bridge function is correctly specified and to adopt the proposed method's narrower confidence intervals. \label{resp:discuss}

\section*{Supplementary Materials}
The supplementary material includes intermediate lemmas, additional simulation studies, and all technical proofs.

\appendix
\crefalias{section}{appendix}
\renewcommand{\thetable}{A\arabic{table}}
\setcounter{table}{0}

\section{Regularity conditions}\label{sec:app_regularity_conditions}

\noindent\textbf{Smoothness classes of functions:} Let $\underline{p}$ be the largest integer satisfying $\underline{p}<p$, and $\a=(a_1,\ldots,a_d)$. A function $\phi(\v)$ with domain $\calV\subset\mR^d$ is called a \emph{$p$-smooth function} if it is $\underline{p}$ times continuously differentiable on $\calV$ and 
\begin{align*}
    \max _{\sum_{i=1}^{d} a_i=\underline{p}}
    \left|\frac{\partial^{\underline{p}}\phi(\v)}{\partial v_1^{\alpha_1} \cdots \partial v_d^{\alpha_d}} -
    \frac{\partial^{\underline{p}}\phi(\v^\prime)}{\partial v_1^{\alpha_1} \cdots \partial v_d^{\alpha_d}}
    \right|
     \le C \left\|\v-\v^{\prime}\right\|^{p-\underline{p}},
\end{align*}
for all $\v,\v^\prime\in\mathcal{V}$ and some constant $C>0$.

\noindent\textbf{Regularity conditions:}
The following regularity conditions \ref{cond:compact}--\ref{cond:unif_bound2} are common in the GMM literature \citep[p. 2132]{NeweyMcFadden1994}. Condition \ref{cond:p_smooth} imposes mild smoothness restrictions to ensure a good sieve approximation.

\begin{enumerate}[({A}1)]
    \item $\Gamma$ is a compact subset in $\mR^p$; $\bm\gamma_0$ lies in the interior of $\Gamma$ and is the unique solution to \eqref{eq:iden_gamma_sieve}; $\calT$ is a compact subset in $\mR$, and $\tau_0$ lies in the interior of $\calT$; \label{cond:compact}
    \item $h(W,A,X;\bm\gamma)$ is twice continuously differentiable in $\bm\gamma\in\Gamma$; \label{cond:diff}
    \item $\mE[\sup_{\bm\gamma\in\Gamma}\{Y-h(W,A,X;\bm\gamma)\}^2]<\infty$ and $\mE[\sup_{\bm\gamma\in\Gamma}\|\nabla_{\bm\gamma}h(W,A,X;\bm\gamma)\|^2]<\infty$; \label{cond:unif_bound1}
    \item $\mE[\sup_{\bm\gamma\in\Gamma}\left\{h(W,1,X;\bm\gamma)-h(W,0,X;\bm\gamma)\right\}^2]<\infty$ and $\mE[\sup_{\bm\gamma\in\Gamma}\|\nabla_{\bm\gamma}\{h(W,1,X;\bm\gamma)-h(W,0,X;\bm\gamma)\}\|^2]<\infty$. \label{cond:unif_bound2}
    \item $\mE[\nabla_{\bm\gamma} h(W,A,X;\bm\gamma_0)\mid z,a,x]$, $ \mE\left[\{Y-h(W,A,X)\}^2 \mid z,a,x\right]$ and \\ $\mE[\{Y-h(W,A,X)\}\{h(W,1,X)-h(W,0,X)-\tau_0\}\mid z,a,x]$ are $p$-smooth functions for some $p>0$. \label{cond:p_smooth}
\end{enumerate}

\section{Notation}\label{sec:app_sel_K}

The following notations, adapted slightly from \cite{DonaldImbensNewey2009}, are used in \Cref{sec:implementation} for selecting $K$:
\begin{align*}
    & 
    \wh{\bm\Upsilon}_{K \times K}=\frac{1}{N} \sum_{i=1}^N \{Y_i-h(W_i,A_i,X_i;\check{\bm\gamma})\}^2 \u_K(Z_i,A_i,X_i)\u_K(Z_i,A_i,X_i)\trans, \\
    & \wh\B_{K \times p}=-\frac{1}{N} \sum_{i=1}^N \u_K(Z_i,A_i,X_i) \nabla_{\bm\gamma} h(W_i,A_i,X_i; \check{\bm\gamma})\trans, ~~
    \wh{\bm\Omega}_{p \times p}=(\wh\B_{K \times p})\trans \wh{\bm\Upsilon}_{K \times K}^{-1} \wh\B_{K \times p}, \\
    & \wt{\d}_i=(\wh\B_{K \times p})\trans\bigg\{\frac{1}{N} \sum_{j=1}^N \u_K(Z_j,A_j,X_j)\u_K(Z_j,A_j,X_j)\trans\bigg\}^{-1} \u_K(Z_i,A_i,X_i),\\
    & \wt{\bm\eta}_i=-\nabla_{\bm\gamma} h(W_i,A_i,X_i; \check{\bm\gamma})-\wt{\d}_i,
    ~~ \D_i^*=(\wh\B_{K \times p})\trans \wh{\bm\Upsilon}_{K \times K}^{-1} \u_K(Z_i,A_i,X_i),\\
    &  \wh{\xi}_{i j}= \u_K(Z_i,A_i,X_i)\trans \wh{\bm\Upsilon}_{K \times K}^{-1} \u_K(Z_j,A_j,X_j)/N.
\end{align*}
For a fixed $\t\in\mR^p$,  
\begin{align*}
    \wh{\Pi}(K ; \t)= & \sum_{i=1}^N \wh{\xi}_{i i} \{Y_i-h(W_i,A_i,X_i;\check{\bm\gamma})\} \t\trans \wh{\bm\Omega}_{p \times p}^{-1} \wt{\bm\eta}_i, \\
    \wh{\Phi}(K ; \t)= & \sum_{i=1}^N \wh{\xi}_{i i}\left\{\t\trans \wh{\bm\Omega}_{p \times p}^{-1}\left[\wh{\D}_i^* \{Y_i-h(W_i,A_i,X_i;\check{\bm\gamma})\}^2+\nabla_\gamma h(Z_i,A_i,X_i;\check{\gamma})\right]\right\}^2 
   -\t\trans \wh{\bm\Omega}_{p \times p}^{-1} \t.
\end{align*}
The loss function is
$$S_{\rm{GMM}}(K)=\sum_{j=1}^p\wh\Pi(K;\e_j)^2/N+\wh\Phi(K;\e_j),$$ 
where $\e_j$ is the unit vector with 1 in the $j$-th component and 0 in all others.

In addition, Tables~\ref{table:list_rv}--\ref{table:ass} summarize the random variables, notation, and assumptions used in the paper for ease of reference.

\begin{table}[h!]
  \caption{List of random variables.}
  \label{table:list_rv}
  \centering
  \vspace{-0.2cm}
  \begin{tabular}{ll}
    \toprule
    $A$ & Binary treatment assignment. \\ 
    $Y(a)$ & Potential outcomes under treatment $A=a$. \\
    $Y$ & Observed outcome. \\
    $X$ & Observed confounders. \\
    $U$ & Unobserved confounders. \\
    $(Z, W)$ & Treatment/outcome-inducing confounding proxies. \\
    $\O$ & Observed variables $\O=(A,Y,W,X,Z)$. \\
    \bottomrule
  \end{tabular}
\end{table}

\begin{table}[h!]
  \caption{List of notation.}
  \label{table:list_notation}
  \centering
  \vspace{-0.2cm}
  \begin{tabular}{ll}
    \toprule
    $h(w,a,x)$ & Outcome-confounding bridge function; see Eq. \eqref{eq:iden_bridge_h}. \\
    $q(z,a,x)$ & Treatment-confounding bridge function; see Eq. \eqref{eq:iden_bridge_q}. \\
    $\u_{K}(z,a,x)$ & Vector of basis functions; see Eq. \eqref{eq:iden_gamma_sieve}. \\
    $\m_\eff(z,a,x)$ & Efficient score for estimating $h$; see Eq. \eqref{eq:ee_gamma}. \\
    $\g_K(\O;\gamma,\tau)$ & Joint score function for estimating $h$ and $\tau_0$; see \Cref{sec:estimation}. \\
    $\G_K(\gamma,\tau)$ & Sample average of $\g_K(\O;\gamma,\tau)$; see \Cref{sec:estimation}. \\
    ${\bm\Upsilon}_{(K+1)\times(K+1)}$ & Optimal weight in GMM estimation; see \Cref{sec:estimation}. \\
    $\B_{(K+1)\times(p+1)}$ & Jacobian matrix; see Eq. \eqref{eq:GMM_normal}. \\
    $\{\bm\psi_1(\O),\psi_2(\O)\}$ & Influence functions of $(\wh{\bm\gamma},\wh\tau)$; see \Cref{thm:asym_normal}. \\
    $\psi_\eff(\O)$ & Efficient influence function of $\tau_0$; see Eq. \eqref{eq:eif}. \\
    $\{t(\cdot),R(\cdot),\bm\kappa\}$ & Terms in the influence function; see \Cref{thm:asym_normal}. \\
    $\V_K$ & Asymptotic variance of the GMM estimator; see Eq. \eqref{eq:GMM_normal}. \\
    $(\V_{\bm\gamma}, V_\tau)$ & Asymptotic variance of $(\wh{\bm\gamma}, \wh\tau)$; see \Cref{thm:asym_normal}. \\
    $V_{\tau,\eff}$ & Semiparametric efficiency bound for $\tau_0$; see Eq. \eqref{eq:eif}. \\
    $V^\plugin_{\tau,\eff}$ & Variance of the optimal plug-in estimator; see \Cref{coro:super_plug_in}. \\
    \bottomrule
  \end{tabular}
\end{table}

\begin{table}[h!]
  \caption{Summary of assumptions.}
  \label{table:ass}
  \centering
  \vspace{-0.2cm}
  \begin{tabular}{ll}
    \toprule
    \Cref{ass:causal_and_proxy} &  Conditions for identifying treatment effects using proxies. \\ 
    \Cref{ass:sieve} & Conditions for the use of sieve techniques. \\
    \Cref{ass:proxy_cui} &  Conditions for developing semiparametric theory. \\
    Conditions \ref{cond:compact}-\ref{cond:p_smooth} &  Conditions for the asymptotic analysis of the GMM estimator. \\
    \bottomrule
  \end{tabular}
\end{table}

\clearpage
\bibliographystyle{agsm}
\bibliography{reference}

\clearpage
\appendix
\setcounter{page}{1}
\renewcommand{\thepage}{S\arabic{page}}
{\centering
\section*{SUPPLEMENT}}
\setcounter{equation}{0}\renewcommand{\theequation}{S.\arabic{equation}}
\setcounter{subsection}{0}\renewcommand{\thesubsection}{S.\arabic{subsection}}
\renewcommand{\thetheorem}{S\arabic{theorem}}
\renewcommand{\thelemma}{S\arabic{lemma}}
\renewcommand{\theremark}{S\arabic{remark}}
\renewcommand{\theequation}{S\arabic{equation}}
\renewcommand{\thesection}{S\arabic{section}}
\renewcommand{\thetable}{S\arabic{table}}

\section{Preliminaries}\label{sec:app_pre}

\noindent\textbf{Notations:} Unless otherwise specified, the norm $\|\cdot\|$ refers to the Euclidean/Frobenius norm for a vector/matrix.

\noindent\textbf{Inverse of a $2\times2$ block matrix:}
Let a $(K+1) \times (K+1)$ matrix $\M$ be partitioned into a block form
\beqrs
\M=\left(\begin{array}{ll}
    \A & \b \\
    \b\trans  & d
    \end{array}\right),
\eeqrs
where $\A$ is a $K\times K$ matrix, $\b$ is a $K$ dimensional column vector, $d$ is a constant. Then, Theorem 2.1 of \citet{LuShiou2002} shows that
\beqr\label{eq:mat_inv}
 \M^{-1}=\left(\begin{array}{cc}
    \A^{-1}+\frac{1}{k} \A^{-1} \b \b\trans \A^{-1} & -\frac{1}{k} \A^{-1} \b \\
    -\frac{1}{k} \b\trans \A^{-1} & \frac{1}{k}
    \end{array}\right),
\eeqr
where $k=d-\b\trans \A^{-1} \b$ .

\section{Intermediate lemmas} \label{sec:app_intermediate_result}
We first prove some intermediate results given in the following lemmas, which are useful for the proof of main theorems.

\begin{lemma}\label{lemma:app_ini_consistency}
    Suppose \Cref{ass:sieve} and regularity conditions \ref{cond:compact}--\ref{cond:unif_bound2} in \Cref{sec:app_regularity_conditions} hold. We have 
    $\|(\check{\bm\gamma},\check\tau)-(\bm\gamma_0,\tau_0)\|\convp 0$.
\end{lemma}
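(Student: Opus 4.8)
The plan is to view $(\check{\bm\gamma},\check\tau)$ as an extremum estimator minimizing the GMM criterion $Q_K(\bm\gamma,\tau)=\G_K(\bm\gamma,\tau)\trans\bm\Omega\,\G_K(\bm\gamma,\tau)$ over the compact set $\Gamma\times\calT$, and to verify the two classical ingredients of a consistency proof \citep{NeweyMcFadden1994}, adapted to the growing dimension of the moment vector: (a) a well-separated identification condition for the limiting criterion, and (b) a uniform law of large numbers for the expanding moment vector. Throughout I use the identity weighting $\bm\Omega=\I_{K+1}$ recommended in \Cref{remark:moment_sel}, whose eigenvalues are bounded away from zero and infinity, so that $Q_K(\bm\gamma,\tau)\asymp\|\G_K(\bm\gamma,\tau)\|^2$; a general $\bm\Omega$ with bounded eigenvalues is handled identically.

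First I would bound the criterion at the truth. Since $\mE\{\g_K(\O;\bm\gamma_0,\tau_0)\}=\0$ by \eqref{eq:iden_gamma_sieve} and the ATE moment, we have $\mE\|\G_K(\bm\gamma_0,\tau_0)\|^2=N^{-1}\tr\bm\Upsilon_{(K+1)\times(K+1)}$, and using $\|\u_K\|\le\zeta(K)$ together with the integrability in \ref{cond:unif_bound1}, this is $O(\zeta(K)^2/N)=o(1)$ by \Cref{ass:sieve}(iii). Hence $Q_K(\bm\gamma_0,\tau_0)=o_p(1)$, and optimality of $(\check{\bm\gamma},\check\tau)$ gives $Q_K(\check{\bm\gamma},\check\tau)\le Q_K(\bm\gamma_0,\tau_0)=o_p(1)$.

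The main obstacle is the uniform law of large numbers $\sup_{(\bm\gamma,\tau)\in\Gamma\times\calT}\|\G_K(\bm\gamma,\tau)-\bar\g_K(\bm\gamma,\tau)\|=o_p(1)$, where $\bar\g_K=\mE\g_K$, because the moment vector has dimension $K+1\to\infty$. Pointwise, the same second-moment computation yields $\mE\|\G_K(\bm\gamma,\tau)-\bar\g_K(\bm\gamma,\tau)\|^2=O(\zeta(K)^2/N)$. To upgrade to a supremum over the fixed $(p+1)$-dimensional compact set, I would cover $\Gamma\times\calT$ by balls of radius $\eta$ and control (i) the maximum of the centered norm over the centers by a maximal inequality, and (ii) the oscillation within each ball through the Lipschitz modulus of $\g_K$ in $(\bm\gamma,\tau)$, which by \ref{cond:diff}--\ref{cond:unif_bound2} scales like $\zeta(K)$ times the dominating envelopes for $\nabla_{\bm\gamma}h$. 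The entropy cost and the $\zeta(K)$ modulus combine so that the requirement $\zeta(K)^2K/N=o(1)$ in \Cref{ass:sieve}(iii) is precisely what drives the supremum to zero; this controlled trade-off between the size of the moment system and the sample size is the crux of the argument.

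Finally I would assemble the pieces. Boundedness of $\bar\g_K$ (from \ref{cond:unif_bound1}--\ref{cond:unif_bound2}) together with the uniform law of large numbers yields $\sup_{(\bm\gamma,\tau)}|Q_K(\bm\gamma,\tau)-\bar Q_K(\bm\gamma,\tau)|=o_p(1)$ with $\bar Q_K=\bar\g_K\trans\bm\Omega\bar\g_K$, so that $\bar Q_K(\check{\bm\gamma},\check\tau)\le Q_K(\check{\bm\gamma},\check\tau)+o_p(1)=o_p(1)$. It then remains to invoke identification: by \ref{cond:compact}, $(\bm\gamma_0,\tau_0)$ is the unique zero of $\bar\g_K$ on the compact set (the last component pins down $\tau$ given $\bm\gamma$, and the first block pins down $\bm\gamma_0$), and since $\|\bar\g_K(\bm\gamma,\tau)\|$ is nondecreasing in $K$ along the nested sieve span (up to the conditioning in \Cref{ass:sieve}(i)), the well-separation property $\inf_{\|(\bm\gamma,\tau)-(\bm\gamma_0,\tau_0)\|\ge\epsilon}\bar Q_K(\bm\gamma,\tau)\ge\delta(\epsilon)>0$ holds uniformly in large $K$. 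Combining well-separation with $\bar Q_K(\check{\bm\gamma},\check\tau)=o_p(1)$ forces $\|(\check{\bm\gamma},\check\tau)-(\bm\gamma_0,\tau_0)\|\convp0$, which is the claim.
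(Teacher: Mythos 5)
Your overall scheme --- extremum-estimator consistency via (a) uniform convergence of the criterion and (b) a well-separated minimum --- is the same as the paper's, and two of your three ingredients are sound: the basic inequality $Q_K(\check{\bm\gamma},\check\tau)\le Q_K(\bm\gamma_0,\tau_0)=O_p(\zeta(K)^2/N)=o_p(1)$ is correct, and your well-separation argument (uniqueness at a fixed $K_0$ from \ref{cond:compact}, then monotonicity of $\|\bar\g_K\|$ along a nested sieve) is actually more explicit than the paper, which simply invokes the general M-estimation consistency theorem. The genuine gap is the step you yourself call the crux: the uniform law of large numbers $\sup_{(\bm\gamma,\tau)\in\Gamma\times\calT}\|\G_K(\bm\gamma,\tau)-\bar\g_K(\bm\gamma,\tau)\|=o_p(1)$ for the $(K+1)$-dimensional moment vector, which your covering argument does not deliver under \Cref{ass:sieve}(iii). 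The Lipschitz modulus of $\g_K$ in $(\bm\gamma,\tau)$ carries a factor $\|\u_K\|$ and is therefore of order $\zeta(K)$, so to kill the oscillation term you must take the cover radius $\eta=o(1/\zeta(K))$, forcing the number of balls to be $M\asymp\zeta(K)^{p+1}$, growing with $N$. Over the centers you have only second-moment control from \ref{cond:unif_bound1}--\ref{cond:unif_bound2} (no exponential or higher moments are assumed), so the only available bound is Chebyshev plus a union bound, giving $\pr\left(\max_{j\le M}\|(\G_K-\bar\g_K)(\theta_j)\|>\delta\right)\lesssim M\,\zeta(K)^2/(N\delta^2)\asymp\zeta(K)^{p+3}/N$, and $\zeta(K)^{p+3}/N\to0$ is strictly stronger than $\zeta(K)^2K/N\to0$: for B-splines ($\zeta(K)\asymp\sqrt K$) it demands $K^{(p+3)/2}=o(N)$ versus $K^2=o(N)$, and for power series ($\zeta(K)\asymp K$) it demands $K^{p+3}=o(N)$ versus $K^3=o(N)$. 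So your assertion that \Cref{ass:sieve}(iii) is ``precisely'' what closes this step is not delivered by the argument sketched; making the vector ULLN work at this rate requires bracketing/chaining maximal inequalities that aggregate variance across the $K$ components (so the total scales like $\mE\{\|\u_K\|^2(\cdot)\}/N=O(\zeta(K)^2/N)$ rather than $M$ times a pointwise bound), together with second-moment versions of those inequalities --- none of which is elementary.

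The paper avoids the vector ULLN altogether, and this is the substantive structural difference. It establishes only \emph{pointwise} convergence of the scalar criterion $|\wh{Q}_N(\bm\gamma,\tau)-Q(\bm\gamma,\tau)|=o_p(1)$ and then upgrades to uniformity via Corollary 2.2 of Newey (1991), whose hypothesis is that the criterion has an $O_p(1)$ Lipschitz constant in the parameter. That hypothesis holds here because of a projection trick tied to the weighting $\wh{\bm\Theta}_{(K+1)\times(K+1)}^{-1}$ (equivalently, identity weighting after orthonormalization): the gradient of $\wh{Q}_N$ involves $\bigl\{\tfrac1N\sum_i\nabla_{\bm\gamma}h_i\,\u_K(Z_i,A_i,X_i)\trans\bigr\}\bigl\{\tfrac1N\sum_i\u_K(Z_i,A_i,X_i)\u_K(Z_i,A_i,X_i)\trans\bigr\}^{-1}\bigl\{\tfrac1N\sum_i(Y_i-h_i)\u_K(Z_i,A_i,X_i)\bigr\}$, and the middle inverse turns the left factor into a least-squares projection $\wh\Xi\trans\u_K$ of $\nabla_{\bm\gamma}h$, whose empirical second moment is bounded by that of $\nabla_{\bm\gamma}h$ itself; Cauchy--Schwarz then gives $\sup_{(\bm\gamma,\tau)}\|\nabla_{\bm\gamma,\tau}\wh{Q}_N\|=O_p(1)$ under \ref{cond:unif_bound1}--\ref{cond:unif_bound2} alone, with no $\zeta(K)$ factor ever appearing. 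To repair your proof, either import genuine bracketing-entropy maximal inequalities for the vector ULLN, or restructure as the paper does: work with the quadratic-form criterion and exploit the inverse-Gram weighting so that stochastic equicontinuity comes for free.
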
 

\begin{proof}
    Recall that $(\check{\bm\gamma},\check\tau)=\amin_{(\bm\gamma,\tau)\in\Gamma\times\calT}\wh{Q}_N(\bm\gamma, \tau)$ with
    \begin{align}\label{eq:app_empirical_criterion}
    \wh{Q}_N(\bm\gamma, \tau)=\G_K(\bm\gamma,\tau)\trans \wh{\bm\Theta}_{(K+1)\times(K+1)}^{-1}\G_K(\bm\gamma,\tau).
    \end{align}
    where 
    \begin{align*}
        \wh{\bm\Theta}_{(K+1)\times(K+1)} = \left(
            \begin{array}{cc}
                \frac{1}{N}\sum_{i=1}^{N}\u_K(Z_i,A_i,X_i)\u_K(Z_i,A_i,X_i)\trans & \0\\
                \0\trans & 1
            \end{array}
            \right).
    \end{align*}
    Note that if the basis functions are orthonormalized, then $\wh{\bm\Theta}_{(K+1)\times(K+1)}=I_{K+1}$ as indicated in \Cref{remark:moment_sel}. We maintain this general notation here for the sake of clarity.
    Meanwhile, the true value $(\bm\gamma,\tau)=\amin_{(\bm\gamma,\tau)\in\Gamma\times\calT}Q(\bm\gamma, \tau)$ with
    \begin{align*}
    Q(\bm\gamma, \tau)=\mE\{\g_K(\O;\bm\gamma,\tau)\}\trans {\bm\Theta}_{(K+1)\times(K+1)}^{-1} \mE\{\g_K(\O;\bm\gamma,\tau)\},
    \end{align*}
    where
    \begin{align*}
        {\bm\Theta}_{(K+1)\times(K+1)} = \left(
            \begin{array}{cc}
                \mE\{\u_K(Z,A,X)\u_K(Z,A,X)\trans\} & \0\\
                \0\trans & 1
            \end{array}
            \right).
    \end{align*}
    In what follows, we will show that
    \begin{align}\label{eq:app_consis_0}
        \sup_{(\bm\gamma,\tau)\in\Gamma\times\calT}\left|\wh{Q}_N(\bm\gamma,\tau)-Q(\bm\gamma,\tau)\right| \convp 0.
    \end{align}
    Then by the general theory about consistency of M-estimators \citep[][p. 45]{Vaart1998}, the consistency of $(\check{\bm\gamma},\check\tau)$ follows immediately.
    By the triangle inequality,
\begin{align}
    & \left|\wh{Q}_N(\bm\gamma, \tau)-Q(\bm\gamma, \tau)\right| \nonumber\\
    \le & \left|\left[\G_K(\bm\gamma,\tau)-\mE\{\g_K(\O;\bm\gamma,\tau)\}\right]\trans 
    \wh{\bm\Theta}_{(K+1)\times(K+1)}^{-1}
    \left[\G_K(\bm\gamma,\tau)-\mE\{\g_K(\O;\bm\gamma,\tau)\}\right]\right| \label{eq:app_consis_T1}\\
    & +\left|\mE\{\g_K(\O;\bm\gamma,\tau)\}\trans\left\{\wh{\bm\Theta}_{(K+1)\times(K+1)}^{-1}-{\bm\Theta}_{(K+1)\times(K+1)}^{-1}\right\}\mE\{\g_K(\O;\bm\gamma,\tau)\}\right| \label{eq:app_consis_T2}\\
    & +2\left|\mE\{\g_K(\O;\bm\gamma,\tau)\}\trans {\bm\Theta}_{(K+1)\times(K+1)}^{-1}\left[\G_K(\bm\gamma,\tau)-\mE\{\g_K(\O;\bm\gamma,\tau)\}\right]\right|.\label{eq:app_consis_T3}
\end{align}
Next we show \eqref{eq:app_consis_T1}--\eqref{eq:app_consis_T3} are all $o_p(1)$. Consider the term \eqref{eq:app_consis_T1}. Let
$\lambda_{\max}(\bm{A})$ (resp. $\lambda_{\min}(\bm{A})$) denote the maximum (resp. minimum) eigenvalue of a matrix $\bm{A}$. Then we have 
\begin{align}\label{eq:app_consis_T1.1}
|\eqref{eq:app_consis_T1}|^2
\le&~ \|\G_K(\bm\gamma,\tau)-\mE\{\g_K(\O;\bm\gamma,\tau)\}\|^2\left\|\wh{\bm\Theta}_{(K+1)\times(K+1)}^{-1}\left[\G_K(\bm\gamma,\tau)-\mE\{\g_K(\O;\bm\gamma,\tau)\}\right]\right\|^2\nonumber\\ 
\le&~ \lambda_{\max}\left(\wh{\bm\Theta}_{(K+1)\times(K+1)}^{-2}\right)\|\G_K(\bm\gamma,\tau)-\mE\{\g_K(\O;\bm\gamma,\tau)\}\|^4,
\end{align}
where the last inequality holds due to the inequality of the Frobenius norm $\|\A\B\|^2\le\lambda_{\max}(\A\trans\A)\|\B\|^2$ for matrices $\A,\B$.
Note that 
\begin{align*}
    \lambda_{\min}\left(\wh{\bm\Theta}_{(K+1)\times(K+1)}\right)
    =&~ \min\left[\lambda_{\min}\left\{\frac{1}{N}\sum_{i=1}^{N}\u_K(Z_i,A_i,X_i)\u_K(Z_i,A_i,X_i)\trans\right\},1\right],
\end{align*}
and 
\begin{align*}
    &\lambda_{\min}\left\{\frac{1}{N}\sum_{i=1}^{N}\u_K(Z_i,A_i,X_i)\u_K(Z_i,A_i,X_i)\trans\right\} \\ 
    =&~  \min_{\|\nu\|=1} \nu\trans\left\{\frac{1}{N}\sum_{i=1}^{N}\u_K(Z_i,A_i,X_i)\u_K(Z_i,A_i,X_i)\trans\right\} \nu \\ 
    =&~ \min_{\|\nu\|=1} \nu\trans\mE\left\{\u_K(Z,A,X)\u_K(Z,A,X)\trans\right\} \nu \\ 
    &+ \min_{\|\nu\|=1} \nu\trans\left\{\frac{1}{N}\sum_{i=1}^{N}\u_K(Z_i,A_i,X_i)\u_K(Z_i,A_i,X_i)\trans-\mE\left\{\u_K(Z,A,X)\u_K(Z,A,X)\trans\right\}\right\} \nu \\ 
    \ge&~ \lambda_{\min}\left(\mE\left\{\u_K(Z,A,X)\u_K(Z,A,X)\trans\right\}\right) \\
    &- \left\|\frac{1}{N}\sum_{i=1}^{N}\u_K(Z_i,A_i,X_i)\u_K(Z_i,A_i,X_i)\trans-\mE\left\{\u_K(Z,A,X)\u_K(Z,A,X)\trans\right\}\right\| \\ 
    =&~  \underline{c}-o_p(1),
\end{align*}
for some consistent $\underline{c}>0$,
where the last equality is by \Cref{ass:sieve}(i) and 
\begin{align}\label{eq:app_consis_T1.2.1}
    & \mE\left\{\left\|\frac{1}{N}\sum_{i=1}^{N}\u_K(Z_i,A_i,X_i)\u_K(Z_i,A_i,X_i)\trans-\mE\left\{\u_K(Z,A,X)\u_K(Z,A,X)\trans\right\}\right\|^2\right\} \nonumber\\ 
    =&~ \frac{1}{N} \mE\left\{\Big\|\u_K(Z,A,X)\u_K(Z,A,X)\trans-\mE\left\{\u_K(Z,A,X)\u_K(Z,A,X)\trans\right\}\Big\|^2\right\} \nonumber\\
    \le&~ \frac{1}{N} \mE\left\{\Big\|\u_K(Z,A,X)\u_K(Z,A,X)\trans\Big\|^2\right\}\nonumber\\ 
    =&~ \frac{1}{N} \mE\{\u_K(Z,A,X)\trans\u_K(Z,A,X)\u_K(Z,A,X)\trans\u_K(Z,A,X)\}\nonumber\\ 
    \le&~ \frac{1}{N}\sup_{(z,a,x)\in\calZ\times\{0,1\}\times\calX} \|\u_K(z,a,x)\|^2 \cdot\mE\left\{\|\u_K(Z,A,X)\|^2\right\} \nonumber\\ 
    \le&~\frac{1}{N}\sup_{(z,a,x)\in\calZ\times\{0,1\}\times\calX} \|\u_K(z,a,x)\|^2 \cdot K\lambda_{\max}\left(\mE\left\{\u_K(Z,A,X)\u_K(Z,A,X)\trans\right\}\right) \nonumber\\ 
    =&~\frac{1}{N}\cdot O\left\{\zeta(K)^2\right\}\cdot K \cdot O(1) = O\left\{\zeta(K)^2K/N\right\}=o(1).
\end{align}
Then we have 
\begin{align}\label{eq:app_consis_T1.2}
    \lambda_{\max}\left(\wh{\bm\Theta}_{(K+1)\times(K+1)}^{-2}\right) 
    = \lambda_{\min}\left(\wh{\bm\Theta}_{(K+1)\times(K+1)}\right)^{-2} = O_p(1).
\end{align}
Note that
\begin{align}\label{eq:app_consis_T1.3}
& E\left\{\|\G_K(\bm\gamma,\tau)-\mE\{\g_K(\O;\bm\gamma,\tau)\}\|^2\right\} \nonumber\\ 
=&~  \mE\left\{\left\|\frac{1}{N}\sum_{i=1}^N \g_K(\O_i;\bm\gamma,\tau)-\mE[\g_K(\O;\bm\gamma,\tau)]\right\|^2\right\}\nonumber\\ 
=&~ \frac{1}{N} \mE\left\{\left\| \g_K(\O;\bm\gamma,\tau)-\mE[\g_K(\O;\bm\gamma,\tau)]\right\|^2\right\} \nonumber\\ 
\le&~ \frac{1}{N} \mE\left\{\left\| \g_K(\O;\bm\gamma,\tau)\right\|^2\right\} \nonumber\\
=&~ \frac{1}{N} \mE\left[\{Y-h(W,A,X;\bm\gamma)\}^2\left\|\u_K(Z,A,X)\right\|^2\right]\nonumber\\ 
&~ +\frac{1}{N} \mE\left[\left\{\tau-h(W,1,X;\bm\gamma)+h(W,0,X;\bm\gamma)\right\}^2\right]\nonumber\\ 
=&~ \frac{1}{N}\sup_{(z,a,x)\in\calZ\times\{0,1\}\times\calX} \|\u_K(z,a,x)\|^2 \cdot \mE\left[\{Y-h(W,A,X;\bm\gamma)\}^2\right]+O(1/N)\nonumber\\ 
=&~  \frac{1}{N}\cdot O\left\{\zeta(K)^2\right\} O(1) + O(1/N) = O\left\{\zeta(K)^2/N\right\}=o(1).
\end{align}
By Chebyshev's inequality, we have $\|\G_K(\bm\gamma,\tau)-\mE\{\g_K(\O;\bm\gamma,\tau)\}\|=o_p(1)$. 
Combining \eqref{eq:app_consis_T1.1}, \eqref{eq:app_consis_T1.2} and \eqref{eq:app_consis_T1.3} yields that $\eqref{eq:app_consis_T1}=o_p(1)$. Next, we consider the term \eqref{eq:app_consis_T2}. We have 
\begin{align*} 
    |\eqref{eq:app_consis_T2}|^2 
    \le \left\|\wh{\bm\Theta}_{(K+1)\times(K+1)}^{-1}-{\bm\Theta}_{(K+1)\times(K+1)}^{-1}\right\|^2\|\mE\{\g_K(\bm\gamma,\tau)\}\|^4.
\end{align*}
Note that 
\begin{align}\label{eq_app_W_inverse_rate}
    & \left\|\wh{\bm\Theta}_{(K+1)\times(K+1)}^{-1}-{\bm\Theta}_{(K+1)\times(K+1)}^{-1}\right\|^2 \nonumber\\
    =& \left\|\wh{\bm\Theta}_{(K+1)\times(K+1)}^{-1}\left\{\wh{\bm\Theta}_{(K+1)\times(K+1)}-{\bm\Theta}_{(K+1)\times(K+1)}\right\}{\bm\Theta}_{(K+1)\times(K+1)}^{-1}\right\|^2\nonumber\\ 
    \le& \lambda_{\max}\left(\wh{\bm\Theta}_{(K+1)\times(K+1)}^{-2}\right) 
    \left\|\left\{\wh{\bm\Theta}_{(K+1)\times(K+1)}-{\bm\Theta}_{(K+1)\times(K+1)}\right\}{\bm\Theta}_{(K+1)\times(K+1)}^{-1}\right\|^2 \nonumber\\
    \le& \lambda_{\max}\left(\wh{\bm\Theta}_{(K+1)\times(K+1)}^{-2}\right) 
    \lambda_{\max}\left({\bm\Theta}_{(K+1)\times(K+1)}^{-2}\right) 
    \left\| \wh{\bm\Theta}_{(K+1)\times(K+1)}-{\bm\Theta}_{(K+1)\times(K+1)} \right\|^2 \nonumber\\ 
    =&~ O_p(1)\cdot O(1) \cdot O_p\left\{\zeta(K)^2K/N\right\}=o_p(1).
\end{align}
where the last equality follows \Cref{ass:sieve}(i), \eqref{eq:app_consis_T1.2.1} and \eqref{eq:app_consis_T1.2}. Moreover,
    \begin{align*}
        & \|\mE\{\g_K(\bm\gamma,\tau)\}\|^2 \\
        =&~ \mE[\{Y-h(W,A,X;\bm\gamma)\}\u_K(Z,A,X)]\trans\mE[\{Y-h(W,A,X;\bm\gamma)\}\u_K(Z,A,X)] \\ 
        &~ +  \left|\mE\{\tau-h(W,1,X;\bm\gamma)+h(W,0,X;\bm\gamma)\}\right|^2 \\ 
        \le&~ \lambda_{\max}\{\u_K(Z,A,X)\u_K(Z,A,X)\trans\} \mE[\{Y-h(W,A,X;\bm\gamma)\}\u_K(Z,A,X)]\trans \\ 
        &\qquad \times \mE\{\u_K(Z,A,X)\u_K(Z,A,X)\trans\}^{-1}\mE[\{Y-h(W,A,X;\bm\gamma)\}\u_K(Z,A,X)] +O(1) \\ 
        \le&~\lambda_{\max}\{\u_K(Z,A,X)\u_K(Z,A,X)\trans\} \mE[\{Y-h(W,A,X;\bm\gamma)\}^2]+O(1)\\ 
        =&~  O(1), 
    \end{align*} 
where the second inequality is because 
\begin{align*}
    &\mE[\{Y-h(W,A,X;\bm\gamma)\}\u_K(Z,A,X)]\trans  \mE\{\u_K(Z,A,X)\u_K(Z,A,X)\trans\}^{-1} \\ 
    &\qquad \times \mE[\{Y-h(W,A,X;\bm\gamma)\}\u_K(Z,A,X)] \\ 
    =&~ \mE\left|\mE[\{Y-h(W,A,X;\bm\gamma)\}\u_K(Z,A,X)]\trans  \mE\{\u_K(Z,A,X)\u_K(Z,A,X)\trans\}^{-1}\u_K(Z,A,X)\right|^2\\ 
    =:&~ \mE\left\{a_K(Z,A,X)^2\right\}, 
\end{align*}
and $a_K(Z,A,X)$ is the least squared projection of $Y-h(W,A,X;\bm\gamma)$ onto the space linear spanned by $\{\u_K(Z,A,X)\}$.
It follows that $\eqref{eq:app_consis_T2}=o_p(1)$. We can also show $\eqref{eq:app_consis_T3}=o_p(1)$ following a very similar procedure as in calculating \eqref{eq:app_consis_T1} and \eqref{eq:app_consis_T2}.  Therefore, for each $(\bm\gamma,\tau)\in\Gamma\times\calT$, we have $|\wh{Q}_N(\bm\gamma,\tau)-Q(\bm\gamma,\tau)|=o_p(1)$.

We next derive the uniform convergence. Note that 
\begin{small}
    \begin{align}
    & \frac{1}{2} \left\|\nabla_{\bm\gamma,\tau} \wh{Q}_N(\bm\gamma, \tau)\right\|^2  \nonumber\\ 
    =&~ 
    \left\|\left(\begin{array}{cc}
        -\frac{1}{N}\sum_{i=1}^{N} \u_K(Z_i,A_i,X_i)\nabla_{\bm\gamma} h(W_i,A_i,X_i;\bm\gamma)\trans & \0 \nonumber\\
        -\frac{1}{N}\sum_{i=1}^{N} \nabla_{\bm\gamma}\left\{ h(W_i,1,X_i;\bm\gamma) - h(W_i,0,X_i;\bm\gamma)\right\}\trans & 1
        \end{array}\right)\trans
    \wh{\bm\Theta}_{(K+1)\times(K+1)}^{-1} \G_K(\bm\gamma,\tau)\right\|^2 \nonumber\\ 
    =&~ \Bigg\|\left\{\frac{1}{N}\sum_{i=1}^{N} \nabla_{\bm\gamma} h(W_i,A_i,X_i;\bm\gamma)\u_K(Z_i,A_i,X_i)\trans\right\}\left\{\frac{1}{N}\sum_{i=1}^{N} \u_K(Z_i,A_i,X_i)\u_K(Z_i,A_i,X_i)\trans\right\}^{-1} \nonumber\\ 
    &\qquad\times \left\{\frac{1}{N}\sum_{i=1}^{N} \{Y-h(W_i,A_i,X_i;\bm\gamma)\}\u_K(Z_i,A_i,X_i)\right\}\Bigg\|^2 \label{eq:app_consis_deriv_T1}\\
    &~  +\left\|\left\{\frac{1}{N}\sum_{i=1}^{N} \nabla_{\bm\gamma}\left\{ h(W_i,1,X_i;\bm\gamma) - h(W_i,0,X_i;\bm\gamma)\right\}\right\}\left\{\frac{1}{N}\sum_{i=1}^{N}\{\tau - h(W_i,1,X_i;\bm\gamma)+h(W_i,0,X_i;\bm\gamma)\}\right\}\right\|^2 \nonumber\\ 
    &~ + \left|\frac{1}{N}\sum_{i=1}^{N}\{\tau - h(W_i,1,X_i;\bm\gamma)+h(W_i,0,X_i;\bm\gamma)\}\right|^2.\nonumber
    \end{align}
\end{small}
Consider the term \eqref{eq:app_consis_deriv_T1}. We denote 
    \begin{align}\label{eq:app_consis_deriv_T1.1}
    \wh\Xi = \left\{\frac{1}{N}\sum_{i=1}^{N} \u_K(Z_i,A_i,X_i)\u_K(Z_i,A_i,X_i)\trans\right\}^{-1}\left\{\frac{1}{N}\sum_{i=1}^{N} \u_K(Z_i,A_i,X_i)\nabla_{\bm\gamma} h(W_i,A_i,X_i;\bm\gamma)\trans\right\},
    \end{align}
Then  $\wh\Xi\trans\u_K(z,a,x)$ is the $\ell_2$projection  $\nabla_{\bm\gamma} h(w,a,x;\bm\gamma)$ onto the space linearly spanned by $\{\u_K(z,a,x)\}$. Thus we have
\begin{align*}
\frac{1}{N}\sum_{i=1}^{N} \left\|\wh\Xi\trans\u_K(Z_i,A_i,X_i)\right\|^2 \le \frac{1}{N}\sum_{i=1}^{N} \left\|\nabla_{\bm\gamma} h(W_i,A_i,X_i;\bm\gamma)\right\|^2.
\end{align*}
Then, by Cauchy-Schwarz inequality, we have 
\begin{align*}
    \eqref{eq:app_consis_deriv_T1} =&~  \left\|\frac{1}{N}\sum_{i=1}^{N} \{Y-h(W_i,A_i,X_i;\bm\gamma)\}\wh\Xi\trans\u_K(Z_i,A_i,X_i)\right\|^2 \\ 
    &~ \le \frac{1}{N}\sum_{i=1}^{N} \{Y-h(W_i,A_i,X_i;\bm\gamma)\}^2\cdot \frac{1}{N}\sum_{i=1}^{N} \left\|\wh\Xi\trans\u_K(Z_i,A_i,X_i)\right\|^2 \\ 
    &~ \le \frac{1}{N}\sum_{i=1}^{N} \{Y-h(W_i,A_i,X_i;\bm\gamma)\}^2\cdot \frac{1}{N}\sum_{i=1}^{N} \left\|\nabla_{\bm\gamma} h(W_i,A_i,X_i;\bm\gamma)\right\|^2 .
\end{align*}
By the uniform laws of large numbers, with regularity conditions \ref{cond:compact}--\ref{cond:unif_bound2}, Lemma 2.4 of \cite{NeweyMcFadden1994} implies that $\sup_{\bm\gamma\in\Gamma}\eqref{eq:app_consis_deriv_T1}=O_p(1)$. With some similar arguments, we finally obtain $\sup_{(\bm\gamma,\tau)\in\Gamma\times\calT}\|\nabla_{\bm\gamma,\tau} \wh{Q}_N(\bm\gamma, \tau)\|=O_p(1)$. Similarly, we can obtain $\sup_{(\bm\gamma,\tau)\in\Gamma\times\calT}\|\nabla_{\bm\gamma,\tau} Q(\bm\gamma, \tau)\|=O_p(1)$. Then the uniform convergence \eqref{eq:app_consis_0} follows immediately from Corollary 2.2 of \cite{Newey1991}, indicating consistency of $(\check{\bm\gamma},\check\tau)$.
\end{proof}

\begin{lemma}\label{lemma:app_ini_rate}
    Suppose \Cref{ass:sieve} and regularity conditions \ref{cond:compact}--\ref{cond:unif_bound2} in \Cref{sec:app_regularity_conditions} hold. We have 
    $\|(\check{\bm\gamma},\check\tau)-(\bm\gamma_0,\tau_0)\|=O_p(N^{-1/2})$.
\end{lemma}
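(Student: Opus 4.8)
The plan is to build on the consistency from \Cref{lemma:app_ini_consistency} and to linearize the first-order condition of the criterion \eqref{eq:app_empirical_criterion}. Write $\bm\vartheta=(\bm\gamma\trans,\tau)\trans$, $\bm\vartheta_0=(\bm\gamma_0\trans,\tau_0)\trans$, and let $\wh{\B}_N\triangleq\nabla_{\bm\gamma,\tau}\G_K(\check{\bm\vartheta})\trans$ denote the sample Jacobian of size $(K+1)\times(p+1)$, mirroring $\B$. The stationarity condition for the minimizer reads $\wh{\B}_N\trans\wh{\bm\Theta}_{(K+1)\times(K+1)}^{-1}\G_K(\check{\bm\vartheta})=\0$. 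Applying an integral mean-value expansion $\G_K(\check{\bm\vartheta})=\G_K(\bm\vartheta_0)+\wt{\B}_N(\check{\bm\vartheta}-\bm\vartheta_0)$, where $\wt{\B}_N=\int_0^1\nabla_{\bm\gamma,\tau}\G_K\{\bm\vartheta_0+s(\check{\bm\vartheta}-\bm\vartheta_0)\}\trans\,ds$, and solving gives the representation
\begin{align*}
\check{\bm\vartheta}-\bm\vartheta_0=-\left(\wh{\B}_N\trans\wh{\bm\Theta}_{(K+1)\times(K+1)}^{-1}\wt{\B}_N\right)^{-1}\wh{\B}_N\trans\wh{\bm\Theta}_{(K+1)\times(K+1)}^{-1}\G_K(\bm\vartheta_0).
\end{align*}
The rate is thus governed by a $(p+1)\times(p+1)$ ``sandwich'' matrix and a projected score vector $\wh{\B}_N\trans\wh{\bm\Theta}^{-1}\G_K(\bm\vartheta_0)$, both of fixed dimension even though $K\to\infty$.

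Next I would show that the projected score is $O_p(N^{-1/2})$. One cannot bound $\|\G_K(\bm\vartheta_0)\|$ directly, since $\mE\|\G_K(\bm\vartheta_0)\|^2=O\{\zeta(K)^2/N\}$ diverges after scaling by $\sqrt N$; the growing dimension must instead be absorbed by the projection. First I would replace $\wh{\B}_N$ and $\wh{\bm\Theta}^{-1}$ by their population counterparts $\B$ and $\bm\Theta^{-1}$, controlling the errors with the uniform bounds underlying \Cref{lemma:app_ini_consistency} together with $\|\wh{\bm\Theta}-\bm\Theta\|=O_p\{\zeta(K)\sqrt{K/N}\}$ from \eqref{eq:app_consis_T1.2.1}. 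It then suffices to bound the mean-zero vector $\B\trans\bm\Theta^{-1}\G_K(\bm\vartheta_0)$. Using the block-diagonal form of $\bm\Theta$, its $\bm\gamma$-block equals $-N^{-1}\sum_i\{Y_i-h(W_i,A_i,X_i;\bm\gamma_0)\}(\Pi_K\nabla_{\bm\gamma}h)(Z_i,A_i,X_i)$ plus the $\tau$-contribution, where $\Pi_K$ denotes the $L_2$ projection onto $\mathrm{span}\{\u_K\}$. Since $\mE\|\Pi_K\nabla_{\bm\gamma}h\|^2\le\mE\|\nabla_{\bm\gamma}h\|^2<\infty$ uniformly in $K$ by condition \ref{cond:unif_bound1}, and $Y-h$ has bounded conditional variance, the variance of $\B\trans\bm\Theta^{-1}\G_K(\bm\vartheta_0)$ equals $N^{-1}\B\trans\bm\Theta^{-1}\bm\Upsilon_{(K+1)\times(K+1)}\bm\Theta^{-1}\B$, whose eigenvalues are bounded uniformly in $K$ by \Cref{ass:sieve}(i), the overlap in \Cref{ass:causal_and_proxy}(ii), and the regularity conditions. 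Hence the projected score is $O_p(N^{-1/2})$.

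Finally I would show $\wh{\B}_N\trans\wh{\bm\Theta}^{-1}\wt{\B}_N\convp\B\trans\bm\Theta^{-1}\B$, which is nonsingular with smallest eigenvalue bounded away from zero uniformly in $K$. Consistency of $\check{\bm\vartheta}$ and of all intermediate points along the segment lets us replace them by $\bm\vartheta_0$, while \Cref{ass:sieve}(iii) controls the sample-to-population gaps of the moment matrices at the rate in \eqref{eq:app_consis_T1.2.1}. Because the lower-right $\tau$-block of $\B$ equals $1$, the block structure reduces invertibility of $\B\trans\bm\Theta^{-1}\B$ to that of the projected outer product of $\nabla_{\bm\gamma}h$, bounded below by \Cref{ass:sieve}(i) together with the rank/identification condition imposed in \Cref{sec:setup}. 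Combining the bounded inverse of the sandwich matrix with the $O_p(N^{-1/2})$ score in the displayed representation yields $\|(\check{\bm\gamma},\check\tau)-(\bm\gamma_0,\tau_0)\|=O_p(N^{-1/2})$. I expect the main obstacle to be the uniform-in-$K$ control: since the raw moment vector has diverging norm, every quantity must be routed through the fixed $(p+1)$-dimensional projection $\B\trans\bm\Theta^{-1}$, and one must verify that the sample analogues of this projection and of the sandwich matrix converge fast enough—at rates dictated by $\zeta(K)^2K/N=o(1)$—that the growing number of moments does not degrade the $\sqrt N$ rate.
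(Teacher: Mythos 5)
Your proposal is correct and takes essentially the same route as the paper's proof: both linearize the first-order condition of \eqref{eq:app_empirical_criterion}, bound the fixed-dimensional projected score $\B_{(K+1)\times(p+1)}\trans{\bm\Theta}_{(K+1)\times(K+1)}^{-1}\G_K(\bm\gamma_0,\tau_0)$ by computing its variance and invoking the $L_2$-projection inequality $\mE\|\Pi_K\nabla_{\bm\gamma}h\|^2\le\mE\|\nabla_{\bm\gamma}h\|^2<\infty$ (the paper's $\Xi\trans\u_K$ argument), and control the $(p+1)\times(p+1)$ sandwich matrix, with \Cref{ass:sieve} absorbing the growth of $K$ exactly as you describe. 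The only substantive differences are that your exact integral mean-value expansion of $\G_K$ dispenses with the second-derivative remainder that the paper must separately show is $o_p(1)$ in \eqref{eq:app_rootn_T3}, and that you make explicit the uniform-in-$K$ lower eigenvalue bound on the sandwich matrix needed to invert it, a point the paper leaves implicit when passing from its bounds \eqref{eq:app_rootn_T1}--\eqref{eq:app_rootn_T3} to the stated rate.
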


\begin{proof}
Recall that $(\check{\bm\gamma},\check\tau)$ minimizes \eqref{eq:app_empirical_criterion}, so it should satisfy the first-order condition
\begin{align*}
\nabla_{\bm\gamma,\tau} \G_K(\check{\bm\gamma},\check\tau) \wh{\bm\Theta}_{(K+1)\times(K+1)}^{-1}\G_K(\check{\bm\gamma},\check\tau)=\0.
\end{align*}
By the mean value theorem, we have 
\begin{align}
    \0=&~ \nabla_{\bm\gamma,\tau} \G_K(\bm\gamma_0,\tau_0) \wh{\bm\Theta}_{(K+1)\times(K+1)}^{-1}\sqrt{N}\G_K(\bm\gamma_0,\tau_0) \nonumber\\
    &~ + \nabla_{\bm\gamma,\tau} \G_K(\wt{\bm\gamma},\wt\tau) \wh{\bm\Theta}_{(K+1)\times(K+1)}^{-1}\nabla_{\bm\gamma,\tau}\G_K(\wt{\bm\gamma},\wt\tau)\trans \binom{\sqrt{N}(\check{\bm\gamma}-\bm\gamma_0)}{\sqrt{N}(\check{\tau}-\tau_0)} \nonumber\\ 
    &~ + \sqrt{N}\G_K(\wt{\bm\gamma},\wt\tau)\trans \wh{\bm\Theta}_{(K+1)\times(K+1)}^{-1} \nonumber\\
    &\qquad \times \left[\frac{\partial^2 \G_K(\wt{\bm\gamma},\wt\tau)}{\partial \gamma_1\partial\bm\gamma\trans}(\check{\bm\gamma}-\bm\gamma_0),
    \ldots,
    \frac{\partial^2 \G_K(\wt{\bm\gamma},\wt\tau)}{\partial \gamma_p\partial\bm\gamma\trans}(\check{\bm\gamma}-\bm\gamma_0),
    \0_{K+1}\right].\label{eq:app_rootn_T0}
\end{align}
where $(\wt{\bm\gamma},\wt\tau)$ lies between $(\check{\bm\gamma},\check\tau)$ and $(\bm\gamma_0,\tau_0)$. We only need to show 
\begin{align}
    & \nabla_{\bm\gamma,\tau} \G_K(\bm\gamma_0,\tau_0) \wh{\bm\Theta}_{(K+1)\times(K+1)}^{-1}\sqrt{N}\G_K(\bm\gamma_0,\tau_0) = O_p(1),\label{eq:app_rootn_T1}\\
    & \nabla_{\bm\gamma,\tau} \G_K(\wt{\bm\gamma},\wt\tau) \wh{\bm\Theta}_{(K+1)\times(K+1)}^{-1}\nabla_{\bm\gamma,\tau}\G_K(\wt{\bm\gamma},\wt\tau)\trans = O_p(1),\label{eq:app_rootn_T2}\\
    & \sqrt{N}\G_K(\wt{\bm\gamma},\wt\tau)\trans \wh{\bm\Theta}_{(K+1)\times(K+1)}^{-1} \nonumber\\
    &\qquad \times \left[\frac{\partial^2 \G_K(\wt{\bm\gamma},\wt\tau)}{\partial \gamma_1\partial\bm\gamma\trans}(\check{\bm\gamma}-\bm\gamma_0),
    \ldots,
    \frac{\partial^2 \G_K(\wt{\bm\gamma},\wt\tau)}{\partial \gamma_p\partial\bm\gamma\trans}(\check{\bm\gamma}-\bm\gamma_0),
    \0_{K+1}\right] = o_p(1).\label{eq:app_rootn_T3}
\end{align}
To show \eqref{eq:app_rootn_T1}, note that by computing the second  moment and using Chebyshev's inequality, we have 
\begin{align} 
& \|\G_K(\bm\gamma_0,\tau_0)\| = O\left(\sqrt{K/N}\right),\label{eq:app_rootn_T1.1}\\
& \|\nabla_{\bm\gamma,\tau} \G_K(\bm\gamma_0,\tau_0) - \B_{(K+1)\times(p+1)}\trans\| = O\left(\sqrt{K/N}\right),\label{eq:app_rootn_T1.2} \\
& \left\|\frac{\partial^2 \G_K(\bm\gamma_0,\tau_0)}{\partial \gamma_j\partial\bm\gamma\trans}
 - \mE\left\{\frac{\partial^2 \g_K(\O;\bm\gamma_0,\tau_0)}{\partial \gamma_j\partial\bm\gamma\trans}\right\}\right\| = O\left(\sqrt{K/N}\right),~j=1,\ldots,p.\label{eq:app_rootn_T1.3.0}
\end{align}
Combining \eqref{eq_app_W_inverse_rate}, \eqref{eq:app_rootn_T1.1}, \eqref{eq:app_rootn_T1.2} and \Cref{ass:sieve}(i), we have 
\begin{align*}
    &\nabla_{\bm\gamma,\tau} \G_K(\bm\gamma_0,\tau_0) \wh{\bm\Theta}_{(K+1)\times(K+1)}^{-1}\sqrt{N}\G_K(\bm\gamma_0,\tau_0) \\ 
    &~ = \B_{(K+1)\times(p+1)}\trans {\bm\Theta}_{(K+1)\times(K+1)}^{-1}\sqrt{N}\G_K(\bm\gamma_0,\tau_0) + o_p(1).
\end{align*}
Computing the variance of $\B_{(K+1)\times(p+1)}\trans {\bm\Theta}_{(K+1)\times(K+1)}^{-1}\sqrt{N}\G_K(\bm\gamma_0,\tau_0)$ yields that 
    \begin{align}
    & \left\|\var\left\{\B_{(K+1)\times(p+1)}\trans {\bm\Theta}_{(K+1)\times(K+1)}^{-1}\sqrt{N}\G_K(\bm\gamma_0,\tau_0)\right\}\right\|  \nonumber\\ 
    =&~ \left\|\B_{(K+1)\times(p+1)}\trans {\bm\Theta}_{(K+1)\times(K+1)}^{-1}\mE\{\g_K(\O,\bm\gamma_0,\tau_0)\g_K(\O,\bm\gamma_0,\tau_0)\trans\}  {\bm\Theta}_{(K+1)\times(K+1)}^{-1}\B_{(K+1)\times(p+1)}\right\| \nonumber\\
    =&~ \mE\left[\left\|\B_{(K+1)\times(p+1)}\trans {\bm\Theta}_{(K+1)\times(K+1)}^{-1}\g_K(\O,\bm\gamma_0,\tau_0)\right\|^2\right]\nonumber\\
    =&~ 
    \mE\left[\left\|\left(
        \begin{array}{cl}
            -\mE\{\u_K(Z,A,X)\nabla_{\bm\gamma} h(W,A,X;\bm\gamma_0)\trans\}  & \0\\
            \nabla_{\bm\gamma}\left\{ h(W,1,X;\bm\gamma_0) - h(W,0,X;\bm\gamma_0)\right\}  & 1
        \end{array}
    \right)\trans
    {\bm\Theta}_{(K+1)\times(K+1)}^{-1} \g_K(\O,\bm\gamma_0,\tau_0)\right\|^2\right] \nonumber\\ 
    =&~ \mE\Bigg[\bigg\|\mE \left\{\nabla_{\bm\gamma} h(W,A,X;\bm\gamma_0)\u_K(Z,A,X)\trans\right\}\left\{\mE \u_K(Z,A,X)\u_K(Z,A,X)\trans\right\}^{-1} \nonumber\\ 
    &\qquad\times \{Y-h(W,A,X;\bm\gamma_0)\}\u_K(Z,A,X)\bigg\|^2\Bigg]  \label{eq:app_rootn_T1.3}\\
    &~ + \left\{\big\|\mE [\nabla_{\bm\gamma}\left\{ h(W,1,X;\bm\gamma_0) - h(W,0,X;\bm\gamma_0)\right\}]\big\|^2 +1\right\} \mE\left[\{\tau_0 - h(W,1,X;\bm\gamma_0)+h(W,0,X;\bm\gamma_0)\}^2\right].\nonumber
    \end{align}
Consider the term \eqref{eq:app_rootn_T1.3}. We denote 
    \begin{align*}
    \Xi =  \mE \left\{\nabla_{\bm\gamma} h(W,A,X;\bm\gamma_0)\u_K(Z,A,X)\trans\right\}\left\{\mE \u_K(Z,A,X)\u_K(Z,A,X)\trans\right\}^{-1},
    \end{align*}
Then  $\Xi\trans\u_K(z,a,x)$ is the $L_2$projection  $\nabla_{\bm\gamma} h(w,a,x;\bm\gamma_0)$ onto the space linearly spanned by $\{\u_K(z,a,x)\}$. Thus we have
\begin{align*}
    \mE \left[\left\|\Xi\trans\u_K(Z,A,X)\right\|^2\right] \le \mE \left[\left\|\nabla_{\bm\gamma_0} h(W,A,X;\bm\gamma_0)\right\|^2 \right].
\end{align*}
Then by Cauchy-Schwarz inequality and the regularity condition \ref{cond:unif_bound1}, we have 
\begin{align*}
    \eqref{eq:app_rootn_T1.3} =&~  \mE\left[\left\|Y-h(W,A,X;\bm\gamma_0)\Xi\trans\u_K(Z,A,X)\right\|^2\right] \\ 
    &~ \le \mE \left[\{Y-h(W,A,X;\bm\gamma_0)\}^2\right]\cdot \mE \left[\left\|\Xi\trans\u_K(Z,A,X)\right\|^2\right] \\ 
    &~ \le \mE \left[\{Y-h(W,A,X;\bm\gamma_0)\}^2\right]\cdot \mE \left[\left\|\nabla_{\bm\gamma_0} h(W,A,X;\bm\gamma_0)\right\|^2 \right]<\infty.
\end{align*}
Then we have 
\begin{align*}
    \var\left\{\B_{(K+1)\times(p+1)}\trans {\bm\Theta}_{(K+1)\times(K+1)}^{-1}\sqrt{N}\G_K(\bm\gamma_0,\tau_0)\right\} = O(1),
\end{align*}
and \eqref{eq:app_rootn_T1} holds. To show \eqref{eq:app_rootn_T2}, note that
    \begin{align*}
        & \nabla_{\bm\gamma,\tau} \G_K({\bm\gamma},\tau) \wh{\bm\Theta}_{(K+1)\times(K+1)}^{-1}\nabla_{\bm\gamma,\tau}\G_K({\bm\gamma},\tau)\trans
            =\left(\begin{array}{cc}
                \M_{p\times p}^{(1)}(\gamma) & \M_{p\times 1}^{(2)}(\gamma)\\ 
                \M_{p\times 1}^{(2)}(\gamma)\trans & 1
                \end{array}\right),
    \end{align*}
where 
\begin{footnotesize}
    \begin{align*}
        \M_{p\times p}^{(1)}(\gamma) =&~             
                \left\{\frac{1}{N}\sum_{i=1}^{N} \nabla_{\bm\gamma} h(W_i,A_i,X_i;\bm\gamma)\u_K(Z_i,A_i,X_i)\trans\right\}  \left\{\frac{1}{N}\sum_{i=1}^{N} \u_K(Z_i,A_i,X_i)\u_K(Z_i,A_i,X_i)\trans\right\}^{-1} \\ 
                &\qquad \times \left\{\frac{1}{N}\sum_{i=1}^{N} \u_K(Z_i,A_i,X_i)\nabla_{\bm\gamma} h(W_i,A_i,X_i;\bm\gamma)\trans\right\} \\ 
                &~ +   \left\{\frac{1}{N}\sum_{i=1}^{N} \nabla_{\bm\gamma}\{ h(W_i,1,X_i;\bm\gamma) - h(W_i,0,X_i;\bm\gamma)\}\right\}
                \left\{\frac{1}{N}\sum_{i=1}^{N} \nabla_{\bm\gamma}\{ h(W_i,1,X_i;\bm\gamma) - h(W_i,0,X_i;\bm\gamma)\}\right\}\trans,
    \end{align*}
\end{footnotesize}
and 
\begin{align*}
    \M_{p\times p}^{(2)}(\gamma)
    =-\frac{1}{N}\sum_{i=1}^{N} \nabla_{\bm\gamma}\{ h(W_i,1,X_i;\bm\gamma) - h(W_i,0,X_i;\bm\gamma)\}.
\end{align*}
Recall the definition of $\wh\Xi$ in \eqref{eq:app_consis_deriv_T1.1}. By the triangle inequality, we have
\begin{align*}
\| \M_{p\times p}^{(1)}(\gamma)\| \le&~ \frac{1}{N} \sum_{i=1}^{N} \left\|\wh\Xi\trans \u_K(Z_i,A_i,X_i)\right\|^2 + \frac{1}{N} \sum_{i=1}^{N} \left\|\nabla_{\bm\gamma}\{ h(W_i,1,X_i;\bm\gamma) - h(W_i,0,X_i;\bm\gamma)\}\right\|^2 \\ 
\le&~ \frac{1}{N} \sum_{i=1}^{N} \left\|\nabla_{\bm\gamma} h(W_i,A_i,X_i;\bm\gamma)\right\|^2 + \frac{1}{N} \sum_{i=1}^{N} \left\|\nabla_{\bm\gamma}\{ h(W_i,1,X_i;\bm\gamma) - h(W_i,0,X_i;\bm\gamma)\}\right\|^2,
\end{align*}
where the last inequality is similar to the calculation of \eqref{eq:app_consis_deriv_T1}. Meanwhile, 
\begin{align*}
    \left\|\M_{p\times p}^{(2)}(\gamma)\right\| \le 
    \frac{1}{N} \sum_{i=1}^{N} \left\|\nabla_{\bm\gamma}\{ h(W_i,1,X_i;\bm\gamma) - h(W_i,0,X_i;\bm\gamma)\}\right\|^2.
\end{align*}
Then \eqref{eq:app_rootn_T2} holds under regularity conditions \ref{cond:unif_bound1}-\ref{cond:unif_bound2}. 
Finally, proving \eqref{eq:app_rootn_T3} is similar to proving $\sup _{(\bm\gamma, \tau) \in \Gamma \times \calT}|\wh{Q}_N(\bm\gamma, \tau)-Q(\bm\gamma, \tau)| \convp 0$ in \Cref{lemma:app_ini_consistency}.
We can first obtain that for each $(\bm\gamma, \tau) \in \Gamma \times \calT$ and $j=1,\ldots,p$, 
\begin{align*}
&\G_K({\bm\gamma},\tau)\trans \wh{\bm\Theta}_{(K+1)\times(K+1)}^{-1}  \partial^2 \G_K({\bm\gamma},\tau) / \partial \gamma_j\partial\bm\gamma\trans\\
&~~ -\mE\{g_K(\O;\bm\gamma,\tau)\}\trans {\bm\Theta}_{(K+1)\times(K+1)}^{-1}  \mE\{\partial^2 \g_K(\O;{\bm\gamma},\tau) / \partial \gamma_j\partial\bm\gamma\trans\}=o_p(1).
\end{align*}
Then, it can be strengthened to achieve uniform convergence over  $(\bm\gamma, \tau) \in \Gamma \times \calT$. Then, in light of the facts that $\left\|\wt{\bm\gamma}-\bm\gamma_0\right\| \le\left\|\check{\bm\gamma}-\bm\gamma_0\right\| \convp 0$ and $\mE\left[g_K\left(\O; \bm\gamma_0, \tau_0\right)\right]=$ 0, we can obtain \eqref{eq:app_rootn_T3}. Combining \eqref{eq:app_rootn_T0}--\eqref{eq:app_rootn_T3} yields that $\|(\check{\bm\gamma},\check\tau)-(\bm\gamma_0,\tau_0)\|=O_p(N^{-1/2})$, then it completes proof.

\end{proof}

\begin{lemma}\label{lemma:app_root_n_Vk}
     Suppose \Cref{ass:sieve} and regularity conditions \ref{cond:compact}--\ref{cond:unif_bound2} in \Cref{sec:app_regularity_conditions} hold. We have
    \begin{align*}
        \V_K^{-1/2}\left( 
            \begin{array}{c}
                \sqrt{N}(\wh{\bm\gamma}-\bm\gamma_0)\\\sqrt{N}(\wh\tau-\tau_0)
            \end{array}
         \right) \convd N\left(0,\I_{(p+1)\times(p+1)}\right),
    \end{align*}
    where $K\rightarrow\infty$  as $N\rightarrow\infty$.
\end{lemma}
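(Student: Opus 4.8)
The plan is to linearize the first-order condition of the optimal GMM problem \eqref{eq:opt_GMM_est}, show that the resulting ``denominator'' matrix and ``numerator'' vector converge to $\V_K^{-1}$ and a Gaussian limit \emph{once sandwiched by the fixed-dimensional normalizer} $\V_K^{\pm 1/2}$, and then invoke a triangular-array central limit theorem. Throughout, abbreviate ${\bm\Upsilon}_{(K+1)\times(K+1)}$, $\wh{\bm\Upsilon}_{(K+1)\times(K+1)}$, and $\B_{(K+1)\times(p+1)}$ by $\bm\Upsilon$, $\wh{\bm\Upsilon}$, $\B$, and write $\g_K(\O)=\g_K(\O;\bm\gamma_0,\tau_0)$, so that $\V_K=(\B\trans\bm\Upsilon^{-1}\B)^{-1}$ and $\mE\{\g_K(\O)\}=\0$, $\mE\{\g_K(\O)\g_K(\O)\trans\}=\bm\Upsilon$.

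First I would record the preliminary rates. The optimal estimator is $\sqrt N$-consistent, $\|(\wh{\bm\gamma},\wh\tau)-(\bm\gamma_0,\tau_0)\|=O_p(N^{-1/2})$, by repeating the arguments of \Cref{lemma:app_ini_consistency,lemma:app_ini_rate} with $\wh{\bm\Upsilon}^{-1}$ in place of $\wh{\bm\Theta}^{-1}$; the new inputs are $\lambda_{\max}(\wh{\bm\Upsilon}^{-1})=O_p(1)$ and $\|\wh{\bm\Upsilon}^{-1}-\bm\Upsilon^{-1}\|=o_p(1)$, which follow from the $\sqrt N$-consistency of $(\check{\bm\gamma},\check\tau)$ (\Cref{lemma:app_ini_rate}), a second-moment computation as in \eqref{eq:app_consis_T1.2.1}, and the inversion identity used in \eqref{eq_app_W_inverse_rate}. \Cref{ass:sieve}(i) with the bounds in \ref{cond:p_smooth} keep the eigenvalues of $\bm\Upsilon$ and of $\V_K^{-1}=\B\trans\bm\Upsilon^{-1}\B$ bounded away from $0$ and $\infty$, whence $\|\bm\Upsilon^{-1/2}\B\|=O(1)$ and $\|\V_K^{1/2}\B\trans\bm\Upsilon^{-1}\|=O(1)$.

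Next I would expand the first-order condition $\nabla_{\bm\gamma,\tau}\G_K(\wh{\bm\gamma},\wh\tau)\wh{\bm\Upsilon}^{-1}\G_K(\wh{\bm\gamma},\wh\tau)=\0$ by a row-wise mean value theorem as in \eqref{eq:app_rootn_T0}, obtaining
\begin{align*}
\binom{\sqrt N(\wh{\bm\gamma}-\bm\gamma_0)}{\sqrt N(\wh\tau-\tau_0)}=-\left\{\nabla_{\bm\gamma,\tau}\G_K(\wh{\bm\gamma},\wh\tau)\wh{\bm\Upsilon}^{-1}\nabla_{\bm\gamma,\tau}\G_K(\wt{\bm\gamma},\wt\tau)\trans\right\}^{-1}\nabla_{\bm\gamma,\tau}\G_K(\wh{\bm\gamma},\wh\tau)\wh{\bm\Upsilon}^{-1}\sqrt N\,\G_K(\bm\gamma_0,\tau_0),
\end{align*}
with $(\wt{\bm\gamma},\wt\tau)$ between $(\wh{\bm\gamma},\wh\tau)$ and $(\bm\gamma_0,\tau_0)$. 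Using the Jacobian rate \eqref{eq:app_rootn_T1.2}, the bound $\|\sqrt N\,\G_K(\bm\gamma_0,\tau_0)\|=O_p(\sqrt K)$ from \eqref{eq:app_rootn_T1.1}, the rate for $\wh{\bm\Upsilon}^{-1}$, and $K=o(\sqrt N)$ (implied by \Cref{ass:sieve}(iii)), I would replace both Jacobian factors by $\B\trans$ and $\wh{\bm\Upsilon}^{-1}$ by $\bm\Upsilon^{-1}$ after premultiplying by $\V_K^{\pm1/2}$: the denominator obeys $\V_K^{1/2}\{\cdots\}\V_K^{1/2}=\I_{(p+1)\times(p+1)}+o_p(1)$, and the numerator obeys $\V_K^{1/2}\nabla_{\bm\gamma,\tau}\G_K(\wh{\bm\gamma},\wh\tau)\wh{\bm\Upsilon}^{-1}\sqrt N\,\G_K=\V_K^{1/2}\B\trans\bm\Upsilon^{-1}\sqrt N\,\G_K+o_p(1)$. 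By Slutsky, it then remains to prove
\begin{align*}
\V_K^{1/2}\B\trans\bm\Upsilon^{-1}\sqrt N\,\G_K(\bm\gamma_0,\tau_0)\convd N(\0,\I_{(p+1)\times(p+1)}).
\end{align*}

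This leading term is the i.i.d. average $N^{-1/2}\sum_{i=1}^N\V_K^{1/2}\B\trans\bm\Upsilon^{-1}\g_K(\O_i)$ with exact covariance $\V_K^{1/2}\B\trans\bm\Upsilon^{-1}\B\V_K^{1/2}=\I_{(p+1)\times(p+1)}$. Since the target dimension is fixed, I would apply the Cram\'er--Wold device: for a unit vector $\bm\lambda$, the scalar $c_{Ki}=\bm\lambda\trans\V_K^{1/2}\B\trans\bm\Upsilon^{-1}\g_K(\O_i)$ has unit variance and forms a row-wise i.i.d. triangular array, and I would verify the Lindeberg--Feller condition $\mE\{c_K^2\mathbf 1(|c_K|>\epsilon\sqrt N)\}\to0$. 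Because the projection coefficient $\bm\Upsilon^{-1/2}\B\V_K^{1/2}\bm\lambda$ has unit Euclidean norm, $|c_K|\le\|\bm\Upsilon^{-1/2}\g_K\|\le C\,|Y-h(W,A,X)|\,\zeta(K)+C$; truncating at this envelope and using a finite higher-moment bound for $Y-h(W,A,X)$ reduces the Lindeberg sum to a multiple of $\zeta(K)^{2+\delta}N^{-\delta/2}$, which vanishes under \Cref{ass:sieve}(iii). Verifying this Lindeberg condition for a projected score whose envelope $\zeta(K)$ grows with the sample size is the main obstacle; the remaining difficulty is the uniform control of the growing-dimensional inverse $\wh{\bm\Upsilon}^{-1}$ and of the linearization remainder, both of which must be shown to be $o_p(1)$ only after the fixed-dimensional normalization $\V_K^{\pm1/2}$ collapses them, which is precisely where the rate restriction \Cref{ass:sieve}(iii) is used.
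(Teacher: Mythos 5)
Your proposal follows the same skeleton as the paper's proof — mean-value expansion of the first-order condition, replacement of the Jacobians by $\B_{(K+1)\times(p+1)}\trans$ and of the weight by $\bm\Upsilon_{(K+1)\times(K+1)}^{-1}$, and then a CLT for $\V_K^{1/2}\B_{(K+1)\times(p+1)}\trans\bm\Upsilon_{(K+1)\times(K+1)}^{-1}\sqrt{N}\,\G_K(\bm\gamma_0,\tau_0)$ — but it is organized differently in two respects. The paper first analyzes an infeasible estimator $(\bar{\bm\gamma},\bar\tau)$ computed with the \emph{true} weight $\bm\Upsilon_{(K+1)\times(K+1)}^{-1}$, proves its asymptotic normality by verifying the three conditions of the CLT for random linear vector forms in \cite{Eicker1966}, and only afterwards shows $\sqrt{N}\{(\wh{\bm\gamma},\wh\tau)-(\bar{\bm\gamma},\bar\tau)\}=o_p(1)$ via $\|\wh{\bm\Upsilon}_{(K+1)\times(K+1)}-\bm\Upsilon_{(K+1)\times(K+1)}\|=o_p(1)$; you linearize the feasible estimator directly and invoke Cram\'er--Wold with Lindeberg--Feller. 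The direct linearization, the rate bookkeeping ($\|\sqrt{N}\G_K\|=O_p(\sqrt{K})$, $K=o(\sqrt{N})$), and the exact-identity covariance observation $\V_K^{1/2}\B\trans\bm\Upsilon^{-1}\B\V_K^{1/2}=\I_{(p+1)\times(p+1)}$ are all sound.

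The genuine gap is in your Lindeberg verification. First, the rate claim is not correct as stated: \Cref{ass:sieve}(iii) gives $\zeta(K)^2K/N\to 0$, but your bound $\zeta(K)^{2+\delta}N^{-\delta/2}$ vanishes only when $\zeta(K)^{2+\delta}=o(N^{\delta/2})$. Since $\zeta(K)\gtrsim\sqrt{K}$ always, for B-splines ($\zeta(K)\asymp\sqrt{K}$, where \Cref{ass:sieve}(iii) permits $K$ up to $o(N^{1/2})$) this forces $\delta\ge 2$, and for power series ($\zeta(K)\asymp K$, where $K$ up to $o(N^{1/3})$ is permitted) it forces $\delta\ge 4$; so the conclusion does not follow ``under \Cref{ass:sieve}(iii)'' for an unspecified $\delta>0$. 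Second, and more basically, no moment of order $2+\delta$ is available to you: the lemma's hypotheses are only conditions \ref{cond:compact}--\ref{cond:unif_bound2}, which bound \emph{second} moments of $Y-h(W,A,X;\bm\gamma)$ and of the contrast $h(W,1,X;\bm\gamma)-h(W,0,X;\bm\gamma)$, so the ``finite higher-moment bound'' is a silently added assumption. This is exactly what the paper's appeal to \cite{Eicker1966} is designed to avoid: Eicker's conditions involve only a negligibility property of the coefficient array, uniform integrability of the individual summands $\{Y-h(W,A,X)\}u_{Kk}(Z,A,X)$, and variances bounded below — second-moment-type requirements. Your crude envelope $|c_K|\le C|Y-h(W,A,X)|\zeta(K)+C$ discards the fact that $\var(c_K)=1$, and it is precisely this inflation by $\zeta(K)$ that makes higher moments unavoidable on your route. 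A smaller, shared looseness: replacing $\wh{\bm\Upsilon}^{-1}$ by $\bm\Upsilon^{-1}$ against $\|\sqrt{N}\G_K\|=O_p(\sqrt{K})$ needs, by your operator-norm bounds, roughly $\zeta(K)K/\sqrt{N}\to 0$, which is again stronger than \Cref{ass:sieve}(iii); the paper's assertion that $\|\wh{\bm\Upsilon}-\bm\Upsilon\|=o_p(1)$ ``suffices'' buries the same step, so this is not a defect unique to your argument, but it should be flagged as requiring a finer argument rather than presented as a consequence of the stated rates.
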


\begin{proof}

    We define $(\bar{\bm\gamma},\bar\tau)=\amin_{(\bm\gamma,\tau)\in\Gamma\times\calT} Q_N(\bm\gamma, \tau)$ with
    \begin{align*} 
     Q_N(\bm\gamma, \tau)=\G_K(\bm\gamma,\tau)\trans \bm\Upsilon_{(K+1)\times(K+1)}^{-1}\G_K(\bm\gamma,\tau).
    \end{align*}
    We prove this lemma by first showing that 
    \begin{align}\label{eq:app_pf_lemma_hat_rootn_T1}
        \V_K^{-1/2}\left( 
            \begin{array}{c}
                \sqrt{N}(\bar{\bm\gamma}-\bm\gamma_0)\\\sqrt{N}(\bar\tau-\tau_0)
            \end{array}
         \right) \convd N\left(0,\I_{(p+1)\times(p+1)}\right),
    \end{align}
    and then showing 
    \begin{align}\label{eq:app_pf_lemma_hat_rootn_T2}
        \left(\begin{array}{c} 
            \sqrt{N}(\wh{\bm\gamma}-\bar{\bm\gamma})\\\sqrt{N}(\wh\tau-\bar\tau)
        \end{array}
        \right) = o_p(1).
    \end{align}

    \paragraph{For term \eqref{eq:app_pf_lemma_hat_rootn_T1}. }
    We can show that following a very similar procedure as \Cref{lemma:app_ini_consistency} $\|(\bar{\bm\gamma},\bar\tau)-(\bm\gamma_0,\tau_0)\|=o_p(1)$. 
    By the definition of $(\bar{\bm\gamma},\bar\tau)$, it should satisfy the first-order condition
    \begin{align*}
    \nabla_{\bm\gamma,\tau} \G_K(\bar{\bm\gamma},\bar\tau) \bm\Upsilon_{(K+1)\times(K+1)}^{-1}\G_K(\bar{\bm\gamma},\bar\tau)=\0.
    \end{align*}
    By the mean value theorem, we have 
    \begin{align}
        \0=&~ \nabla_{\bm\gamma,\tau} \G_K(\bm\gamma_0,\tau_0) \bm\Upsilon_{(K+1)\times(K+1)}^{-1}\sqrt{N}\G_K(\bm\gamma_0,\tau_0) \nonumber\\
        &~ + \nabla_{\bm\gamma,\tau} \G_K(\wt{\bm\gamma},\wt\tau) \bm\Upsilon_{(K+1)\times(K+1)}^{-1}\nabla_{\bm\gamma,\tau}\G_K(\wt{\bm\gamma},\wt\tau)\trans 
        \binom{\sqrt{N}(\bar{\bm\gamma}-\bm\gamma_0)}{\sqrt{N}(\bar{\tau}-\tau_0)} \nonumber\\ 
        &~ + \sqrt{N}\G_K(\wt{\bm\gamma},\wt\tau)\trans \bm\Upsilon_{(K+1)\times(K+1)}^{-1} \nonumber\\
        &\qquad \times \left[\frac{\partial^2 \G_K(\wt{\bm\gamma},\wt\tau)}{\partial \gamma_1\partial\bm\gamma\trans}(\bar{\bm\gamma}-\bm\gamma_0),
        \ldots,
        \frac{\partial^2 \G_K(\wt{\bm\gamma},\wt\tau)}{\partial \gamma_p\partial\bm\gamma\trans}(\bar{\bm\gamma}-\bm\gamma_0),
        \0_{K+1}\right].\label{eq:app_pf_lemma_hat_rootn_T1_1}
    \end{align}
    where $(\wt{\bm\gamma},\wt\tau)$ lies between $(\bar{\bm\gamma},\bar\tau)$ and $(\bm\gamma_0,\tau_0)$. Note that \eqref{eq:app_pf_lemma_hat_rootn_T1_1} has the same structure as \eqref{eq:app_rootn_T0} except that the weighting matrix used in \eqref{eq:app_pf_lemma_hat_rootn_T1_1} is $\bm\Upsilon_{(K+1)\times(K+1)}^{-1}$ while the weighting matrix used in \eqref{eq:app_rootn_T0} is $\wh{\bm\Theta}_{(K+1)\times(K+1)}^{-1}$. Using a similar argument of showing $\sup_{(\bm\gamma,\tau)\in\Gamma\times\calT}|\wh{Q}_N(\bm\gamma,\tau)-Q(\bm\gamma,\tau)| \convp 0$ in \Cref{lemma:app_ini_consistency} and \eqref{eq:app_rootn_T3}, we can obtain that
    \begin{align}
        &\nabla_{\bm\gamma,\tau} \G_K(\wt{\bm\gamma},\wt\tau) \bm\Upsilon_{(K+1)\times(K+1)}^{-1}\nabla_{\bm\gamma,\tau}\G_K(\wt{\bm\gamma},\wt\tau)\trans \nonumber\\
        &\qquad = \B_{(K+1)\times(p+1)}\trans {\bm\Upsilon}_{(K+1)\times(K+1)}^{-1} \B_{(K+1)\times(p+1)} +o_p(1), \label{eq:app_pf_lemma_hat_rootn_T1_2}
    \end{align}
    and 
    \begin{align}
        & \sqrt{N}\G_K(\wt{\bm\gamma},\wt\tau)\trans \bm\Upsilon_{(K+1)\times(K+1)}^{-1} \nonumber\\
        &\qquad \times \left[\frac{\partial^2 \G_K(\wt{\bm\gamma},\wt\tau)}{\partial \gamma_1\partial\bm\gamma\trans}(\check{\bm\gamma}-\bm\gamma_0),
        \ldots,
        \frac{\partial^2 \G_K(\wt{\bm\gamma},\wt\tau)}{\partial \gamma_p\partial\bm\gamma\trans}(\check{\bm\gamma}-\bm\gamma_0),
        \0_{K+1}\right] = o_p(1).\label{eq:app_pf_lemma_hat_rootn_T1_3}
    \end{align}
    Combining \eqref{eq:app_rootn_T1.2}, \eqref{eq:app_pf_lemma_hat_rootn_T1_1}, \eqref{eq:app_pf_lemma_hat_rootn_T1_2} and \eqref{eq:app_pf_lemma_hat_rootn_T1_3} yields that
    \begin{align*}
        \left( 
            \begin{array}{c}
                \sqrt{N}(\bar{\bm\gamma}-\bm\gamma_0)\\\sqrt{N}(\bar\tau-\tau_0)
            \end{array}
         \right)
        =&~  \left\{\B_{(K+1)\times(p+1)}\trans {\bm\Upsilon}_{(K+1)\times(K+1)}^{-1} \B_{(K+1)\times(p+1)} \right\}^{-1} \\ 
        &\qquad \times \B_{(K+1)\times(p+1)}\trans \bm\Upsilon_{(K+1)\times(K+1)}^{-1}\sqrt{N}\G_K(\bm\gamma_0,\tau_0) + o_p(1).
    \end{align*}
    Then 
    \begin{align*}
    \var 
        \left( 
            \begin{array}{c}
                \sqrt{N}(\bar{\bm\gamma}-\bm\gamma_0)\\\sqrt{N}(\bar\tau-\tau_0)
            \end{array}
         \right)
     =&~ \left\{\B_{(K+1)\times(p+1)}\trans {\bm\Upsilon}_{(K+1)\times(K+1)}^{-1} \B_{(K+1)\times(p+1)} \right\}^{-1}
    + o_p(1) \\ 
    =&~ \V_K + o_p(1) .
    \end{align*}
    It follows that 
    \begin{align*}
        \V_K^{-1/2}\left( 
            \begin{array}{c}
                \sqrt{N}(\bar{\bm\gamma}-\bm\gamma_0)\\\sqrt{N}(\bar\tau-\tau_0)
            \end{array}
         \right) 
         =&~ \V_K^{-1/2}  \left\{\B_{(K+1)\times(p+1)}\trans \bm\Upsilon_{(K+1)\times(K+1)}^{-1}\sqrt{N}\G_K(\bm\gamma_0,\tau_0)\right\} + o_p(1).
    \end{align*}
    We next follow the multivariate central limit theorem for random linear vector forms established in \cite{Eicker1966} to show $\V_K^{-1/2}\left( 
        \begin{array}{c}
            \sqrt{N}(\bar{\bm\gamma}-\bm\gamma_0)\\\sqrt{N}(\bar\tau-\tau_0)
        \end{array}
     \right)$ is asymptotically normal.  
     Note that 
     \begin{align*}
        \sqrt{N}\G_K(\bm\gamma_0,\tau_0) =&~ 
        \left(
    \begin{array}{c}
        \frac{1}{\sqrt{N}}\sum_{i=1}^{N}\{Y_i-h(W_i,A_i,X_i;\bm\gamma_0)\}\u_K(Z_i,A_i,X_i)\\
        \frac{1}{\sqrt{N}}\sum_{i=1}^{N}\{\tau_0-h(W_i,1,X_i;\bm\gamma_0)+h(W_i,0,X_i;\bm\gamma_0)\} 
    \end{array}
    \right)\\ 
    =:&~ \wt{\A}_{(K+1)\times N(K+1)}\calE,
     \end{align*}
     where 
     \begin{align*}
        &\wt{\A}_{(K+1)\times N(K+1)} = \left(\begin{matrix}
            \frac{1}{\sqrt{N}}\1_N\trans, & \0_N\trans, & \0_N\trans, & \cdots & \0_N\trans, & \0_N\trans \\ 
            \0_N\trans, & \frac{1}{\sqrt{N}}\1_N\trans, & \0_N\trans, & \cdots & \0_N\trans, & \0_N\trans  \\ 
            \vdots &  \vdots &  \vdots & &  \vdots &  \vdots    \\
            \0_N\trans, &  \0_N\trans, & \cdots & \0_N\trans, & \0_N\trans, & \frac{1}{\sqrt{N}}\1_N\trans \\ 
        \end{matrix}\right),\\
     &\calE = \big(\v_{N,1}\trans,\ldots,v_{N,K}\trans,\w_N\trans\big)\trans, \\ 
     &\v_{N,k} = \left( \begin{matrix}
        \{Y_1-h(W_1,A_1,X_1;\bm\gamma_0)\}u_{Kk}(Z_1,A_1,X_1)\\ 
        \vdots\\ 
        \{Y_N-h(W_N,A_N,X_N;\bm\gamma_0)\}u_{Kk}(Z_N,A_N,X_N)
     \end{matrix} \right),\\
     &\w_N = \left( \begin{matrix}
        \tau_0-h(W_1,1,X_1;\bm\gamma_0)+h(W_1,0,X_1;\bm\gamma_0)\\ 
        \vdots\\ 
        \tau_0-h(W_N,1,X_N;\bm\gamma_0)+h(W_N,0,X_N;\bm\gamma_0)
     \end{matrix} \right),
     \end{align*}
     for $k=1,\ldots,K$, and $\1_N$ (resp. $\0_N$) denotes an N-dimensional column vector whose elements are all of 1's (resp. 0's).
     Let $\A_{(p+1)\times N(K+1)}=\B_{(K+1)\times(p+1)}\trans \bm\Upsilon_{(K+1)\times(K+1)}\wt{\A}_{(K+1)\times N(K+1)}$. We have 
     \begin{align*}
         \B_{(K+1)\times(p+1)}\trans \bm\Upsilon_{(K+1)\times(K+1)}^{-1}\sqrt{N}\G_K(\bm\gamma_0,\tau_0)  = \A_{(p+1)\times N(K+1)}\calE.
     \end{align*}
     Following \cite{Eicker1966}, \eqref{eq:app_pf_lemma_hat_rootn_T1} holds if the following three conditions hold:
     \begin{enumerate}
        \item $\max_{i=1,\ldots,N(K+1)}\a_i\trans\{\A_{(p+1)\times N(K+1)}\A_{(p+1)\times N(K+1)}\trans\}^{-1}\a_i\to 0$, where $\a_i$ is the $i$-th column of $\A_{(p+1)\times N(K+1)}$;
        \item $\sup_{k\in\{1,\ldots,K\},j\in\{1,\ldots,N\}} \mE\left\{v_{N,k,j}^2I\left(|v_{N,k,j}|>s\right)\right\}\to 0$ and 
        \\ $\sup_{j\in\{1,\ldots,N\}} \mE\left\{w_{N,j}^2I\left(|w_{N,j}|>s\right)\right\}\to 0$ as $s\to \infty$, where $v_{N,k,j}$ (resp. $w_{N,j}$) is the $j$-th component of $\v_{N,k}$ (resp. $\w_N$);
        \item $\inf_{k\in\{1,\ldots,K\},j\in\{1,\ldots,N\}} \mE(v_{N,k,j}^2)>0$ and $\inf_{j\in\{1,\ldots,N\}} \mE(w_{N,j}^2)>0$.
     \end{enumerate}
     Condition 1 is naturally satisfied by the definition of $\wt{\A}_{(K+1)\times N(K+1)}$. Condition 2 is satisfied due to the dominated convergence theorem and the fact\\ $\sup_{k\in\{1,\ldots,K\}} \mE\{u_{Kk}(Z,A,X)^2\} \le \lambda_{\max}(\mE\{\u_{K}(Z,A,X)\u_{K}(Z,A,X)\trans\})<\infty$ by \Cref{ass:sieve}(i). Condition 3 is satisfied due to the fact \\ $\inf_{k\in\{1,\ldots,K\}} \mE\{u_{Kk}(Z,A,X)^2\} \ge \lambda_{\min}(\mE\{\u_{K}(Z,A,X)\u_{K}(Z,A,X)\trans\})>0$ by \Cref{ass:sieve}(i). Then \eqref{eq:app_pf_lemma_hat_rootn_T1} holds following \cite{Eicker1966}.

     \paragraph{For term \eqref{eq:app_pf_lemma_hat_rootn_T2}. }
     It suffices to prove 
     \begin{align*}
     \left\|\wh{\bm\Upsilon}_{(K+1)\times(K+1)}-\bm\Upsilon_{(K+1)\times(K+1)}\right\|=o_p(1).
     \end{align*}
    By definition, we have 
    \begin{align}
        & \left\|\wh{\bm\Upsilon}_{(K+1)\times(K+1)}-\bm\Upsilon_{(K+1)\times(K+1)}\right\|^2 \nonumber\\
        =&~ \Bigg\| \frac{1}{N}\sum_{i=1}^N \left\{Y_i-h(W_i,A_i,X_i;\check{\bm\gamma})\right\}^2\u_K(Z_i,A_i,X_i)^{\otimes 2} \nonumber\\ 
        &\qquad\quad - \mE\left[ \{Y-h(W,A,X;\bm\gamma_0)\}^2\u_K(Z,A,X)^{\otimes 2}\right]\Bigg\|^2 \label{eq:app_pf_lemma_hat_rootn_T2_1.1}\\ 
        +&~   2\Bigg\| \frac{1}{N}\sum_{i=1}^N \{Y_i-h(W_i,A_i,X_i)\}\{\check\tau - h(W_i,1,X_i) +h(W_i,0,X_i)\} \u_K(Z_i,A_i,X_i) \nonumber\\ 
        &\qquad - \mE\left[\{Y-h(W,A,X)\}\{\tau_0 - h(W,1,X) +h(W,0,X)\} \u_K(Z,A,X)\right]\Bigg\|^2 \label{eq:app_pf_lemma_hat_rootn_T2_1.2}\\ 
        +&~ \Bigg| \frac{1}{N}\sum_{i=1}^N \{\check\tau-h(W_i,1,X_i;\check{\bm\gamma})+h(W_i,0,X_i;\check{\bm\gamma})\}^2 \nonumber\\ 
        &\qquad - \mE\left[\{\tau_0-h(W,1,X;\bm\gamma_0)+h(W,0,X;\bm\gamma_0)\}^2\right] \Bigg| .\label{eq:app_pf_lemma_hat_rootn_T2_1.3}
    \end{align}
    We next show that \eqref{eq:app_pf_lemma_hat_rootn_T2_1.1} is $o_p(1)$, and we can similarly obtain \eqref{eq:app_pf_lemma_hat_rootn_T2_1.2} and  \eqref{eq:app_pf_lemma_hat_rootn_T2_1.3} are $o_p(1)$.
    Using the mean value theorem and triangle inequality, we obtain 
     \begin{align*}
        & \left\| \frac{1}{N}\sum_{i=1}^N \left\{Y_i-h(W_i,A_i,X_i;\check{\bm\gamma})\right\}^2\u_K(Z_i,A_i,X_i)^{\otimes 2} - \mE\left[ \{Y-h(W,A,X;\bm\gamma_0)\}^2\u_K(Z,A,X)^{\otimes 2}\right]\right\| \\
        \le&  \left\| \frac{1}{N}\sum_{i=1}^N \left\{Y_i-h(W_i,A_i,X_i;\bm\gamma_0)\right\}^2\u_K(Z_i,A_i,X_i)^{\otimes 2} - \mE\left[ \{Y-h(W,A,X;\bm\gamma_0)\}^2\u_K(Z,A,X)^{\otimes 2}\right]\right\|   \\ 
        &+ \left\| (\check{\bm\gamma}-\bm\gamma_0)\trans \cdot \frac{2}{N}\sum_{i=1}^N \left\{Y_i-h(W_i,A_i,X_i;\bm\gamma^*)\right\}\nabla_{\bm\gamma}h(W_i,A_i,X_i;\bm\gamma^*) \u_K(Z_i,A_i,X_i)^{\otimes 2}\right\| ,
     \end{align*}
   where $\bm\gamma^*$ lies between $\check{\bm\gamma}$ and $\bm\gamma_0$. The first term is $o_p(1)$ by directly computing its second-order moment like \eqref{eq:app_consis_T1.2.1} and using Chebyshev inequality. The second term is $O_p(\sqrt{K/N})=o_p(1)$, following from \Cref{lemma:app_ini_rate}, \Cref{ass:sieve} and the fact
   $$\left\|\mE\{\u_K(Z,A,X)^{\otimes 2}\}\right\|^2 = \lambda_{\max}\left(\mE\left\{\u_K(Z,A,X)^{\otimes 2}\right\}\right) \|\I_K\|^2=O(1)\cdot O(K)=O(K).$$
   Then we complete the proof.

\end{proof}

\section{Proofs of main results}\label{sec:app_proof_main}

\subsection{Proof of Theorem \ref{thm:asym_normal}}\label{sec:pf_thm_asy_normal}
\begin{proof}
    Following \Cref{lemma:app_root_n_Vk}, we only need to show that 
           $\V_K\rightarrow
            \left( 
                \begin{matrix}
                    \V_\gamma&*\\ 
                    *&V_\tau
                \end{matrix}
            \right)$,
    with $\V_\gamma$ and $V_\tau$ defined in \Cref{thm:asym_normal}. It then implies that $\sqrt{N}(\wh{\bm\gamma}-\bm\gamma_0)\convd N(0,\V_{\bm\gamma})$ and $\sqrt{N}(\wh\tau-\tau_0)\convd N(0,V_{\tau})$.
    Recall that $$\V_K = \{\B_{(K+1)\times(p+1)}\trans {\bm\Upsilon}_{(K+1)\times(K+1)}^{-1} \B_{(K+1)\times(p+1)}\}^{-1}.$$
    Using the inverse matrix formula \eqref{eq:mat_inv}, we have: 
    \beqrs
    {\bm\Upsilon}_{(K+1)\times(K+1)}^{-1} = \left(\begin{array}{cc}
        \A_{K \times K}^{-1}+\frac{1}{c} \A_{K \times K}^{-1} \b_K \b_K\trans  \A_{K \times K}^{-1}, & -\frac{1}{c} \A_{K \times K}^{-1} \b_K \\
        -\frac{1}{c} \mathbf{b}_K\trans \A_{K \times K}^{-1}, & \frac{1}{c}
        \end{array}\right),
    \eeqrs
    where 
        \begin{align}
        & \A_{K \times K}= \mE\left[\{Y-h(W,A,X)\}^2 \u_K(Z,A,X)^{\otimes 2}\right], \label{eq:A}\\ 
        & \b_K= \mE\left[\{Y-h(W,A,X)\}\{\tau_0 - h(W,1,X)+h(W,0,X)\} \u_K(Z,A,X)\right],\label{eq:b_k} \\
        & c= \mE\left[\{\tau_0 - h(W,1,X)+h(W,0,X)\}^2\right] -\b_K\trans \A_{K \times K}^{-1} \b_K.\nonumber
        \end{align}
    Then we have 
    \beqrs
    \B_{(K+1)\times(p+1)}\trans  {\bm\Upsilon}_{(K+1)\times(K+1)}^{-1} \B_{(K+1)\times(p+1)}=\left(\begin{array}{cc}
        \wt{\A}_{p \times p} & \wt{\b}_p \\
        & \\
        \wt{\b}_p\trans  & \frac{1}{c}
        \end{array}\right),
    \eeqrs
    where 
    \begin{align}
    &\wt\A_{p \times p} =   \mE\{\nabla_{\bm\gamma} h(W,A,X;\bm\gamma_0)\u_K(Z,A,X)\trans\} \nonumber\\ 
     &\qquad\qquad\times\A_{K \times K}^{-1} \cdot  \mE\{\u_K(Z,A,X)\nabla_{\bm\gamma} h(W,A,X;\bm\gamma_0)\trans\} + c \wt{\b}_p\wt{\b}_p\trans, 
    \label{eq:A_tilde}\\
    &\wt{\b}_p =  \frac{1}{c} \mE\{\nabla_{\bm\gamma} h(W,A,X;\bm\gamma_0)\u_K(Z,A,X)\trans\}  \A_{K \times K}^{-1} \b_K \nonumber\\ 
    &\qquad\qquad  - \frac{1}{c}  \mE\left\{\frac{(-1)^{1-A}}{f(A\mid W,X)}\nabla_{\bm\gamma} h(W,A,X;\bm\gamma_0)\right\}. 
    \label{eq:b_tilde}
    \end{align} 
    Using the matrix inversion formula \eqref{eq:mat_inv} again, we can obtain that
    \begin{align*}
    \V_K  =&~   \left(\B_{(K+1)\times(p+1)}\trans  {\bm\Upsilon}_{(K+1)\times(K+1)}^{-1}  \B_{(K+1)\times(p+1)}\right)^{-1} \\
    =&~   \left(\begin{array}{cc}
    \wt{\A}_{p \times p}^{-1}+\frac{1}{\wt{c}} \wt{\A}_{p \times p}^{-1} \wt{\b}_p \wt{\b}_p\trans \wt{\A}_{p \times p}^{-1}  & -\frac{1}{\wt{c}} \wt{\A}_{p \times p}^{-1} \wt{\b}_p \\
    -\frac{1}{\wt{c}} \wt{\b}_p\trans \wt{\A}_{p \times p}^{-1} & \frac{1}{\wt{c}}
    \end{array}\right),
    \end{align*}
    where
    $$
    \wt{c}=\frac{1}{c}-\wt{\b}_p\trans \wt{\A}_{p \times p}^{-1} \wt{\b}_p.
    $$
    Next, we show that 
    \begin{align}
       & \wt{\A}_{p \times p}^{-1}+\frac{1}{\wt{c}} \wt{\A}_{p \times p}^{-1} \wt{\b}_p \wt{\b}_p\trans \wt{\A}_{p \times p}^{-1}= \V_{\gamma}+o(1), \label{eq:app_var_gamma}\\ 
       & \frac{1}{\wt{c}} = V_{\tau}+o(1). \label{eq:app_var_tau}
    \end{align}

    We first prove \eqref{eq:app_var_gamma}.
    Note that
    \beqrs
    && \left(\wt{\A}_{p \times p}^{-1}+\frac{1}{\wt{c}} \wt{\A}_{p \times p}^{-1} \wt{\b}_p \wt{\b}_p\trans \wt{\A}_{p \times p}^{-1}\right) \cdot\left(\wt{\A}_{p \times p}-c \cdot \wt{\b}_p \wt{\b}_p\trans\right) \\
    &=&  \I_{p \times p}-c \cdot \wt{\A}_{p \times p}^{-1} \wt{\b}_p \wt{\b}_p\trans+\frac{1}{\wt{c}} \wt{\A}_{p \times p}^{-1} \wt{\b}_p \wt{\b}_p\trans-\frac{c}{\wt{c}} \wt{\A}_{p \times p}^{-1} \wt{\b}_p \wt{\b}_p\trans \wt{\A}_{p \times p}^{-1} \wt{\b}_p \wt{\b}_p\trans \\
    &=&  \I_{p \times p}-\frac{1}{\wt{c}} \cdot\left\{\wt{c} c \cdot \wt{\A}_{p \times p}^{-1} \wt{\b}_p \wt{\b}_p\trans+c \cdot \wt{\A}_{p \times p}^{-1} \wt{\b}_p \wt{\b}_p\trans \cdot \wt{\b}_p\trans \wt{\A}_{p \times p}^{-1} \wt{\b}_p\right\}+\frac{1}{\wt{c}} \wt{\A}_{p \times p}^{-1} \wt{\b}_p \wt{\b}_p\trans \\
    &=&  \I_{p \times p}-\frac{1}{\wt{c}} \cdot\left\{\wt{\A}_{p \times p}^{-1} \wt{\b}_p \wt{\b}_p\trans\right\}+\frac{1}{\wt{c}} \wt{\A}_{p \times p}^{-1} \wt{\b}_p \wt{\b}_p\trans \quad\left(\text { since } c \wt{c}=1-c \cdot \wt{\b}_p\trans \wt{\A}_{p \times p}^{-1} \wt{\b}_p\right) \\
    &=&  \I_{p \times p}.
    \eeqrs
    We have 
    \begin{align}
    \label{eq:app_var_gamma_1}
    &\wt{\A}_{p \times p}^{-1}+\frac{1}{\wt{c}} \wt{\A}_{p \times p}^{-1} \wt{\b}_p \wt{\b}_p\trans \wt{\A}_{p \times p}^{-1}=\left\{\wt{\A}_{p \times p}-c \cdot \wt{\b}_p \wt{\b}_p\trans\right\}^{-1} .
    \end{align}
    Then, by \eqref{eq:A} and \eqref{eq:A_tilde}, we have 
    \begin{align*}
    &\wt{\A}_{p \times p}^{-1}+\frac{1}{\wt{c}} \wt{\A}_{p \times p}^{-1} \wt{\b}_p \wt{\b}_p\trans \wt{\A}_{p \times p}^{-1} \\ 
    =&~   \Big[\mE\{\nabla_{\bm\gamma} h(W,A,X;\bm\gamma_0)\u_K(Z,A,X)\trans\}\cdot \A_{K \times K}^{-1} \cdot  \mE\{\u_K(Z,A,X)\nabla_{\bm\gamma} h(W,A,X;\bm\gamma_0)\trans\}\Big]^{-1}\\
    =:&~  \Big[ \mE\{\p_{K}(Z,A,X)\p_{K}(Z,A,X)\trans\}\Big]^{-1},
    \end{align*}
     where 
    \begin{align*}
         \p_{K}(Z,A,X) =&~   \mE\left[\nabla_{\bm\gamma} h(W,A,X;\bm\gamma_0)\u_K(Z,A,X)\trans\right]\times  \mE\left[\{Y-h(W,A,X)\}^2 \u_K(Z,A,X)^{\otimes 2}\right]^{-1} \\
        & \qquad\times \sqrt{ \mE\left[\{Y-h(W,A,X)\}^2 \mid Z,A,X\right]}\u_K(Z,A,X)\\
        =&~   \mE\left[\frac{ \mE\{\nabla_{\bm\gamma} h(W,A,X;\bm\gamma_0)\mid Z,A,X\}}{\sqrt{ \mE\left[\{Y-h(W,A,X)\}^2 \mid Z,A,X\right]}}\sqrt{ \mE\left[\{Y-h(W,A,X)\}^2 \mid Z,A,X\right]}\u_K(Z,A,X)\trans\right] \\ 
        &\quad \times~  \mE\left[\left\{\sqrt{ \mE\left[\{Y-h(W,A,X)\}^2 \mid Z,A,X\right]} \u_K(Z,A,X)\right\}^{\otimes 2}\right]^{-1}\\ 
        &\quad\times~ \sqrt{ \mE\left[\{Y-h(W,A,X)\}^2 \mid Z,A,X\right]}\u_K(Z,A,X).
    \end{align*}
    Note that $\p_K(Z,A,X)$ is the least squared projection of $\dfrac{ \mE\{\nabla_{\bm\gamma} h(W,A,X;\bm\gamma_0)\mid Z,A,X\}}{\sqrt{ \mE\left[\{Y-h(W,A,X)\}^2 \mid Z,A,X\right]}}$ onto the space linear spanned by $\bigg\{\sqrt{ \mE[\{Y-h(W,A,X)\}^2 \mid Z,A,X]}\u_K(Z,A,X)\bigg\}$. By \Cref{ass:sieve}(ii) and \ref{cond:p_smooth}, we have
    \beqrs\label{eq:app_var_gamma_1.1}
    \mE\left[\left\|\p_K(Z,A,X) - \frac{ \mE\{\nabla_{\bm\gamma} h(W,A,X;\bm\gamma_0)\mid Z,A,X\}}{\sqrt{ \mE[\{Y-h(W,A,X)\}^2 \mid Z,A,X]}} \right\|^2\right] = o(1).
    \eeqrs
    Therefore we have that 
    \begin{align}
    \label{eq:app_var_gamma_2}
     \mE[\p_{K}(Z,A,X)\p_{K}(Z,A,X)\trans]
    \rightarrow&~ \mE\left[\frac{\bigbc{ \mE[\nabla_{\bm\gamma} h(W,A,X;\bm\gamma_0)\mid Z,A,X]}^{\otimes2}}{ \mE[\{Y-h(W,A,X)\}^2 \mid Z,A,X]}\right] .
    \end{align}
    It follows that
    \begin{align}
    \label{eq:app_var_gamma_3}
    \wt{\A}_{p \times p}^{-1}+\frac{1}{\wt{c}} \wt{\A}_{p \times p}^{-1} \wt{\b}_p \wt{\b}_p\trans \wt{\A}_{p \times p}^{-1} \rightarrow \left(\mE\left[\frac{\bigbc{ \mE[\nabla_{\bm\gamma} h(W,A,X;\bm\gamma_0)\mid Z,A,X]}^{\otimes2}}{ \mE[\{Y-h(W,A,X)\}^2 \mid Z,A,X]}\right]\right)^{-1} = \V_{\bm\gamma},
    \end{align}
    where the last equality is because
        \begin{align}\label{eq:app_var_gamma_verify}
            \V_{\bm\gamma}=&~ E\{\bm\psi_1(\O)\bm\psi_1(\O)\trans\}\nonumber\\
            =&~   \left[\mE\left\{\frac{\partial h(W,A,X;\bm\gamma_0)\m_\eff(Z,A,X)}{
                \partial\bm\gamma\trans
            }\right\}\right]^{-1} \nonumber \\
            &\qquad \times E\left[\{Y-h(W,A,X)\}^2\m_\eff(Z,A,X)\m_\eff(Z,A,X)\trans\right]\nonumber\\ 
            &\qquad \times \left[\mE\left\{\frac{\partial h(W,A,X;\bm\gamma_0)\m_\eff(Z,A,X)\trans}{
                \partial\bm\gamma}\right\}\right]^{-1} \nonumber\\ 
            =&~ \left(\mE\left[\frac{ \mE\{\nabla_{\bm\gamma} h(W,A,X;\bm\gamma_0)\mid Z,A,X\}}{ \mE[\{Y-h(W,A,X)\}^2 \mid Z,A,X]}\nabla_{\bm\gamma} h(W,A,X;\bm\gamma_0)\trans\right]\right)^{-1}\nonumber\\ 
            &\qquad \times E\left[\{Y-h(W,A,X)\}^2\frac{\bigbc{ \mE[\nabla_{\bm\gamma} h(W,A,X;\bm\gamma_0)\mid Z,A,X]}^{\otimes2}}{ \{\mE[\{Y-h(W,A,X)\}^2 \mid Z,A,X]\}^2}\right]\nonumber\\ 
            &\qquad \times \left(\mE\left[ \nabla_{\bm\gamma} h(W,A,X;\bm\gamma_0)\frac{ \mE\{\nabla_{\bm\gamma} h(W,A,X;\bm\gamma_0)\mid Z,A,X\}\trans}{ \mE[\{Y-h(W,A,X)\}^2 \mid Z,A,X]}\right]\right)^{-1}\nonumber\\ 
            =&~ \left(\mE\left[\frac{\bigbc{ \mE[\nabla_{\bm\gamma} h(W,A,X;\bm\gamma_0)\mid Z,A,X]}^{\otimes2}}{ \mE[\{Y-h(W,A,X)\}^2 \mid Z,A,X]}\right]\right)^{-1}.
        \end{align}
    Then \eqref{eq:app_var_gamma} is obtained.

    We next prove \eqref{eq:app_var_tau}. By definition of  $\wt{\A}_{p \times p}$ and $\wt{\b}_p$ in \eqref{eq:A_tilde} and \eqref{eq:b_tilde}, we have
        \begin{align} \label{eq:app_var_tau_t0}
        &\frac{1}{\wt{c}}= \left(\frac{1}{c}-\wt{\b}_p\trans \wt{\A}_{p \times p}^{-1} \wt{\b}_p\right)^{-1} 
        =  c-c \wt{\b}_p\trans \left(-\wt{\A}_{p \times p}+c \wt{\b}_p \wt{\b}_p\trans\right)^{-1} \wt{\b}_p  c  \nonumber\\
        =&~  c+ \left( \mE\{\nabla_{\bm\gamma} h(W,A,X;\bm\gamma_0)\u_K(Z,A,X)\trans\}  \A_{K \times K}^{-1} \b_K -  \mE\left\{\frac{(-1)^{1-A}}{f(A\mid W,X)}\nabla_{\bm\gamma} h(W,A,X;\bm\gamma_0)\right\}\right)\trans \nonumber\\
        & \times\Big( \mE\{\nabla_{\bm\gamma} h(W,A,X;\bm\gamma_0)\u_K(Z,A,X)\trans\} \A_{K \times K}^{-1}  \mE\{\u_K(Z,A,X)\nabla_{\bm\gamma} h(W,A,X;\bm\gamma_0)\trans\}\Big)^{-1} \nonumber\\
        & \times\left( \mE\{\nabla_{\bm\gamma} h(W,A,X;\bm\gamma_0)\u_K(Z,A,X)\trans\}  \A_{K \times K}^{-1} \b_K -   \mE\left\{\frac{(-1)^{1-A}}{f(A\mid W,X)}\nabla_{\bm\gamma} h(W,A,X;\bm\gamma_0)\right\}\right).
        \end{align}
    Recall that
    \begin{align*}
    R(Z,A,X)=&~  \frac{ \mE[\{Y-h(W,A,X)\}\{h(W,1,X)-h(W,0,X)-\tau_0\}\mid Z,A,X]}{ \mE[\{Y-h(W,A,X)\}^2 \mid Z,A,X]}. 
    \end{align*}
    In what follows, we show that 
        \begin{align}
        &c = \mE\left[\{\tau_0 - h(W,1,X)+h(W,0,X)\}^2\right] \nonumber\\ 
        &\qquad- \mE\left[ R(Z,A,X)^2 \{Y-h(W,A,X)\}^2\right] +o(1), \label{eq:app_var_tau_t1}\\ 
        &\mE\{\nabla_{\bm\gamma} h(W,A,X;\bm\gamma_0)\u_K(Z,A,X)\trans\}  \A_{K \times K}^{-1} \b_K -  \mE\left\{\frac{(-1)^{1-A}}{f(A\mid W,X)}\nabla_{\bm\gamma} h(W,A,X;\bm\gamma_0)\right\} \nonumber\\
        &~~= -\mE\left[\nabla_{\bm\gamma} h(W,A,X;\bm\gamma_0) \left\{R(Z,A,X)+\frac{(-1)^{1-A}}{f(A\mid W,X)}\right\} \right] + o(1).\label{eq:app_var_tau_t2}
        \end{align}
    {Consider the term \eqref{eq:app_var_tau_t1}.} Recall that
    \beqrs\label{eq:app_var_tau_1}
    c= \mE\left[\{\tau_0 - h(W,1,X)+h(W,0,X)\}^2\right] -\b_K\trans \A_{K \times K}^{-1} \b_K.
    \eeqrs
    By the definition of $\A_{K \times K}$ and $\b_K$ in \eqref{eq:A} and \eqref{eq:b_k}, we have
    \begin{align*}
    c=&~ \mE\left[\{\tau_0 - h(W,1,X)+h(W,0,X)\}^2\right]  \\
    & -   \mE\big[\{Y-h(W,A,X)\}\{\tau_0 - h(W,1,X)+h(W,0,X)\} \u_K(Z,A,X)\trans\big] \\ 
    &\qquad \times~  \mE\left[\{Y-h(W,A,X)\}^2 \u_K(Z,A,X)^{\otimes 2}\right]^{-1}  \\
    &\qquad \times~\mE\big[\{Y-h(W,A,X)\}\{\tau_0 - h(W,1,X)+h(W,0,X)\} \u_K(Z,A,X)\big] \\ 
    =:&~ \mE\left[\{\tau_0 - h(W,1,X)+h(W,0,X)\}^2\right]
    - \mE\{\wt{p}_K(Z,A,X)^2\},
    \end{align*}
    where
        \begin{align*}
            \wt{p}_K(Z,A,X) =&~ \mE\big[\{Y-h(W,A,X)\}\{\tau_0 - h(W,1,X)+h(W,0,X)\} \u_K(Z,A,X)\trans\big] \\ 
            &\quad \times~  \mE\left[\{Y-h(W,A,X)\}^2 \u_K(Z,A,X)^{\otimes 2}\right]^{-1} \\ 
            &\quad \times~  \u_K(Z,A,X)\sqrt{ \mE[\{Y-h(W,A,X)\}^2 \mid Z,A,X]}  \\ 
            =&~ -  \mE\Bigg[R(Z,A,X)\sqrt{ \mE[\{Y-h(W,A,X)\}^2 \mid Z,A,X]}  \\ 
            &\qquad\qquad \times \sqrt{ \mE\left[\{Y-h(W,A,X)\}^2 \mid Z,A,X\right]}\u_K(Z,A,X)\trans\Bigg] \\ 
            &\quad \times~  \mE\left[\left\{\sqrt{ \mE\left[\{Y-h(W,A,X)\}^2 \mid Z,A,X\right]} \u_K(Z,A,X)\right\}^{\otimes 2}\right]^{-1}\\ 
            &\quad\times~ \sqrt{ \mE\left[\{Y-h(W,A,X)\}^2 \mid Z,A,X\right]}\u_K(Z,A,X).
        \end{align*}
    Note that $\wt{p}_K(Z,A,X)$ is the least squared projection of $$R(Z,A,X)\sqrt{\mE[\{Y-h(W,A,X)\}^2 \mid Z,A,X]}$$ onto the space linear spanned by $\Big\{\sqrt{ \mE[\{Y-h(W,A,X)\}^2 \mid Z,A,X]}\u_K(Z,A,X)\Big\}$. By \Cref{ass:sieve}(ii) and \ref{cond:p_smooth}, we have
        \begin{align}
        \label{eq:app_var_tau_1.1}
        \mE\left[\left\|\wt{p}_K(Z,A,X) 
        -R(Z,A,X)\sqrt{\mE[\{Y-h(W,A,X)\}^2 \mid Z,A,X]}
        \right\|^2\right] = o(1).
        \end{align}
    Therefore,
    \begin{align*}
        \mE\{\wt{p}_K(Z,A,X)^2\}
    \rightarrow&~\mE\left[R(Z,A,X)^2\mE[\{Y-h(W,A,X)\}^2\mid Z,A,X] \right]\\ 
    =&~\mE\left[R(Z,A,X)^2 \{Y-h(W,A,X)\}^2 \right].
    \end{align*}
    Then \eqref{eq:app_var_tau_t1} is obtained.
    Next, we consider the term \eqref{eq:app_var_tau_t2}.
    Note that 
    \begin{align*}
        & \mE\{\nabla_{\bm\gamma} h(W,A,X;\bm\gamma_0)\u_K(Z,A,X)\trans\}  \A_{K \times K}^{-1} \b_K \\ 
        =&~    \mE\big[\nabla_{\bm\gamma} h(W,A,X;\bm\gamma_0) \u_K(Z,A,X)\trans\big] \cdot
         \mE\left[\{Y-h(W,A,X)\}^2 \u_K(Z,A,X)^{\otimes 2}\right]^{-1}  \\
        &\quad \times~\mE\big[\{Y-h(W,A,X)\}\{\tau_0 - h(W,1,X)+h(W,0,X)\} \u_K(Z,A,X)\big] \\ 
        =&~  -\left[  \frac{ \nabla_{\bm\gamma} h(W,A,X;\bm\gamma_0)}{\sqrt{ \mE[\{Y-h(W,A,X)\}^2 \mid Z,A,X]}} 
        \sqrt{ \mE[\{Y-h(W,A,X)\}^2 \mid Z,A,X]}
        \u_K(Z,A,X)\trans \right] \\
        &\quad \times~  \mE\left[\left\{\sqrt{ \mE\left[\{Y-h(W,A,X)\}^2 \mid Z,A,X\right]} \u_K(Z,A,X)\right\}^{\otimes 2}\right]^{-1}\\ 
        &\quad \times~  \mE\Bigg[R(Z,A,X)\sqrt{ \mE[\{Y-h(W,A,X)\}^2 \mid Z,A,X]}  \\ 
        &\qquad\qquad \times \sqrt{ \mE\left[\{Y-h(W,A,X)\}^2 \mid Z,A,X\right]}\u_K(Z,A,X) \Bigg] \\ 
        =&~ -\mE\left[  \frac{ \nabla_{\bm\gamma} h(W,A,X;\bm\gamma_0)}{\sqrt{ \mE[\{Y-h(W,A,X)\}^2 \mid Z,A,X]}} \wt{p}_K(Z,A,X) \right]
    \end{align*}
    Therefore, \eqref{eq:app_var_tau_1.1} implies that 
    \begin{align*}
        &\mE\{\nabla_{\bm\gamma} h(W,A,X;\bm\gamma_0)\u_K(Z,A,X)\trans\}  \A_{K \times K}^{-1} \b_K 
        =-\mE\left[\nabla_{\bm\gamma} h(W,A,X;\bm\gamma_0)R(Z,A,X) \right]+o(1).
    \end{align*}
    Then we obtain \eqref{eq:app_var_tau_t2}.
    Combining \eqref{eq:app_var_tau_t0}--\eqref{eq:app_var_tau_t2}, and recalling that
    \begin{align*}
        \bm\kappa = \V_{\bm\gamma}\mE\left[\left\{R(Z,A,X)+\frac{(-1)^{1-A}}{f(A\mid W,X)}\right\}\nabla_{\bm\gamma} h(W,A,X;\bm\gamma_0)\right],
    \end{align*}
    and
        \begin{align*}
            \left[\mE\{\nabla_{\bm\gamma} h(W,A,X;\bm\gamma_0)\u_K(Z,A,X)\trans\} \A_{K \times K}^{-1}   \mE\{\u_K(Z,A,X)\nabla_{\bm\gamma} h(W,A,X;\bm\gamma_0)\trans\}\right]^{-1}
            = \V_\gamma + o(1),
        \end{align*}
    from \eqref{eq:app_var_gamma_3}, we have
    \begin{align}\label{eq:app_var_tau_2}
    \frac{1}{\wt{c}} =&~  \mE\left[\{\tau_0 - h(W,1,X)+h(W,0,X)\}^2\right] \nonumber\\
    &~ - \mE\left[R(Z,A,X)^2 \{Y-h(W,A,X)\}^2\right] +\bm\kappa\trans\V_\gamma^{-1}\bm\kappa
    + o(1).
    \end{align}
    To complete the proof of \eqref{eq:app_var_tau}, we need to show $\frac{1}{\wt{c}}=\mE\{\psi_2(\O)^2\}+o(1)$. Recall that
    \begin{align*}
        & \psi_2(\O) = h(W,1,X)-h(W,0,X)-\tau_0+t(Z,A,X)\{Y-h(W,A,X)\},\\
       \text{and}~~ & t(Z,A,X) = \bm\kappa\trans\m_\eff(Z,A,X) - R(Z,A,X).
    \end{align*}
    We have
    \begin{align}\label{eq:app_var_tau_exp}
        \mE\{\psi_2(\O)^2\} =&~ \mE\left(\left[h(W,1,X)-h(W,0,X)-\tau_0+t(Z,A,X)\{Y-h(W,A,X)\}\right]^2\right) \nonumber\\ 
        =&~ \mE\left[ \{h(W,1,X)-h(W,0,X)-\tau_0\}   ^2\right] \nonumber\\ 
        &~ + \mE\left[ t(Z,A,X)^2\{Y-h(W,A,X)\}^2\right]  \nonumber\\ 
        &~ + 2\mE\left[t(Z,A,X) \{h(W,1,X)-h(W,0,X)-\tau_0\}\{Y-h(W,A,X)\}\right] \nonumber\\
        =&~  \mE\left[ \{h(W,1,X)-h(W,0,X)-\tau_0\}   ^2\right] \nonumber\\ 
        &~ + \bm\kappa\trans \mE\left[ \m_\eff(Z,A,X)\m_\eff(Z,A,X)\trans\{Y-h(W,A,X)\}^2\right] \bm\kappa \nonumber\\
        &~ + \mE\left[ R(Z,A,X)^2 \{Y-h(W,A,X)\}^2\right] \nonumber\\ 
        &~ - 2\bm\kappa\trans \mE\left[ \m_\eff(Z,A,X)R(Z,A,X)\{Y-h(W,A,X)\}^2\right] \nonumber\\
        &~ +  2\bm\kappa\trans \mE\left[ \m_\eff(Z,A,X)\{h(W,1,X)-h(W,0,X)-\tau_0\}\{Y-h(W,A,X)\}\right] \nonumber\\
        &~ -  2\mE\left[ R(Z,A,X)\{h(W,1,X)-h(W,0,X)-\tau_0\}\{Y-h(W,A,X)\}\right] \nonumber\\ 
        =&~ \mE\left[\{h(W,1,X)-h(W,0,X)-\tau_0\}^2\right] + \bm\kappa\trans\V_\gamma^{-1}\bm\kappa  \nonumber\\
        &~ - \mE\left[R(Z,A,X)^2 \{Y-h(W,A,X)\}^2 \right] ,
    \end{align}
    where the last equality follows from \eqref{eq:app_var_gamma_verify} and the facts 
    \begin{align*}
       & \mE\left[ \m_\eff(Z,A,X)\{h(W,1,X)-h(W,0,X)-\tau_0\}\{Y-h(W,A,X)\}\right] \\ 
       =&~ \mE\left[ \m_\eff(Z,A,X)R(Z,A,X)\{Y-h(W,A,X)\}^2\right],
    \end{align*}
    and
    \begin{align*}
        & \mE\left[ R(Z,A,X)\{h(W,1,X)-h(W,0,X)-\tau_0\}\{Y-h(W,A,X)\}\right] \\ 
        =&~ \mE\left[ R(Z,A,X)^2 \{Y-h(W,A,X)\}^2\right].
    \end{align*}
    Then \eqref{eq:app_var_tau} is obtained. This completes the proof.
\end{proof}

\subsection{Proof of Theorem~\ref{coro:super_eff}} 
\label{ssec:pf_coro_super_eff}
\begin{proof}
    We first prove (i).
    Recall that the treatment-confounding bridge function $q(z, a, x)$ satisfies
    \begin{align*}
        \mE\{q(Z, A, X) \mid W, A, X\}=\frac{1}{f(A \mid W, X)}.
\end{align*} 
We have 
\begin{align*}
    \bm\kappa =&~  \V_{\bm\gamma}\mE\left[\left\{R(Z,A,X)+\frac{(-1)^{1-A}}{f(A\mid W,X)}\right\}\nabla_{\bm\gamma} h(W,A,X;\bm\gamma_0)\right]\\
    =&~ \V_{\bm\gamma}\mE\left[\left\{R(Z,A,X)+(-1)^{1-A}q(Z,A,X)\right\}\mE\{\nabla_{\bm\gamma} h(W,A,X;\bm\gamma_0)\mid Z,A,X\}\right].
\end{align*}
Then 
\begin{align*}
 \bm\kappa\trans\V_{\bm\gamma}^{-1}\bm\kappa
 =&~ \mE\left[\left\{R(Z,A,X)+(-1)^{1-A}q(Z,A,X)\right\}\mE\{\nabla_{\bm\gamma} h(W,A,X;\bm\gamma_0)\mid Z,A,X\}\right]\trans\\ 
 &\quad \times  \left(\mE\left[\frac{\bigbc{ \mE[\nabla_{\bm\gamma} h(W,A,X;\bm\gamma_0)\mid Z,A,X]}^{\otimes2}}{ \mE[\{Y-h(W,A,X)\}^2 \mid Z,A,X]}\right]\right)^{-1}\\ 
 &\quad \times \mE\left[\left\{R(Z,A,X)+(-1)^{1-A}q(Z,A,X)\right\}\mE\{\nabla_{\bm\gamma} h(W,A,X;\bm\gamma_0)\mid Z,A,X\}\right]\\ 
 =:&~ \mE\{s(Z,A,X)^2\},
\end{align*}
where 
\begin{align*}
    s(Z,A,X) =&~ 
    \mE\Bigg[\left\{R(Z,A,X)+(-1)^{1-A}q(Z,A,X)\right\} \sqrt{\mE[\{Y-h(W,A,X)\}^2 \mid Z,A,X]}\\ 
    &\qquad\quad \times \frac{\mE\{\nabla_{\bm\gamma} h(W,A,X;\bm\gamma_0)\mid Z,A,X\}}{\sqrt{\mE[\{Y-h(W,A,X)\}^2 \mid Z,A,X]}}\Bigg]\trans\\ 
    &\quad \times  \left(\mE\left[\bigbc{\frac{ \mE[\nabla_{\bm\gamma} h(W,A,X;\bm\gamma_0)\mid Z,A,X]}{\sqrt{ \mE[\{Y-h(W,A,X)\}^2 \mid Z,A,X]}}}^{\otimes2} \right]\right)^{-1} \\ 
     &\quad \times \frac{\mE\{\nabla_{\bm\gamma} h(W,A,X;\bm\gamma_0)\mid Z,A,X\}}{\sqrt{\mE[\{Y-h(W,A,X)\}^2 \mid Z,A,X]}}.
\end{align*}
Note that $s(Z,A,X)$ is the least squared projection of $$\left\{R(Z,A,X)+(-1)^{1-A}q(Z,A,X)\right\} \sqrt{\mE[\{Y-h(W,A,X)\}^2 \mid Z,A,X]}$$ onto the space linear spanned by $\Big\{\dfrac{\mE\{\nabla_{\bm\gamma} h(W,A,X;\bm\gamma_0)\mid Z,A,X\}}{\sqrt{\mE[\{Y-h(W,A,X)\}^2 \mid Z,A,X]}}\Big\}$. By the  Hilbert projection theorem, we have 
\begin{align}\label{eq:app_var_tau_ineq}
    \mE\{s(Z,A,X)^2\}\le 
    \mE \left[\left\{R(Z,A,X)+(-1)^{1-A}q(Z,A,X)\right\}^2 \{Y-h(W,A,X)\}^2 \right].
\end{align}
As a result, by \eqref{eq:app_var_tau_exp}, we have 
\begin{align*}
    V_\tau =&~ \mE\left[\{h(W,1,X)-h(W,0,X)-\tau_0\}^2\right] + \bm\kappa\trans\V_\gamma^{-1}\bm\kappa  \nonumber\\
    &~ - \mE\left[R(Z,A,X)^2 \{Y-h(W,A,X)\}^2\right]\\ 
    \le&~ \mE\left[\{h(W,1,X)-h(W,0,X)-\tau_0\}^2\right]  \nonumber\\
    &~ + \mE \left[\left\{R(Z,A,X)+(-1)^{1-A}q(Z,A,X)\right\}^2 \{Y-h(W,A,X)\}^2 \right]\\
    &~ - \mE\left[R(Z,A,X)^2 \{Y-h(W,A,X)\}^2\right]\\ 
    =&~  \mE\left[\{h(W,1,X)-h(W,0,X)-\tau_0\}^2\right]  \nonumber\\
    &~ + \mE\left[q(Z,A,X)^2 \{Y-h(W,A,X)\}^2\right]\\ 
    &~ +  2 \mE\left[(-1)^{1-A}q(Z,A,X)R(Z,A,X) \{Y-h(W,A,X)\}^2\right]\\ 
    =&~ \mE\left[\{h(W,1,X)-h(W,0,X)-\tau_0\}^2\right]  \nonumber\\
    &~ + \mE\left[q(Z,A,X)^2 \{Y-h(W,A,X)\}^2\right]\\ 
    &~ +  2 \mE\left[(-1)^{1-A}q(Z,A,X)\{Y-h(W,A,X)\}\{h(W,1,X)-h(W,0,X)-\tau_0\}\right]\\
    =&~ \mE\left\{\psi_{\eff}(\O)^2\right\}= V_{\tau,\eff},
\end{align*}
where the third equality follows from the definition of $R(Z,A,X)$. 
Additionally, the equality $V_{\tau}= V_{\tau,\eff}$ holds if and only if the equality in \eqref{eq:app_var_tau_ineq} holds, which happens if and only if 
$$\left\{R(Z,A,X)+(-1)^{1-A}q(Z,A,X)\right\} \sqrt{\mE[\{Y-h(W,A,X)\}^2 \mid Z,A,X]}$$ lies within the space linear spanned by $\left\{\dfrac{\mE\{\nabla_{\bm\gamma} h(W,A,X;\bm\gamma_0)\mid Z,A,X\}}{\sqrt{\mE[\{Y-h(W,A,X)\}^2 \mid Z,A,X]}}\right\}$, equivalently, there is a vector of constants $\bm\alpha$ such that 
\begin{align*}
    &\left\{R(Z,A,X)+(-1)^{1-A}q(Z,A,X)\right\} \sqrt{\mE[\{Y-h(W,A,X)\}^2 \mid Z,A,X]} \\ 
    &\qquad =  \frac{\bm\alpha\trans\mE\{\nabla_{\bm\gamma} h(W,A,X;\bm\gamma_0)\mid Z,A,X\}}{\sqrt{\mE[\{Y-h(W,A,X)\}^2 \mid Z,A,X]}} ,
\end{align*}
namely,
\begin{align*}
    R(Z,A,X)+(-1)^{1-A}q(Z,A,X) =&~   \frac{\bm\alpha\trans\mE\{\nabla_{\bm\gamma} h(W,A,X;\bm\gamma_0)\mid Z,A,X\}}{ \mE[\{Y-h(W,A,X)\}^2 \mid Z,A,X]} \\
     =&~  \bm\alpha\trans\m_\eff(Z,A,X).
\end{align*}
This completes the proof of (i).

Next, we give a proof of (ii). Recall that under $\calM_{sub}$, the model is 
\begin{align*} 
    \mE\left\{Y-h(W,A,X;\bm\gamma_0)\mid Z,A,X\right\}=0,~~
    \tau_0 = \mE\left\{h(W,1,X;\bm\gamma_0)-h(W,0,X;\bm\gamma_0)\right\},
\end{align*}
which is a special case of the sequential model studied in \cite{AiChen2012}. So we use result of Remark 2.2 of \cite{AiChen2012} (also Theorem 1 of \cite{BrownNewey1998}) to derive the semiparametric efficiency bound of $\tau_0$ under $\calM_{sub}$. Let 
\begin{align*}
    \varepsilon(\O;\bm\gamma,\tau) =&~  h(W,1,X;\bm\gamma)-h(W,0,X;\bm\gamma)-\tau-R(Z,A,X)\{Y-h(W,A,X;\gamma)\}.
\end{align*}
Then Remark 2.2 of \cite{AiChen2012} showed that the the semiparametric efficiency bound of $\tau_0$ under $\calM_{sub}$ is 
\begin{align*}
& E\left\{\varepsilon(\O;\bm\gamma_0,\tau_0)^2\right\} + \frac{\partial \{\varepsilon(\O;\bm\gamma_0,\tau_0)\}}{\partial\bm\gamma\trans}\V_{\bm\gamma}\frac{\partial \{\varepsilon(\O;\bm\gamma_0,\tau_0)\}}{\partial\bm\gamma}\\
=&~ \mE\left[ \{h(W,1,X;\bm\gamma_0)-h(W,0,X;\bm\gamma_0)-\tau_0\}   ^2\right] \nonumber\\ 
&~ + \mE\left[ R(Z,A,X)^2\{Y-h(W,A,X;\bm\gamma_0)\}^2\right]  \nonumber\\ 
&~ - 2\mE\left[R(Z,A,X) \{h(W,1,X;\bm\gamma_0)-h(W,0,X;\bm\gamma_0)-\tau_0\}\{Y-h(W,A,X;\bm\gamma_0)\}\right] \nonumber\\
&~ + \bm\kappa\trans\V_{\bm\gamma}^{-1}\bm\kappa \nonumber\\ 
=&~ \mE\left[\{h(W,1,X;\bm\gamma_0)-h(W,0,X;\bm\gamma_0)-\tau_0\}^2\right] 
 - \mE\left[R(Z,A,X)^2 \{Y-h(W,A,X;\bm\gamma_0)\}^2 \right] \nonumber\\
&~ + \bm\kappa\trans\V_\gamma^{-1}\bm\kappa \nonumber\\ 
=&~ \mE\{\psi_2(\O)^2\} \equiv V_{\tau},
\end{align*}
where first equality is from the definition of $\bm\kappa$ in \Cref{thm:asym_normal}, and the third equality is from \eqref{eq:app_var_tau_exp}. This completes the proof of the corollary.
\end{proof}

\subsection{Proof of Theorem~\ref{coro:super_plug_in}}
\begin{proof}
    Recall that 
    \begin{align*}
        \wh{\tau}^\plugin(\m_\eff) = \frac{1}{N}\sum_{i=1}^{N} \left\{h(W_i,1,X_i;\wh{\bm\gamma}(\m_\eff))-h(W_i,0,X_i;\wh{\bm\gamma}(\m_\eff))\right\}.
    \end{align*}
    Following uniform weak law of large number \citep{NeweyMcFadden1994} under the regularity conditions, we have 
    \begin{align*}
        & \sqrt{N}\{\wh{\tau}^\plugin(\m_\eff)-\tau_0\}\\
        =&~ \frac{1}{\sqrt{N}}\sum_{i=1}^{N} \left\{h(W_i,1,X_i;\bm\gamma_0)-h(W_i,0,X_i;\bm\gamma_0)-\tau_0\right\}\\ 
        &~ + \mE \left\{\frac{(-1)^{1-A}}{f(A\mid W,X)} \nabla_{\bm\gamma} h(W,A,X;\bm\gamma_0)\right\}\trans \sqrt{N}\left\{\wh{\bm\gamma}(\m_\eff)-\bm\gamma_0\right\}  + o_p(1) \\ 
        =&~ \frac{1}{\sqrt{N}}\sum_{i=1}^{N} \Big[h(W_i,1,X_i;\bm\gamma_0)-h(W_i,0,X_i;\bm\gamma_0)-\tau_0 - R(Z_i,A_i,X_i)\{Y_i-h(W_i,A_i,X_i)\}\Big]\\ 
        &~ + \frac{1}{\sqrt{N}}\sum_{i=1}^{N} \Bigg[\mE \left\{\frac{(-1)^{1-A}}{f(A\mid W,X)} \nabla_{\bm\gamma} h(W,A,X;\bm\gamma_0)\right\}\trans \V_{\bm\gamma}\m_\eff(Z_i,A_i,X_i)\\ 
        &\qquad\qquad\qquad\qquad\qquad\qquad+  R(Z_i,A_i,X_i)\Bigg]\{Y_i-h(W_i,A_i,X_i)\} + o_p(1)\\ 
        =:&~ T_1+T_2+ o_p(1). 
    \end{align*}
    Note that 
    \begin{align*}
        &\mE\big[\{h(W,1,X;\bm\gamma_0)-h(W,0,X;\bm\gamma_0)-\tau_0\}\{Y-h(W,A,X)\}\big]\\ 
        &\quad= \mE\big[ R(Z,A,X)\{Y-h(W,A,X)\}^2\big].
    \end{align*}
    We have 
    \begin{align*}
        V^\plugin_{\tau,\eff} = \mE\{(T_1+T_2)^2\} = \mE(T_1^2)+\mE(T_2^2).
    \end{align*}
    Note that 
    \begin{align*}
        \mE(T_1^2)=&~  \mE\left[ \{h(W,1,X)-h(W,0,X)-\tau_0\}   ^2\right] \nonumber\\ 
        &~ + \mE\left[ R(Z,A,X)^2 \{Y-h(W,A,X)\}^2\right] \nonumber\\ 
        &~ -  2\mE\left[ R(Z,A,X)\{h(W,1,X)-h(W,0,X)-\tau_0\}\{Y-h(W,A,X)\}\right] \nonumber\\ 
        =&~ \mE\left[ \{h(W,1,X)-h(W,0,X)-\tau_0\}   ^2\right]  - \mE\left[ R(Z,A,X)^2 \{Y-h(W,A,X)\}^2\right], 
    \end{align*}
    and 
    \begin{align}
        \mE(T_2^2)=&~
        \mE \left\{\frac{(-1)^{1-A}}{f(A\mid W,X)} \nabla_{\bm\gamma} h(W,A,X;\bm\gamma_0)\right\}\trans \V_{\bm\gamma}
        \mE\left[ \m_\eff(Z,A,X)^{\otimes2} \{Y-h(W,A,X)\}^2\right]\nonumber\\ 
        &\qquad\qquad  \times \V_{\bm\gamma} \mE \left\{\frac{(-1)^{1-A}}{f(A\mid W,X)} \nabla_{\bm\gamma} h(W,A,X;\bm\gamma_0)\right\} \label{eq:app_pf_coro_plug_in_T2.1}\\ 
        &~ + \mE\left[ R(Z,A,X)^2 \{Y-h(W,A,X)\}^2\right] \label{eq:app_pf_coro_plug_in_T2.2}\\ 
        &~ + 2\mE \left\{\frac{(-1)^{1-A}}{f(A\mid W,X)} \nabla_{\bm\gamma} h(W,A,X;\bm\gamma_0)\right\}\trans \V_{\bm\gamma}\nonumber\\
        &\qquad\qquad \times \mE\left[ \m_\eff(Z,A,X)R(Z,A,X)  \{Y-h(W,A,X)\}^2\right].\label{eq:app_pf_coro_plug_in_T2.3}
    \end{align}
    By \eqref{eq:app_var_gamma_verify}, we have 
    \begin{align*}
        \eqref{eq:app_pf_coro_plug_in_T2.1}
        =
        \mE \left\{\frac{(-1)^{1-A}}{f(A\mid W,X)} \nabla_{\bm\gamma} h(W,A,X;\bm\gamma_0)\right\}\trans \V_{\bm\gamma} \mE \left\{\frac{(-1)^{1-A}}{f(A\mid W,X)} \nabla_{\bm\gamma} h(W,A,X;\bm\gamma_0)\right\}.
    \end{align*}
    Similar to \eqref{eq:app_var_tau_ineq}, we have 
    \begin{align}\label{eq:app_pf_coro_plug_in_2}
        \eqref{eq:app_pf_coro_plug_in_T2.2}
        \ge \mE \left\{R(Z,A,X) \nabla_{\bm\gamma} h(W,A,X;\bm\gamma_0)\right\}\trans \V_{\bm\gamma} \mE \left\{R(Z,A,X) \nabla_{\bm\gamma} h(W,A,X;\bm\gamma_0)\right\}.
    \end{align}
     By the definition of $\m_\eff$, we have
    \begin{align*}
        \eqref{eq:app_pf_coro_plug_in_T2.3}=
        2\mE \left\{\frac{(-1)^{1-A}}{f(A\mid W,X)} \nabla_{\bm\gamma} h(W,A,X;\bm\gamma_0)\right\}\trans \V_{\bm\gamma} \mE \left\{R(Z,A,X) \nabla_{\bm\gamma} h(W,A,X;\bm\gamma_0)\right\}.
    \end{align*}
    Combining \eqref{eq:app_pf_coro_plug_in_T2.1}--\eqref{eq:app_pf_coro_plug_in_T2.3} yields that 
    \begin{align*}
        \mE(T_2^2)&\ge~ \mE\left[\left\{R(Z,A,X)+\frac{(-1)^{1-A}}{f(A\mid W,X)}\right\}\nabla_{\bm\gamma} h(W,A,X;\bm\gamma_0)\right]\trans \\ 
        &\qquad\qquad\times \V_{\bm\gamma} \mE\left[\left\{R(Z,A,X)+\frac{(-1)^{1-A}}{f(A\mid W,X)}\right\}\nabla_{\bm\gamma} h(W,A,X;\bm\gamma_0)\right]\\ 
        =&~ \bm\kappa\trans\V_\gamma^{-1}\bm\kappa.
    \end{align*}
    Then by \eqref{eq:app_var_tau_exp}, we have 
    \begin{align*}
        V^\plugin_{\tau,\eff}
        \ge&~  \mE\left[\{h(W,1,X)-h(W,0,X)-\tau_0\}^2\right] + \bm\kappa\trans\V_\gamma^{-1}\bm\kappa  \nonumber\\
        &~ - \mE\left[R(Z,A,X)^2 \{Y-h(W,A,X)\}^2 \right]\\ 
        =&~ V_\tau.
    \end{align*}
    The equality holds if and only if the equality in \eqref{eq:app_pf_coro_plug_in_2} holds, which happens if and only if 
     $$R(Z,A,X) \sqrt{\mE[\{Y-h(W,A,X)\}^2 \mid Z,A,X]}$$ lies within the space linear spanned by $\left\{\dfrac{\mE\{\nabla_{\bm\gamma} h(W,A,X;\bm\gamma_0)\mid Z,A,X\}}{\sqrt{\mE[\{Y-h(W,A,X)\}^2 \mid Z,A,X]}}\right\}$, equivalently, there is a vector of constants $\bm\alpha$ such that 
    \begin{align*}
        & R(Z,A,X)  \sqrt{\mE[\{Y-h(W,A,X)\}^2 \mid Z,A,X]} \\ 
        &\qquad =  \frac{\bm\alpha\trans\mE\{\nabla_{\bm\gamma} h(W,A,X;\bm\gamma_0)\mid Z,A,X\}}{\sqrt{\mE[\{Y-h(W,A,X)\}^2 \mid Z,A,X]}} ,
    \end{align*}
    namely,
    \begin{align*}
        R(Z,A,X)  =&~   \frac{\bm\alpha\trans\mE\{\nabla_{\bm\gamma} h(W,A,X;\bm\gamma_0)\mid Z,A,X\}}{ \mE[\{Y-h(W,A,X)\}^2 \mid Z,A,X]} \\
         =&~  \bm\alpha\trans\m_\eff(Z,A,X).
    \end{align*}
    This completes the proof.
\end{proof}

\section{Supplements to simulation studies}\label{sec:app_supp_simu}

\subsection{Parameterization of bridge functions}
In this section, we show that the data generating mechanism in our simulation study is compatible with the following models of the confounding bridge functions $h$ and $q$:
\begin{align*}
    & h(w,a,x;\bm\gamma) = \gamma_0 +  \gamma_1w + \gamma_2a + \gamma_3x,\\
    & q(z,a,x;\bm\theta) =  1+\exp\left\{(-1)^{a}(\theta_0+\theta_1z+\theta_2a+\theta_3x)\right\}.
\end{align*}
Note that
\begin{align*}
    & \mE(W\mid Z,A,X)= \beta_{w0}+\beta_{wx}X+\beta_{wu}\mE(U\mid Z,A,X), \\
    & \mE(Y\mid Z,A,X)= \beta_{y0}+\beta_{yw} \mE(W\mid Z,A,X)+\beta_{ya}A+\beta_{yx}X+\beta_{yu}\mE(U\mid Z,A,X).
\end{align*}
It follows that 
\begin{align*}
    \mE(Y\mid Z,A,X) = \gamma_0^*+\gamma_1^* \mE(W\mid Z,A,X)+\gamma_2^* A+\gamma_3^* X,
\end{align*}
where 
\begin{align*}
    \gamma_0^* = \beta_{y0}-\beta_{w0}\beta_{yu}/\beta_{wu},~~ 
    \gamma_1^* = \beta_{yw}+\beta_{yu}/\beta_{wu},~~
    \gamma_2^* = \beta_{ya},~~
    \gamma_3^* = \beta_{yx}-\beta_{wx}\beta_{yu}/\beta_{wu}.
\end{align*}
Therefore, $h(w,a,x;\bm\gamma^*)=\gamma_0^*+\gamma_1^*w+\gamma_2^* a+\gamma_3^* x$ satisfies the integral equation \eqref{eq:iden_bridge_h}.

Next we show that there is a function
\begin{align*}
q(z,a,x;\bm\theta^*) = 1+\exp\left\{(-1)^{a}(\theta_0^*+\theta_1^*z+\theta_2^*a+\theta_3^*x)\right\},
\end{align*}
which satisfies the integral equation \eqref{eq:iden_bridge_q}. Note that we only need to show 
\begin{align}\label{eq:app_supp_simu_1}
\int_\calZ q(z,A,X;\bm\theta^*) f(z\mid U,A,X) dz = \frac{1}{f(A\mid U,X)},
\end{align}
holds. Then we have 
\begin{align*}
    \frac{1}{f(A\mid W,X)} =&~  \int_{\mathcal{U}} \frac{1}{f(A\mid u,W,X)} f(u\mid W,A,X) du \\
    =&~ \int_{\mathcal{U}} \frac{1}{f(A\mid u,X)} f(u\mid W,A,X) du \\
    =&~ \int_{\mathcal{U}} \left\{\int_\calZ q(z,A,X;\bm\theta^*) f(z\mid u,A,X) dz\right\} f(u\mid W,A,X) du \\
    =&~ \int_\calZ q(z,A,X;\bm\theta^*) \left\{\int_{\mathcal{U}} f(z\mid u,A,X) f(u\mid W,A,X) du \right\} dz \\
    =&~ \int_\calZ q(z,A,X;\bm\theta^*) \left\{\int_{\mathcal{U}} f(z\mid u,W,A,X) f(u\mid W,A,X) du \right\} dz \\
    =&~ \int_\calZ q(z,A,X;\bm\theta^*) f(z\mid W,A,X) dz \\
    =&~ \mE\{q(Z,A,X;\bm\theta^*)\mid Z,A,X\},
\end{align*}
where the second and fifth equalities are by \Cref{ass:causal_and_proxy}(iii), and the third equality is by \eqref{eq:app_supp_simu_1}. We next show \eqref{eq:app_supp_simu_1}. Recall that by the DGP,
\begin{align}\label{eq:app_supp_simu_2}
    \frac{1}{f(A\mid U,X)}= 1+\exp\left\{(-1)^{A}(\beta_{a0}+\beta_{ax}X +\beta_{au}U)\right\}.
\end{align}
We have 
\begin{align}\label{eq:app_supp_simu_3}
    & \int_\calZ q(z,A,X;\bm\theta^*) f(z\mid U,A,X) dz  \nonumber\\ 
    =&~  1+\exp\left\{(-1)^{A}(\theta_0^*+\theta_2^*A+\theta_3^*X)\right\} \int_\calZ \exp\left\{(-1)^{A}\theta_1^*z\right\} f(z\mid U,A,X) dz \nonumber\\ 
    =&~ 1+\exp\left\{(-1)^{A}(\theta_0^*+\theta_2^*A+\theta_3^*X)\right\} \nonumber\\ 
    &~ \times \int_\calZ \exp\left\{(-1)^{A}\theta_1^*z\right\} 
    \frac{1}{\sqrt{2\pi}}\exp\left\{-\frac{1}{2}(z-\beta_{z0}-\beta_{zu}U-\beta_{za}A-\beta_{zx}X)^2\right\} dz \nonumber\\ 
    =&~ 1+\exp\left\{(-1)^{A}(\theta_0^*+\theta_2^*A+\theta_3^*X) + (-1)^{A}\theta_1^*(\beta_{z0}+\beta_{zu}U+\beta_{za}A+\beta_{zx}X) + \frac{1}{2}\theta_1^{*2}\right\}.
\end{align}
Comparing \eqref{eq:app_supp_simu_2} and \eqref{eq:app_supp_simu_3}, we obtain that \eqref{eq:app_supp_simu_1} holds if 
\begin{align*}
    -(\theta_0^*+\theta_2^*)-\theta_1^*(\beta_{z0}+\beta_{za})+\frac{1}{2}\theta_1^{*2}=-\beta_{a0} ,\\ 
    \theta_0^*+\theta_1^*\beta_{z0}+\frac{1}{2}\theta_1^{*2}=\beta_{a0} ,\\ 
    \theta_3^*+\theta_1^*\beta_{zx} = \beta_{ax},\\ 
    \theta_1^*\beta_{zu} = \beta_{au}.
\end{align*}
Equivalently,
\begin{align*}
    \theta_0^* =&~ \beta_{a0} - 0.5\beta_{au}^2/\beta_{zu}^2 -\beta_{z0}\beta_{au}/\beta_{zu},\\
    \theta_1^* =&~  \beta_{au}/\beta_{zu} ,\\ 
    \theta_2^* =&~ \beta_{au}^2/\beta_{zu}^2-\beta_{za}\beta_{au}/\beta_{zu},\\
    \theta_3^* =&~ \beta_{ax} - \beta_{zx}\beta_{au}/\beta_{zu}.
\end{align*}
It implies that $q(z,a,x;\bm\theta^*)$ satisfies the integral equation \eqref{eq:iden_bridge_q}.

\subsection{Simulation under model misspecification}\label{ssec:additional_simu}

In this section, we construct simulations to evaluate the performance of the proposed method and the proximal DR estimator under a misspecified bridge function $h$. Following \cite{KangSchafer2007}, we simulate the case of misspecified $h$ by applying a transformation to the observed variables. Specifically, we consider three levels of model misspecification:
\begin{align*}
&\text{minor misspecification:} ~~ W_1^*=W + 0.1 W^2;\\ 
&\text{moderate misspecification:} ~~ W_2^*=W + 0.5 W^2;\\ 
&\text{and significant misspecification:} ~~ W_3^*=|W|^{1/2}+1.
\end{align*}
These transformed variables are used instead of $W$ to estimate $h$. We generate 400 and 800 samples under scenario II in \Cref{ssec:simulation} to implement the proposed GMM estimator and proximal DR estimator, respectively. \Cref{table:simu_mis} summarizes the bias, standard deviation (SD), average length of 95\% confidence intervals, coverage probability, and power across various scenarios.
As expected, the proposed method aims for high estimation efficiency but is exposed to the risk of model misspecification. When misspecification is mild, the proposed methods still perform acceptably, but bias increases and coverage worsens significantly as misspecification becomes more severe. In contrast, the proximal DR estimator consistently guarantees a slight bias. However, its variance and coverage performance also deteriorate as one of the bridge functions becomes more severely misspecified.

\begin{table}[h!] 
    \centering
    \caption{Summary of absolute bias, standard error (SE), root mean squared error (RMSE), coverage probability (CP), average length of the 95\% confidence interval, and power under different levels of model misspecification.}
    \label{table:simu_mis}
    \par
    \resizebox{\linewidth}{!}{
    \begin{tabular}{cccccccccccccc} \toprule 
        & \multicolumn{6}{c}{$n=400$}
        && \multicolumn{6}{c}{$n=800$} \\  \cline{2-7}\cline{9-14}
        Method & Bias & SE & RMSE & Length & CP & Power && Bias & SE & RMSE & Length & CP & Power\\ \midrule
        \multicolumn{12}{l}{\textbf{(a) Correct specification}} \\
        GMM & 0.01 & 0.13 & 0.13 & 0.48 & 0.934 & 0.970 &  & 0.01 & 0.09 & 0.09 & 0.34 & 0.956 & 1.000\\
        PDR & 0.03 & 0.30 & 0.30 & 1.07 & 0.946 & 0.508 &  & 0.02 & 0.20 & 0.20 & 0.74 & 0.962 & 0.744\\
        \multicolumn{12}{l}{\textbf{(b) Minor misspecification}} \\
        GMM & 0.04 & 0.18 & 0.19 & 0.61 & 0.876 & 0.886 &  & 0.03 & 0.12 & 0.13 & 0.45 & 0.938 & 0.970\\
        PDR & 0.03 & 0.33 & 0.33 & 1.17 & 0.956 & 0.472 &  & 0.02 & 0.21 & 0.21 & 0.82 & 0.954 & 0.638\\
        \multicolumn{12}{l}{\textbf{(c) Moderate misspecification}} \\
        GMM & 0.28 & 0.28 & 0.40 & 0.74 & 0.598 & 0.928 &  & 0.25 & 0.18 & 0.31 & 0.55 & 0.516 & 0.986\\
        PDR & 0.03 & 0.56 & 0.56 & 1.89 & 0.960 & 0.268 &  & 0.05 & 0.35 & 0.35 & 1.35 & 0.964 & 0.346\\
        \multicolumn{12}{l}{\textbf{(d) Significant misspecification}} \\
        GMM & 0.76 & 0.39 & 0.86 & 0.78 & 0.166 & 0.980 &  & 0.69 & 0.28 & 0.74 & 0.56 & 0.068 & 1.000\\
        PDR & 0.06 & 1.27 & 1.27 & 1.25 & 0.676 & 0.272 &  & 0.01 & 0.71 & 0.71 & 1.77 & 0.724 & 0.184\\
        \bottomrule
    \end{tabular}
    }
  \end{table}

\subsection{Simulation with B-splines}\label{ssec:supp_simu_spline}

To illustrate the performance of the proposed method using different basis functions, we conducted additional simulations under Scenario II using B-splines. \Cref{table:simu_b_spline1} compares the proposed method implemented with power series versus B-splines. The results indicate that the difference in finite-sample performance is negligible.

\begin{table}[h!] 
    \centering
    \caption{Simulation results: absolute bias, standard error (SE), root mean squared error (RMSE), coverage probability (CP), average length of the 95\% confidence interval, and power.}
    \label{table:simu_b_spline1}
    \par
    \resizebox{\linewidth}{!}{
    \begin{tabular}{cccccccccccccc} \toprule 
        & \multicolumn{6}{c}{$n=400$}
        && \multicolumn{6}{c}{$n=800$} \\  \cline{2-7}\cline{9-14}
        Sieve & Bias & SE & RMSE & Length & CP & Power && Bias & SE & RMSE & Length & CP & Power\\ \midrule
        Power series & 0.01 & 0.13 & 0.13 & 0.48 & 0.936 & 0.966 && 0.00 & 0.09 & 0.09 & 0.34 & 0.956 & 1.000\\
        B-splines    & 0.01 & 0.14 & 0.14 & 0.47 & 0.906 & 0.960 && 0.00 & 0.09 & 0.09 & 0.34 & 0.940 & 0.994\\
        \bottomrule
    \end{tabular}
    }
\end{table}

\newpage

\end{document}